\documentclass[11pt]{article}

\usepackage{macros}

\usepackage{thmtools}
\usepackage{hyperref}

\usepackage{algorithm, algorithmic}
\newcommand{\defeq}{\vcentcolon=}
\newcommand{\smax}{\mathsf{smax}}
\newcommand{\oracle}{\mathsf{oracle}}
\newcommand{\hc}{\{\pm 1\}^n}
\newcommand*\diff{\mathop{}\!\mathrm{d}}

\newenvironment{nohyphens}{%
  \par
  \hyphenpenalty=10000
  \exhyphenpenalty=10000
  \sloppy 
}{\par}

\title{Discrepancy Minimization via Regularization}

\author{Lucas Pesenti\footnote{Bocconi University, \texttt{lucas.pesenti@phd.unibocconi.it}} \and Adrian Vladu\footnote{CNRS \& IRIF, Université Paris Cité, \texttt{vladu@irif.fr}}}
\date{}
\begin{document}

\maketitle

\begin{abstract}
\begin{nohyphens}

    We introduce a new algorithmic framework for discrepancy minimization based on \textit{regularization}. We demonstrate how varying the regularizer allows us to re-interpret several breakthrough works in algorithmic discrepancy, ranging from Spencer's theorem \cite{S85, B10} to Banaszczyk's bounds \cite{B98, BDG19}. Using our techniques, we also show that the Beck-Fiala and \Komlos conjectures are true in a new regime of \textit{pseudorandom} instances.
\end{nohyphens}
\end{abstract}

\section{Introduction}

Discrepancy theory is a subfield of combinatorics which has branched in computer science due to its several connections to geometric problems, randomized algorithms, and complexity theory~\cite{M09,chazelle2000discrepancy}.

A landmark result in the field is Spencer's celebrated “Six standard deviations suffice”~\cite{S85}. In its simplest form, Spencer's paper considers a set system $S$ of cardinality $n$ over a ground set of $n$ elements. The problem is to color each element of the ground set in red or blue, in such a way that all the sets are balanced, namely do not contain many more red than blue elements or vice versa. The maximum imbalance of a set induced by a coloring is called the \textit{discrepancy} of the coloring.

Spencer's result offers an important insight into the limitations of the tools we generally employ to prove the existence of mathematical objects. A standard method to show that there always exists a low discrepancy coloring is to prove that a random coloring produces one with nonzero probability. To this end, one shows that for each set in $S$, a random coloring will have small discrepancy with high probability. Applying a union bound turns this into a simultaneous guarantee for all sets in $S$. This standard idea shows that one can always produce a coloring of discrepancy $O\left(\sqrt{n\log n}\right)$. Spencer~\cite{S85} shows that this approach misses even better colorings. Using a difficult nonconstructive argument, he proves that in fact, colorings of discrepancy $6\sqrt{n}$ exist. This result is tight up to constant factors, and exhibits an example where correlations between different sets in $S$ can be exploited in order to overcome the limitations of the union bound technique, which essentially tries to ignore them.

Although in~\cite{S85} it is conjectured that there are inherent limitations to finding constructively a low discrepancy coloring, Bansal~\cite{B10} proved the contrary by exhibiting a polynomial-time algorithm whose output matches Spencer's bound up to constant factors. This provided a new direction for attacking open problems in the field, and was followed by a deluge of new algorithmic results~\cite{LM15,ES18,R17,LRR17,BDG19,BDGL19,BM20,BLV22}. Notably, some of these results gave efficient algorithms to construct colorings for structured set systems, aiming towards breaking longstanding upper bounds in discrepancy theory, which are captured by the well-known Beck-Fiala and \Komlos conjectures.

Interestingly, the recent works on discrepancy have increasingly relied on tools from continuous optimization, which have already shown great results in the area of fast algorithms. Hence, Bansal~\cite{B10} constructs a coloring by maintaining a fractional solution which gets updated by iteratively solving a semidefinite program, Lovett and Meka~\cite{LM15} perform a random walk through a polytope, Eldan and Singh~\cite{ES18} solve a linear program, and Levy, Ramadas and Rothvoss~\cite{LRR17} use an algorithm inspired by the multiplicative weights update method. Given the excellent results already delivered by continuous methods in other areas, using them to improve longstanding bounds in discrepancy theory remains an intriguing research direction. To this extent, it is tempting to ask whether casting these questions in an appropriate optimization framework will provide a natural approach that will lead to the sought answers.

\subsection{Our contribution}

In this paper, we further extend the connections between discrepancy and continuous optimization by providing a simple framework for discrepancy minimization based on a series of invocations of Newton's method on a regularized objective. Using this basic framework, we provide a twofold contribution.

\begin{itemize}
    \item We show that using a version of the regularizer from~\cite{ALO15}, we obtain a simple and elegant proof of Spencer's result via a constructive algorithm. Using slightly different parameters, we also argue that in fact, ``$4.1$ standard deviations suffice".

    As previously noted in related works~\cite{ALO15}, the choice of the regularizer may prove critical to achieving the correct bound. Indeed, while~\cite{LRR17} regularize their problem with a negative entropy term, so that their updates reflect those corresponding to the multiplicative weights update method, we use instead the regularizer from~\cite{ALO15}. This allows for a simpler and tighter way to control the increase of the (regularized) discrepancy over the course of the algorithm. This method parallels previous developments~\cite{LRR17,BLV22}, while simultaneously providing a clean regularization framework for the problem.
    
    \item We extend the ideas of Potukuchi~\cite{P20}, who gave an improved bound for the Beck-Fiala problem in the case where the input matrix is ``pseudorandom''.
    More precisely, we show that for \Komlos instances, given an $n\times n$ matrix $A$ with columns of at most unit $\ell_2$-norm, we can achieve a discrepancy of $O(1+\sqrt{\lambda \log n})$, where $\lambda =  \max_{v \perp \bf{1}} \|A^{\odot 2} v\|_2 / \|v\|_2$ (and $\odot$ denotes entry-wise product).

    This automatically improves  Potukuchi's result from $O(\sqrt{s}+\lambda)$ to $O(\sqrt{s} + \sqrt{\lambda \log n})$ on Beck-Fiala instances with column-sparsity $s$, in the regime where $\lambda = \Omega(\log n)$. In addition, it implies that the \Komlos conjecture is true for random rotation matrices and random Gaussian matrices.
\end{itemize}

Finally, we believe that this framework is powerful enough to provide new paths to attacking the major conjectures in the area.

\subsection{Related work}

Following Spencer's original paper~\cite{S85}, an exciting series of algorithmic results has emerged. The first to provide a constructive proof was Bansal~\cite{B10}, following which renewed efforts have attempted to make progress on the Beck-Fiala and \Komlos conjectures.

In particular, Lovett and Meka~\cite{LM15} provide another algorithm recovering Spencer's theorem, based on a random walk in a polytope. Rothvoss~\cite{R17} shows another approach based on projections, which crucially relies on convex geometric arguments involving Gaussian width. Eldan and Singh~\cite{ES18} recover the same result by solving a linear program with a random linear objective. Levy, Ramadas and Rothvoss~\cite{LRR17} give another algorithm inspired by the multiplicative weights update method. Very recently, Bansal, Laddha and Vempala~\cite{BLV22} provided an algorithm using $\ell_p$-norms as potential functions, which is very similar to our approach.

For the Beck-Fiala and \Komlos problems, we note the algorithms from Bansal, Dadush and Garg~\cite{BDG19} and Bansal, Dadush, Garg and Lovett~\cite{BDGL19}, that match Banaszczyk's bound of $O(\sqrt{\log n})$ for the \Komlos problem, which  in turn automatically implies a $O(\sqrt{s \log n})$ discrepancy bound for Beck-Fiala. In addition, a lot of work has focused on studying random instances of these conjectures \cite{EL19, BM20, HR19, P20, AN22}. In particular, Potukuchi \cite{P20} extracted a pseudorandom property of instances that is sufficient to obtain a low-discrepancy coloring.

We also acknowledge a series of very exciting recent developments on extended versions of Spencer's problem~\cite{reis2020vector} and generalizations of it based on mirror descent and communication complexity~\cite{hopkins2022matrix, dadush2022new}.

Concerning regularization theory, while regularization techniques related to our $\ell_q$-regularizer have long existed in the bandit literature~\cite{bubeck2012regret}, the first time such an idea was successfully employed in a combinatorial context was for computing optimal spectral sparsifiers~\cite{BSS14}, where the authors used a barrier argument to track the evolution of the eigenvalues of a matrix. Later, Allen-Zhu, Liao and Orecchia~\cite{ALO15} further developed this technique and made the connection to regularization explicit. This allowed them to improve the running time of the spectral sparsification algorithms from~\cite{BSS14}. 

We further comment on the link of some of these previous works with our approach in \pref{sec:comparison}.

\subsection{Organization of the paper}

We will formally introduce our algorithmic framework in \pref{sec:approach}. \pref{sec:generic} is devoted to reviewing the underlying iterative algorithm that is shared with several previous works. In \pref{sec:potential}, we will introduce different ways to regularize the discrepancy objective, and we will prove the corresponding technical bounds in \pref{sec:bounds}.

Our first application of these tools is to the setting of Spencer's theorem. We will give three different proofs of Spencer's theorem, all based on the same idea but with different goals in mind:
\begin{itemize}
    \item In \pref{sec:overview}, we sketch the main ideas of our framework on square matrices using Newton steps in the continuous limit.
    \item In \pref{sec:spencereasy}, we discretize the previous approach and give a full algorithmic proof of Spencer's theorem in the general case.
    \item In \pref{sec:spencerconstant}, we come back to the square setting and give a more careful analysis of the algorithm in order to optimize the leading constant. We improve Spencer's original constant from $5.4$ to $4.1$.
\end{itemize}

\pref{sec:main} is dedicated to the proof of our new bounds for pseudorandom instances of \Komlos and Beck-Fiala conjectures. Our main results are \pref{thm:prkomlos} and \pref{thm:prbf}, which we prove in \pref{sec:strategy} to \pref{sec:together}. We present the consequences for random instances in \pref{sec:corollary}.

Finally, we discuss some future research directions in \pref{sec:future}.

\subsection{Notations}

Let $n$ and $m$ be some positive integers.

We will denote by $\log$ the natural logarithm. We set $[n]\defeq \{1, \ldots, n\}$.  We define the $(n-1)$-dimensional simplex by $\Delta_n\defeq \{r\in\R_+^n : \sum_{i\le n} r_i=1\}$, which is the set of probability distributions supported on $[n]$. We will use $\bf{1}$ as a shortcut for the vector $(1, \ldots, 1)$.

We consider the usual Euclidean inner product $\langle x,y\rangle\defeq \sum_{i\le n} x_i y_i$ on vectors $x,y\in\R^n$, and $\langle A,B\rangle\defeq \sum_{1\le i\le m, 1\le j\le n} A_{ij} B_{ij}$ on matrices $A,B\in\R^{m\times n}$. The $\ell_2$-norm of a vector will by denoted by $\| x\|_2\defeq \langle x,x\rangle^{1/2}$ and its $\ell_\infty$-norm by $\lVert x\rVert_\infty\defeq \max_{i\le n} |x_i|$. The spectral norm (or $\ell_2\to\ell_2$ operator norm) of a matrix $A\in\R^{m\times n}$ will be denoted by $\| A\|_{\text{op}}\defeq \max_{\lVert x\rVert_2=1} \lVert Ax\rVert_2$ and its Frobenius norm by $\lVert A\rVert_F\defeq\langle A,A\rangle^{1/2}$. If $A\in\R^{m\times n}$ and $i\in [m]$, we write $A_i$ for the $i$-th column of the matrix $A^\top$ (namely, the $i$-th row of $A$ treated as a column vector). Similarly, if $j\in [n]$, $A^j$ will denote the $j$-th column of $A$. We use the Hadamard notation $\odot$ to denote the entrywise product of vectors: for any $x,y\in\R^n$, we let $(x\odot y)_i\defeq x_i y_i$ for all $i\in [n]$. We also set $x^{\odot 2}\defeq x\odot x$. Given $A,B\in\R^{m\times n}$, we similarly define $(A\odot B)_{ij}\defeq A_{ij} B_{ij}$ for all $i\in [m], j\in [n]$.  Given a vector $x\in\R^n$, we define $\diag(x)$ to be the $n\times n$ diagonal matrix with the elements of $x$ on the diagonal.  Given $F\subseteq [n]$ and $x\in \R^F$, we will view interchangeably $x$ as a vector on $\R^F$ and on $\R^n$, where the coordinates in $[n]\setminus F$ are filled with zeros.

Unless specified otherwise, the $\lesssim$, $O$ and $\Omega$ notations will hide only universal constants.

\section{Technical overview: Spencer's theorem via Newton steps in the continuous limit}
\label{sec:overview}

In this section, we give an informal proof of a result of Spencer that illustrates how regularization comes into play in discrepancy minimization. For now, we aim at keeping the discussion simple and defer most details to \pref{sec:approach} and \pref{sec:spencer}.

\begin{theorem}[\cite{S85}]
	\label{thm:spencer}
	There exists a universal constant $K>0$ such that for any matrix $A\in[-1,1]^{n\times n}$, there exists $x\in\{\pm 1\}^n$ such that
	\[
	    \lVert Ax\rVert_\infty \le K\sqrt{n}\mper
	\]
\end{theorem}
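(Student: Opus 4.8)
The plan is to follow the partial coloring / entropy method augmented by regularization, as the excerpt's framing (Newton's method on a regularized objective, the \cite{ALO15} regularizer) suggests. I would first reduce to a partial-coloring statement: it suffices to show that for any fractional point $x_0\in[-1,1]^n$ with, say, at least $n/2$ coordinates strictly inside $(-1,1)$, one can find $x_1\in[-1,1]^n$ agreeing with $x_0$ on the already-rounded coordinates, with at least a constant fraction of the remaining fractional coordinates newly pinned to $\pm1$, and such that $\|A(x_1-x_0)\|_\infty \le c\sqrt{n_{\mathrm{act}}}$ where $n_{\mathrm{act}}$ is the number of active (fractional) coordinates. Iterating this over $O(\log n)$ phases, with the active set shrinking geometrically, the total discrepancy telescopes as $\sum_k c\sqrt{n/2^k} = O(\sqrt n)$, giving \pref{thm:spencer}. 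So the real content is a single phase.

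For one phase, I would set up a regularized potential. Let $m$ be the active set size; we want to move within the box $[-1,1]^m$ (frozen coordinates fixed) while keeping all constraint values $|(Ax)_i|$ bounded. The idea is to maintain a barrier-type potential $\Phi(x) = \sum_{i} \phi_q(\,(Ax)_i\,)$ built from the $\ell_q$-regularizer of \cite{ALO15} — roughly $\phi_q(y) \sim (u-y)^{-q} + (u+y)^{-q}$ for a moving threshold $u$ — together with a barrier on the box $[-1,1]^m$. Starting from $x=0$ (restricted to active coordinates), I run a continuous/discrete trajectory that at each step takes a Newton-type step increasing the "energy" (sum of $x_j^2$) while the regularization ensures the constraint potential stays controlled; the key estimate, to be imported from the technical sections, is that as long as $u = \Theta(\sqrt m)$ the potential $\Phi$ does not blow up, so no constraint is ever violated, while the box barrier forces a constant fraction of coordinates to reach $\pm1$ before the trajectory stalls. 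This is the discrete analogue of the continuous Newton-step picture sketched in \pref{sec:overview} and made rigorous in \pref{sec:spencereasy}; I would simply invoke those bounds (the potential monotonicity lemma and the progress lemma) as black boxes here.

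The main obstacle — and the place where the choice of regularizer matters — is showing that the constraint potential $\Phi$ increases slowly enough along the trajectory: one must bound the second-order (Newton) term, i.e. control $\langle \nabla^2\phi_q(Ax)\, Av,\, Av\rangle$ in terms of the first-order gain, uniformly over the step directions $v$ that the algorithm is allowed to take. With the \cite{ALO15} regularizer this amounts to a clean self-concordance-like inequality relating the Hessian of $\phi_q$ to its gradient, which is exactly what lets the threshold $u$ stay at $\Theta(\sqrt m)$ (and, with sharper constants, drop the leading factor from Spencer's $5.32$ toward the claimed $3.67$ in \pref{sec:spencerconstant}). Balancing the number of active constraints $m$ against $u^2$ and the rate at which box constraints get hit is the delicate counting step; once that is in hand, the telescoping over phases is routine and yields $\|Ax\|_\infty = O(\sqrt n)$ for some $x\in\{\pm1\}^n$, proving \pref{thm:spencer}.
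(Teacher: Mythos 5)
Your plan is the classical partial--coloring route (geometrically shrinking active set, one partial coloring per phase, telescoping), and it contains a genuine gap at its central reduction. The per-phase statement you formulate --- for \emph{any} fractional $x_0$ with $n_{\mathrm{act}}$ active coordinates, find $x_1$ pinning a constant fraction of them with $\|A(x_1-x_0)\|_\infty\le c\sqrt{n_{\mathrm{act}}}$ uniformly over all $n$ rows --- is false in general once $n_{\mathrm{act}}\ll n$: taking $x_0$ to vanish on the active set, pinning a constant fraction of coordinates forces a genuine partial coloring of an $n\times n_{\mathrm{act}}$ matrix, for which the discrepancy can be forced to be $\Omega\bigl(\sqrt{n_{\mathrm{act}}\log(2n/n_{\mathrm{act}})}\bigr)$ (e.g.\ for random sign matrices). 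So the correct per-phase bound carries the logarithmic correction; the telescoping $\sum_k \sqrt{(n/2^k)\log 2^{k+1}}=O(\sqrt n)$ still goes through, and indeed one could repair your argument by invoking \pref{thm:spencergeneral} (the rectangular case, with $m=n$ constraints and $n_{\mathrm{act}}$ columns) once per phase. But as written, the "threshold $u=\Theta(\sqrt m)$ never blows up" claim is exactly where the missing $\sqrt{\log}$ hides, and the lemmas you propose to import as black boxes (\pref{lem:taylor}, \pref{lem:spencerlem}) are not stated as a partial-coloring/progress lemma, so the deferral does not by itself close the phase argument.

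It is also worth noting that the paper's proof is structurally different: there are no phases and no restarts. A single sticky walk runs from $x=0$ all the way to a vertex of the hypercube, one potential $\omega^*_{q,\eta}(Ax)$ is tracked throughout, and the $O(\sqrt n)$ bound comes from two ingredients specific to the $\ell_q$-regularizer: its additive error $m^{1-q}/(\eta q)$ is only $O(\sqrt n)$ (no $\log$), and the local increase when $k$ coordinates remain active scales like $k^{q-2}\|\delta\|_2^2$ (\pref{lem:spencerlem}), so a summation-by-parts over the shrinking active set --- using $\langle\delta,x\rangle=0$ to charge $\ell_2$-mass --- replaces your phase telescoping entirely. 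Your potential $\sum_i\phi_q((Ax)_i)$ with a moving threshold is the barrier of \cite{BLV22} rather than the paper's regularized maximum; the two are related (the paper derives the barrier-like behaviour from the regularizer), but the Hessian-versus-gradient bound you gesture at is precisely \pref{lem:taylor}, $\nabla^2\omega^*\preccurlyeq \frac{1}{1-q}\mathrm{diag}(\nabla^{\odot 2-q})$, combined with the subspace/averaging argument of \pref{lem:spencerlem}, not a per-phase self-concordance estimate.
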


Many different proofs of \pref{thm:spencer} are already known: purely combinatorial \cite{S85}, with insights from convex geometry \cite{G89, G97, R17, ES18}, via random walks \cite{LM15}, with the multiplicative weights update method \cite{LRR17}, or via barrier potential functions \cite{BLV22}. We will compare our approach with these last three techniques in \pref{sec:comparison}.

\paragraph{Sticky walk.} We build a deterministic sequence $x(t)\defeq (x_1(t),\ldots,x_n(t))$ for times $t\in [0,T]$. At any time step $t$, $x(t)$ will be an element of the solid hypercube $[-1,1]^n$ that represents a partial coloring.  We start from $x(0)\defeq (0,\ldots,0)$ and the dynamic ends when $x(t)$ hits a corner of the hypercube. We define the set of active coordinates of a fractional coloring $x\in [-1,1]^n$ as
\[
    F\defeq \{j\in [n] : x_j \notin \{-1,1\}\}\mper
\]
The final algorithm described in \pref{sec:approach} and \pref{sec:spencer} will essentially be a discretization of this continuous dynamic.

\paragraph{Regularized maximum.} In order to control the quantity $\|Ax\|_\infty$ over the duration of the walk, we now define a smooth proxy for the $\infty$-norm. Since we can always add the negation of all the rows to the matrix $A$, without loss of generality it suffices to track $\max_{i\in [n]} (Ax)_i$. Naturally, for any $y\in\R^n$,
\[
    \max_{i\in [n]} y_i =\max_{r\in\Delta_n} \langle r,y\rangle,\quad\text{where $\Delta_n\defeq \left\{r\in \R_+^n:\sum_{i\le n} r_i =1\right\}$}\mper
\]
Instead, we consider the following regularized version of the right-hand side, which is the maximization problem where we added an $\ell_{1/2}$-type penalty for each element of the simplex:
\[
    \omega^*(y)\defeq \max_{r\in\Delta_n} \langle r,y\rangle + \sum_{i\le n} r_i^{\frac{1}{2}}\mper
\]
In what follows, $\omega^*(Ax)$ will play a role of proxy for the $\infty$-norm. It is not hard to see (\pref{lem:explain}) that we only lose a $\sqrt{n}$ additive factor through this approximation. Therefore, for proving \pref{thm:spencer}, it suffices to bound the total increase of the regularized maximum. We will discuss in \pref{sec:spencereasy} the choice of this particular $\ell_{1/2}$-regularizer.

\begin{algorithm}
	\caption{Continuous dynamic for discrepancy minimization}
	\label{alg:iterativecont}
	\begin{algorithmic}[1]
		\WHILE{$F\neq \varnothing$}
            \STATE $\dot{x} =\underset{\delta : \langle \delta, x\rangle=0 \text{ and } \textnormal{supp}(\delta) \subseteq F}{\argmin}\, \langle A^{\top}\nabla \omega^*(Ax), \delta \rangle + \frac{1}{2}  \cdot \delta^\top A^\top \nabla^2\omega^*(Ax)  A \delta$
            \STATE $F=\{i\in [n] : x_i\notin \{-1,1\}\}$
		\ENDWHILE
		\RETURN$x$.
	\end{algorithmic}
\end{algorithm}

\paragraph{Continuous dynamic} We sketch our dynamic in pseudo-code in \pref{alg:iterativecont}. Essentially, we impose two conditions on the update direction $\delta$: $\text{supp}(\delta)\subseteq F$ ensures that the walk stays in the solid hypercube by fixing the coordinates of $x$ when they reach $\pm 1$, while $\langle \delta,x\rangle= 0$ ensures that the dynamic will eventually converge to a corner of the hypercube. Under these constraints, we select the direction that minimizes the best quadratic approximation of our potential function $\omega^*(Ax)$. In this sense, this is essentially a Newton step.


\paragraph{Local analysis.} We would like to bound the increase in potential,
\[
    \frac{\diff \omega^*(Ax)}{\diff \|x\|_2^2} \approx \langle \nabla \omega^*(Ax),A\delta\rangle+\frac{1}{2}\langle A\delta ,\nabla^2 \omega^*(Ax)A\delta\rangle\mcom
\]
where $\delta$ is the minimizer on line 2 of \pref{alg:iterativecont}.

Since the quadratic term is invariant by sign changes $\pm \delta$, we can always upper bound the term that is linear in $\delta$ by 0. Thus, it suffices to prove that the matrix $A^\top\nabla^2\omega^*(Ax)A$ has a small eigenvalue on the subspace $S\defeq \{\delta\in\R^n:\langle \delta,x\rangle=0\text{ and }\text{supp}(\delta)\subseteq F\}$. For this, we need to understand better the regularization construction -- as we will see in \pref{lem:taylor}, it follows from standard convex analysis arguments that
\[
    \nabla^2\omega^*(Ax)\preccurlyeq \text{diag}(\nabla)^{\frac{3}{2}}\quad \text{ for some vector $\nabla\in \Delta_n$}\mper
\]
By further use of the orthogonality trick, we can select a slightly smaller subspace than $S$ whose elements ``don't see'' the rows $i$ for which $\nabla_i\gtrsim 1/|F|$. A random element $\delta$ in this subspace achieves quadratic form at most $\|\delta\|_2^2/\sqrt{|F|}$ in expectation. The details can be found in \pref{lem:spencerlem}. This ultimately implies
\begin{align}
    \diff\omega^*(Ax)\lesssim \frac{\diff \|x\|_2^2}{\sqrt{|F|}}\mper\label{eq:analysis1}
\end{align}

\paragraph{Analysis of the whole dynamic.} For $k\in [n]$, denote by $t_k\defeq \min\{t\ge 0 : |F(t)|\le k\}$ the first time for which the number of active coordinates reaches $k$. From the constraint that the update direction is always orthogonal to the current partial coloring, we get after integrating \pref{eq:analysis1} over $t\in[0,T]$ that
\[
    \omega^*(Ax(T))-\omega^*(0)\lesssim \sum_{k=1}^n \frac{\|x(t_{k-1})\|_2^2-\|x(t_k)\|_2^2}{\sqrt{k}}.
\]
Finally, we apply summation by parts and use the fact that $\sum_{i\le k} \|x(t_{i-1})\|_2^2-\|x(t_i)\|_2^2\le k$:
\[
    \omega^*(Ax(T))-\omega^*(0)\lesssim\sqrt{n}+ \sum_{k=1}^{n-1} \frac{1}{k^{\frac{3}{2}}}\sum_{i\le k}  \|x(t_{i-1})\|_2^2-\|x(t_i)\|_2^2\lesssim \sqrt{n}\mper
\]
Combining this with our previous observation that $\omega^*(0)\lesssim \sqrt{n}$ concludes our proof sketch of \pref{thm:spencer}.

\subsection{Comparison with existing approaches}
\label{sec:comparison}

The general framework of tracking the discrepancy of a continuously evolving partial coloring through the means of a smooth approximation appears in the literature in various similar forms. Several other works on discrepancy use techniques that are similar to ours, the most important being those due to Lovett-Meka~\cite{LM15}, Levy-Ramadas-Rothvoss~\cite{LRR17}, and Bansal-Laddha-Vempala~\cite{BLV22}.
In what follows we provide a brief overview of these.

\paragraph{Comparison with~\cite{LM15}.}
Here, the authors give an algorithmic proof of Spencer's theorem, where they evolve a partial coloring  using a \textit{sticky} Brownian motion on the hypercube, and similarly to our case, they freeze coordinates once they reach $\pm 1$. Their algorithm, however, operates inside a convex set corresponding to low discrepancy colorings, and additionally requires maintaining several technical conditions. At the end of a phase, they only obtain a partial coloring, and need to repeat this procedure several times. 

\paragraph{Comparison with \cite{LRR17}.}
In this subsequent work, the authors  propose an elegant deterministic algorithm, inspired by Lovett-Meka as well as by the multiplicative weights update method. Their algorithm is very close in spirit to ours, as it iteratively updates a fractional coloring while controlling the exponential weights that are assigned to the set constraints. The exponential weights can be seen as a proxy for a smooth regularization of the maximum function. In fact, this approach is related to Spencer's \textit{hyperbolic cosine} algorithm~\cite{S85}, that essentially consists in tracking the evolution of $\sum_i \cosh(Ax)_i$, which is up to a reparametrization equivalent to our setting when using entropic regularization (see \pref{def:negentreg}). 

While the hyperbolic cosine algorithm is unable to obtain discrepancy below $O\left(\sqrt{n \log n}\right)$, the authors of \cite{LRR17} do so by combining it with the continuous approach from~\cite{LM15}. 
This approach alone does not directly manage to recover Spencer's bound. The reason can be  easily understood when interpreting their algorithm through the regularization perspective: the \textit{error} introduced by entropic regularization contains a logarithmic term which carries over to the final discrepancy bound. They observe, however, that they can force the approximation provided by their regularized maximum to be far from tight, in the sense that the true discrepancy is not as large as what the potential function ``sees''.
This property is enforced by taking sufficiently small steps, to ensure that at all times, all the rows attaining the largest discrepancy contribute equally. This forces the regularizer to spread a large part of its mass uniformly over these, and thus maximize the error it pays for beyond the true value of the maximum discrepancy.
Just like in~\cite{LM15}, this approach only obtains a partial coloring, and needs to be repeated with a different setting of parameters.

\paragraph{Comparison with \cite{BLV22}.} 
In this parallel work, the authors 
propose a unified approach for constructive discrepancy minimization using a barrier-based potential function. Their algorithm can be viewed as almost equivalent to ours, although the reason why the potential function works may appear quite magical.
Compared to~\cite{LRR17}, they replace exponential weights with a sum of inverse $p$ powers of slacks, where the slacks measure for each row the distance between a desired discrepancy upper bound and the current discrepancy.
In our regularization framework, a similar barrier emerges directly from choosing an appropriate regularizer. As we will soon see, we can derive it from first principles, and rather than having to guess a potential function and do tedious calculations to understand its evolution, we simply need to focus our attention on the trade-off between the error it introduces and the eigenvalues of its Hessian (see, e.g., our analysis in \pref{sec:spencer}).

\paragraph{Relation to regret minimization frameworks~\cite{BSS14, ALO15}.}
While both \cite{LRR17} and \cite{BLV22} rely on tracking a potential function, we attempt to make this approach more principled. The barrier potential present in~\cite{BLV22} appears to be related to the one employed by Batson-Spielman-Srivastava~\cite{BSS14} in the context of spectral sparsification. Interestingly, the reason a barrier was used in the  case of sparsification was exactly to remove an extra logarithmic factor that would have otherwise occurred when using standard entropic regularization/multiplicative weights. Allen-Zhu, Liao and Orecchia~\cite{ALO15}  made the connection between the barrier potential and the multiplicative weights method explicit by noticing that both follow from using different regularizers on top of the maximum function (although in their case they more generally regularize matrix norms). Note that in~\cite{ALO15}, the authors provide bounds on the second-order term of their regularized spectral norms in the form of multiplicative error on the gradient term. Here, we directly relate the second-order terms to the gradient, which allows us to obtain tighter bounds on the change in  our potential functions. We believe this to be of independent interest.

\section{The regularization framework}
\label{sec:approach}

\subsection{An iterative meta-algorithm}
\label{sec:generic}

We first describe a generic iterative algorithm for discrepancy minimization that will serve as a basis for incorporating the potential functions based on regularization. Similarly to \pref{sec:overview}, we will construct a sequence of partial colorings $x(t)\in[-1,1]^n$ for integer times $t=0,1, \ldots$. Each step consists in picking an update vector $\delta$ and adding it to $x(t)$. Whenever some coordinate of $x(t)$ becomes $\pm 1$, we say that the coordinate is \textit{frozen}. We will also say of an unfrozen coordinate that it is \textit{active}.

\begin{algorithm}
	\caption{Generic iterative algorithm for discrepancy minimization}
	\label{alg:iterative}
	\begin{algorithmic}[1]
		\STATE \textbf{Input:} $A\in \R^{m\times n}$, $L\in (0,1)$
		\STATE \textbf{Output:}  $x\in\{\pm 1\}^n$ (a low-discrepancy coloring of $A$)
		\STATE Let $x(0) \defeq (0, \ldots, 0)$ and $t\defeq 0$.
		\WHILE{$\text{oracle}(A,x(t))$ is not $\mathsf{undefined}$}
			\STATE Choose any unit vector $\delta$ in $\oracle(A,x(t))\,\cap\,\{\delta\in\R^n:\langle \delta,x(t)\rangle=0\}$.

			\STATE Let $\varepsilon(t)\defeq \min\{\varepsilon>0 : \exists j\in [n],\, x_j(t)\notin \{-1,1\}\text{ and } x_j(t)+\varepsilon\delta_j\in\{-1,1\}\}$.
			\STATE Set $x(t+1)\defeq x(t) + \min(L,\varepsilon(t))\delta$.
			\STATE Update $t\defeq t+1$.
		\ENDWHILE
		\STATE Let $T\defeq t$ and $x_j^*\defeq \text{sign}(x_j(T))$ for all $j\in [n]$.
		\RETURN $x^*$.
	\end{algorithmic}
\end{algorithm}

\paragraph{The oracle.} Suppose that we are given some blackbox algorithm $\mathsf{oracle}$ that encapsulates all the possible choices of directions of the update vector. In the sequel, $\oracle(A,x)$ will correspond to a subset of vectors that do not increase too much the value of the regularized potential function when $x$ is the current partial coloring. 

\begin{assumption}
    \label{ass:oracle}
    Let $C>0$ be some universal constant. Given a matrix $A\in\R^{m\times n}$ and a partial coloring $x\in [-1,1]^n$, $\oracle(A,x)$ satisfies (with $F\defeq \{j\in [n] : x_j\notin \{-1, 1\}\}$):
    \begin{itemize}
        \item If $|F|\ge C$, $\oracle(A,x)$ is a subset of $\R^F$ such that the intersection of $\oracle(A,x)$ with any halfspace of $\R^F$ contains a half-line.
        \item If $|F|<C$, it returns the value $\mathsf{undefined}$.
    \end{itemize}
\end{assumption}

\paragraph{} With $\oracle$ being given, the meta-algorithm for discrepancy minimization is described as \pref{alg:iterative}. The following three immediate observations on \pref{alg:iterative} will be central to our framework.

\begin{observation}
    For any $t=0,\ldots,T-1$, $\|x(t+1)-x(t)\|_\infty\le L$.
\end{observation}

\begin{observation}
    \label{obs:disc}
    The final step on line 9 adds at most $C\max_{i\in [m],j\in[n]} |A_{ij}|$ to the discrepancy of the coloring, where $C$ is the constant from \pref{ass:oracle}.
\end{observation}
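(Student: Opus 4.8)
The plan is to bound, row by row, the change in discrepancy incurred when rounding the final partial coloring $x(T)$ to $x^*$ on line 9. For any row $i\in[m]$,
\[
    (Ax^*)_i-(Ax(T))_i=\sum_{j\in[n]}A_{ij}\bigl(x^*_j-x_j(T)\bigr)=\sum_{j\in F(T)}A_{ij}\bigl(x^*_j-x_j(T)\bigr),
\]
where $F(T)=\{j\in[n]:x_j(T)\notin\{-1,1\}\}$ is the set of active coordinates at termination; the second equality holds because every frozen coordinate $j$ already satisfies $x_j(T)\in\{-1,1\}$, so $x^*_j=\text{sign}(x_j(T))=x_j(T)$ and its contribution cancels.

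The argument then rests on two elementary observations. First, for every active $j$ we have $|x^*_j-x_j(T)|\le 1$: the rounded value $x^*_j\in\{-1,1\}$ has the same sign as $x_j(T)$ (when $x_j(T)\neq 0$), so $|x^*_j-x_j(T)|=1-|x_j(T)|\le 1$, and the bound trivially holds when $x_j(T)=0$ as well. Second, $|F(T)|<C$: the while loop exits exactly when $\oracle(A,x(T))$ is $\mathsf{undefined}$, and by \pref{ass:oracle} this forces the number of active coordinates to be strictly below $C$ — when $|F|\ge C$ the oracle instead returns a subset of $\R^F$ that is necessarily nonempty, since an empty set cannot meet every halfspace in a half-line.

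Combining these two facts with the triangle inequality gives, for every $i\in[m]$,
\[
    \bigl|(Ax^*)_i-(Ax(T))_i\bigr|\le\sum_{j\in F(T)}|A_{ij}|\,|x^*_j-x_j(T)|\le\sum_{j\in F(T)}|A_{ij}|\le|F(T)|\cdot\max_{i\in[m],j\in[n]}|A_{ij}|<C\cdot\max_{i\in[m],j\in[n]}|A_{ij}|,
\]
which is the asserted bound. I expect no real obstacle here; the only subtlety is to use the sign-alignment estimate $|x^*_j-x_j(T)|\le 1$ rather than the wasteful $|x^*_j-x_j(T)|\le 2$, and to read off $|F(T)|<C$ correctly from the termination condition of \pref{alg:iterative}.
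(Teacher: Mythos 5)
Your proof is correct and is exactly the calculation the paper treats as immediate (it gives no explicit proof of this observation): frozen coordinates are unchanged by the rounding, the termination condition together with \pref{ass:oracle} forces $|F(T)|<C$, and each remaining coordinate moves by at most $1$, contributing at most $\max_{i,j}|A_{ij}|$ per row. No gap; the sign-alignment bound $|x^*_j-x_j(T)|\le 1$ and the reading of the exit condition are the only points of care, and you handle both.
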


\begin{observation}
    \label{obs:comp}
	There can be at most $n/L^2$ iterations of the main loop of \pref{alg:iterative}. Therefore, \pref{alg:iterative} runs in polynomial time as long as $\oracle$ runs in polynomial time and $L\ge n^{-O(1)}$.
\end{observation}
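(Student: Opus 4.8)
The plan is a short accounting argument driven by the two constraints imposed on every update direction $\delta$ in \pref{alg:iterative}: it is a \emph{unit} vector, it is \emph{orthogonal} to the current partial coloring ($\langle\delta,x(t)\rangle=0$), and it is supported on the active set ($\oracle(A,x(t))\subseteq\R^F$ by \pref{ass:oracle}). First I would record the invariant that $x(t)\in[-1,1]^n$ for all $t$: it holds at $t=0$, and it is preserved because frozen coordinates never move (the update is supported on $F$), while the step length $\min(L,\varepsilon(t))\le\varepsilon(t)$ is, by the definition of $\varepsilon(t)$, the largest displacement along $\delta$ that keeps every active coordinate inside $[-1,1]$. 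In particular $\|x(t)\|_2^2\le n$ throughout. Next, since $\langle\delta,x(t)\rangle=0$ and $\|\delta\|_2=1$, the Pythagorean identity gives $\|x(t+1)\|_2^2=\|x(t)\|_2^2+\min(L,\varepsilon(t))^2$; hence $t\mapsto\|x(t)\|_2^2$ is nondecreasing, equals $0$ at $t=0$, and increases by at most $n$ in total over the run.

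Now I would split the iterations into two groups. Call an iteration a \emph{full step} when $\varepsilon(t)\ge L$, so that $\min(L,\varepsilon(t))=L$ and $\|x(t+1)\|_2^2-\|x(t)\|_2^2=L^2$; since the total increase of $\|x(t)\|_2^2$ is at most $n$, there are at most $n/L^2$ full steps. Call it a \emph{freezing step} otherwise, i.e.\ $\varepsilon(t)<L$; then, by the definition of $\varepsilon(t)$, the minimizing coordinate $j$ satisfies $x_j(t+1)\in\{-1,1\}$ and becomes frozen, and once a coordinate is frozen it stays frozen (all later updates are supported on the active set), so there are at most $n$ freezing steps. Combining, the main loop of \pref{alg:iterative} runs at most $n/L^2+n\le 2n/L^2=O(n/L^2)$ times, using $L<1$. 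For the running-time consequence: each iteration makes one call to $\oracle$ and otherwise performs $O(n)$ arithmetic (choosing $\delta$ from the returned set and computing $\varepsilon(t)$ as a minimum over the $n$ coordinates); since $n/L^2=n^{O(1)}$ whenever $L\ge n^{-O(1)}$, the algorithm is polynomial-time provided $\oracle$ is.

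I do not expect a genuine obstacle here, since this is purely a counting argument. The only points that need a little care are (i) checking that the two defining constraints on $\delta$ jointly keep $x(t)$ in the cube — orthogonality to $x(t)$ is what makes $\|x(t)\|_2^2$ monotone and thereby caps the number of full steps, whereas the support condition together with the cap at $\varepsilon(t)$ is what bounds the freezing steps — and (ii) the harmless slack between the stated bound $n/L^2$ and the $n/L^2+n$ that the argument actually yields, which is irrelevant for the polynomial-time conclusion.
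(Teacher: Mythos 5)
Your proposal is correct and follows essentially the same argument as the paper: split iterations into freezing steps ($\varepsilon(t)\le L$, at most $n$ of them) and full steps, where orthogonality of $\delta$ to $x(t)$ gives $\|x(t+1)\|_2^2=\|x(t)\|_2^2+L^2$ and hence at most $n/L^2$ of them. Your added remarks — the invariant $x(t)\in[-1,1]^n$ capping $\|x(t)\|_2^2$ by $n$, and the harmless $n/L^2+n$ versus $n/L^2$ slack — are fine and do not change the substance.
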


\begin{proof}
    When $\varepsilon(t)\le L$, at least one additional coordinate will reach $\pm 1$ and will be frozen at the end of the iteration. This can happen at most $n$ times. When $\varepsilon(t)>L$, since we pick our update vector orthogonal to $x$, we have $\|x(t+1)\|_2^2=\|x(t)\|_2^2+L^2$. This can happen at most $n/L^2$ times.
\end{proof}

For our purposes, we will always set $L=n^{-O(1)}$ and computing $\oracle$ will only require elementary linear algebraic operations in $\R^n$ (intersection, orthogonal complements, direct sums, computation of eigenspaces, etc.).

\subsection{Regularized maximum}
\label{sec:potential}

Our main tool for building proxies for discrepancy is the following regularized version of the maximal entry of a vector.

\begin{definition}
    For any convex function $\phi:\Delta_m\to\R$, we define $\phi^*:\R^m\to\R$ by
    \[
        \phi^*(y)\defeq \max_{r\in\Delta_m} \langle r,y\rangle-\phi(r)\mper
    \]
\end{definition}

We will call $\phi$ the \textit{regularizer} -- it maps elements of the simplex to some penalty in a convex way. By symmetry, it makes sense to focus on regularizers of the form $\phi(r)=\sum_{i\in [m]} \varphi(r_i)$ for some convex $\varphi:\R\to \R$. The following two special cases will play an important role in our theory.

\begin{definition}
    For any $0<q<1$, the \textit{$\ell_q$-regularization} of the maximum, parametrized by $\eta>0$, is the function $\omega^*_{q,\eta}:\R^m\to\R$ such that
	\[
		\omega^*_{q,\eta}(y)\defeq \max_{r\in\Delta_m}\, \langle r,y\rangle+\frac{1}{\eta q}\sum_{i=1}^m r_i^{q},\quad \text{ for any $y\in\R^m$}\mper
	\]
\end{definition}

\begin{definition}\label{def:negentreg}
	The \textit{(negative) entropy regularization} of the maximum, parametrized by $\eta>0$, is the function $\smax:\R^m\to\R$ such that
	\[
		\smax_\eta(y)\defeq \max_{r\in\Delta_m}\, \langle r,y\rangle-\frac{1}{\eta}\sum_{i=1}^m r_i\log r_i,\quad \text{ for any $y\in\R^m$}\mper
	\]
	It is not hard to see that in this case, the solution of the maximization problem can be written in closed form: $\smax_\eta(y)=\frac{1}{\eta}\log\left(\sum_{i=1}^m \exp(\eta y_i)\right)$, thereby recovering the usual formulation of the softmax function.
\end{definition}

\paragraph{Regret minimization interpretation.} These regularization ideas have originated in the online learning community. To see how this is related to discrepancy, let us make the following thought experiment. We play an online game against an adversary, where we select at each step some $r_t\in\Delta_n$, and after that some $\delta_t$ is revealed. Our goal is to minimize the regret, which is the difference between the best static cost in hindsight, $\| A(\delta_1+\ldots+\delta_T)\|_\infty$ and our cost, $\sum_{t\le T} \langle r_t, A\delta_t\rangle$. We can play different strategies that are robust to the future choices of $\delta_t$. For each of these, the optimal strategy for the adversary is to follow a particular potential function that is obtained by adding a regularizer to the optimization problem.

We note that this is analogous to what Allen-Zhu, Liao, and Orecchia have proposed for graph sparsification \cite{ALO15}. They identified a similar online game on density matrices and used it to interpret the construction of \cite{BSS14} as a follow-the-regularized-leader strategy. While they use the same $\ell_{1/2}$-regularizer as we did for Spencer's theorem in \pref{sec:overview}, 
the connection is more subtle, as their setting is crafted specifically for matrices with positive updates, which involves deriving a set of different bounds that charge the entire change in potential function to the first-order term.

\subsection{Regularization bounds}
\label{sec:bounds}

We now present our two main technical lemmas that give an analytic justification for the $\ell_q$ and negative entropy regularization. The first one (\pref{lem:explain}) estimates the additive error incurred when tracking the regularized version of the maximum instead of the true maximum. For constant $\eta$ and $q$, the approximation is worse for $\ell_q$-regularization than for negative entropy regularization (polynomial vs logarithmic in the size of the vector). 

\begin{lemma}
	\label{lem:explain}
	Let $y\in\mathbb{R}^m$ and $q\in (0,1)$. If $M(y) \defeq \max_{1\le i\le m} y_i$,
	\[
		M(y)\le \omega^*_{q,\eta}(y)\le M(y)+\frac{m^{1-q}}{\eta q} \text{ and } M(y)\le \smax_\eta(y)\le M(y)+\frac{\log m}{\eta}\mper
	\]
\end{lemma}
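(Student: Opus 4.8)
The plan is to prove both chains of inequalities by analyzing the optimization problem defining each regularized maximum. The lower bounds $M(y)\le \omega^*_{q,\eta}(y)$ and $M(y)\le \smax_\eta(y)$ are the easy direction: in each case the objective is $\langle r, y\rangle$ plus a \emph{nonnegative} term (since $r_i^q\ge 0$ and $-\sum_i r_i\log r_i\ge 0$ on $\Delta_m$), and evaluating at $r=e_{i^*}$, where $i^*$ attains the maximum coordinate of $y$, gives objective value at least $y_{i^*}=M(y)$ for the $\ell_q$ case (the penalty contributes $\frac{1}{\eta q}$, in fact strictly more) and exactly $M(y)$ for the entropy case. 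Since $\omega^*_{q,\eta}$ and $\smax_\eta$ are defined as maxima over $\Delta_m$, this lower-bounds them.

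For the upper bounds, the idea is to bound the penalty term uniformly over the simplex. Fix any $r\in\Delta_m$. Then $\langle r,y\rangle\le M(y)\sum_i r_i = M(y)$, so it suffices to show $\frac{1}{\eta q}\sum_i r_i^q\le \frac{m^{1-q}}{\eta q}$ and $-\frac{1}{\eta}\sum_i r_i\log r_i\le \frac{\log m}{\eta}$. Both follow from concavity and Jensen's inequality: for the $\ell_q$ term, $\frac{1}{m}\sum_i r_i^q\le \left(\frac{1}{m}\sum_i r_i\right)^q = m^{-q}$ since $t\mapsto t^q$ is concave for $q\in(0,1)$, which rearranges to $\sum_i r_i^q\le m^{1-q}$; for the entropy term, the maximum of the Shannon entropy over $\Delta_m$ is $\log m$, attained at the uniform distribution, which is again a one-line Jensen argument applied to $t\mapsto -t\log t$ (or $t\mapsto \log t$). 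Taking the maximum over $r\in\Delta_m$ of the bounded objective then yields the stated upper bounds.

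There is essentially no hard step here — the only mild subtlety is making sure the nonnegativity of the regularizer is used correctly for the lower bound (it is crucial that the regularizer is \emph{added}, not subtracted, in the $\omega^*$ convention used here, compared to the general $\phi^*$ definition where it is subtracted; one must track the sign of $\varphi$). A secondary point of care is the degenerate case where some $r_i=0$: the convention $0\log 0 = 0$ must be invoked for the entropy, and $0^q = 0$ is unproblematic. I would write the proof as two short displayed computations, one for each regularizer, each consisting of the lower bound via a vertex of the simplex and the upper bound via splitting $\langle r,y\rangle + (\text{penalty})$ and applying Jensen to the penalty. I expect the whole argument to take under half a page.
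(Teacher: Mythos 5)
Your proposal is correct and follows essentially the same route as the paper: the lower bounds come from evaluating at the Dirac mass on the maximizing coordinate, and the upper bounds from $\langle r,y\rangle\le M(y)$ together with Jensen's inequality giving $\sum_i r_i^q\le m^{1-q}$ and entropy at most $\log m$. Your extra remarks on the sign convention of the regularizer and the $0\log 0=0$ convention are fine but not needed beyond what the paper's two-line proof already contains.
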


\begin{proof}
	The lower bounds follow from picking $r$ to be the Dirac mass function centered on the maximum coordinate. For the upper bounds, note that on the one hand, for all $r\in\Delta_m$, $\langle r,y\rangle \le M(y)$, and on the other hand, $\sum_i r_i^q\le m^{1-q}$ (resp. $-\sum_i r_i\log r_i\le \log m$) by Jensen's inequality.
\end{proof}

The second one (\pref{lem:taylor}) bounds the first two terms in the Taylor expansion of the potential function. In the sequel, this will allow us to control the increase in $\ell_\infty$-norm when making a small update in our iterative algorithm. As we  demonstrated in \pref{sec:overview}, what matters in this expansion is the second-order term. Indeed, in applications to discrepancy, we will always trivially upper bound the first-order term by simply picking an update that is positively correlated with the gradient (which will be an easy additional condition to impose).

\begin{lemma}
\label{lem:taylor}
    Fix $y\in\R^m$ and $q\in (0,1)$. Let $\nabla\defeq \nabla \omega^*_{q,\eta}(y)$. Then $\nabla\in \Delta_m$ and for all $\delta\in\R^m$ with $\|\delta\|_\infty\le \frac{1-q}{8\eta}$,
	\[
		\omega_{q,\eta}^*(y+\delta)\le \omega_{q,\eta}^*(y)+\langle \nabla, \delta\rangle+\frac{\eta}{1-q} \sum_{i=1}^m \nabla_i^{2-q} \delta_i^2\mper
	\]
	Similarly, if $\nabla\defeq \nabla \smax_\eta(y)$, then $\nabla\in\Delta_m$ and for all $\delta\in\mathbb{R}^m$ with $\lVert \delta\rVert_\infty\le \frac{1}{3\eta}$,
	\[
		\smax_\eta(y+\delta)\le \smax_\eta(y)+\langle \nabla, \delta\rangle+\eta \sum_{i=1}^m \nabla_i \delta_i^2\mper
	\]
\end{lemma}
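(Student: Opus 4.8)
The plan is to treat $\omega^*_{q,\eta}$ as the convex conjugate over the simplex of the strictly convex barrier $\psi(r)\defeq -\frac{1}{\eta q}\sum_i r_i^q$, and to control its growth via its Hessian along the segment $[y,y+\delta]$. First I would record the structural facts. Because $r_i^{q-1}\to+\infty$ as $r_i\to 0^+$, pushing any zero coordinate of a boundary point of $\Delta_m$ into the simplex strictly increases $\langle r,y\rangle-\psi(r)$, so its maximizer $r^*(y)$ is unique and lies in the relative interior of $\Delta_m$; hence $\omega^*_{q,\eta}$ is differentiable everywhere with $\nabla\omega^*_{q,\eta}(y)=r^*(y)\in\Delta_m$ — this is the first assertion. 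The KKT conditions give a scalar $\mu(y)$ with $y_i+\frac1\eta (r^*_i(y))^{q-1}=\mu(y)$ for all $i$, and the implicit function theorem applied to this system (whose Jacobian in $(r,\mu)$ is the invertible KKT matrix of a strictly convex problem) makes $y\mapsto(r^*(y),\mu(y))$ of class $C^\infty$, so in particular $\omega^*_{q,\eta}$ is twice continuously differentiable.

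Next I would differentiate the KKT system along $y(t)=y+t\delta$, writing $r(t)=r^*(y(t))$. A one-line computation produces $\sum_i r_i(t)^{2-q}(\delta_i-\dot\mu(t))=0$ — so $\dot\mu(t)$ is a convex combination of the $\delta_i$'s and $|\dot\mu(t)|\le\|\delta\|_\infty$ — together with $\dot r_i(t)=\frac{\eta}{1-q}r_i(t)^{2-q}(\delta_i-\dot\mu(t))$. Since $r(t)=\nabla\omega^*_{q,\eta}(y(t))$, reading the last identity as a linear map in $\delta$ yields, for every $z$ and with $D$ the diagonal matrix having $D_{ii}=(r^*_i(z))^{2-q}$,
\[
\nabla^2\omega^*_{q,\eta}(z)=\frac{\eta}{1-q}\left(D-\frac{D\mathbf 1\mathbf 1^\top D}{\mathbf 1^\top D\mathbf 1}\right)\preccurlyeq\frac{\eta}{1-q}\,D,
\]
dropping the positive semidefinite rank-one correction; this is exactly the bound quoted in \pref{sec:overview}. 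The crucial — and what I expect to be the hardest — remaining point is that $r^*$ barely moves along the segment: the same identities give $\big|\tfrac{d}{dt}\log r_i(t)\big|=\frac{\eta}{1-q}r_i(t)^{1-q}|\delta_i-\dot\mu(t)|\le\frac{2\eta\|\delta\|_\infty}{1-q}$ (using $r_i(t)\le1$), so the hypothesis $\|\delta\|_\infty\le\frac{1-q}{8\eta}$, after integrating over $[0,t]$ for $t\le1$, forces $r^*_i(y+t\delta)\le e^{1/4}r^*_i(y)$ for all $i$ and all $t\in[0,1]$. The constant $8$ is calibrated precisely so that this estimate feeds into the last step with room to spare.

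I would then conclude with Taylor's formula with integral remainder for $\phi(s)\defeq\omega^*_{q,\eta}(y+s\delta)$: one has $\phi(0)=\omega^*_{q,\eta}(y)$, $\phi'(0)=\langle\nabla,\delta\rangle$ (where $\nabla=r^*(y)$), and by the Hessian bound and the factor estimate, $\phi''(s)\le\frac{\eta}{1-q}\sum_i(r^*_i(y+s\delta))^{2-q}\delta_i^2\le\frac{\eta}{1-q}e^{(2-q)/4}\sum_i\nabla_i^{2-q}\delta_i^2\le\frac{\eta}{1-q}e^{1/2}\sum_i\nabla_i^{2-q}\delta_i^2$ (using $2-q<2$). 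Since $\phi(1)=\phi(0)+\phi'(0)+\int_0^1(1-s)\phi''(s)\diff s$, $\int_0^1(1-s)\diff s=\tfrac12$, and $\tfrac12e^{1/2}<1$, this is exactly the claimed inequality; note it is the weight $1-s$ in the integral remainder — rather than a cruder $\phi(1)\le\phi(0)+\phi'(0)+\max_s\phi''(s)$ — that saves the spurious factor $2$ and lets the coefficient be exactly $\tfrac{\eta}{1-q}$. For the softmax statement I would skip all this and compute directly from $\smax_\eta(y)=\frac1\eta\log\sum_i e^{\eta y_i}$: this gives $\nabla=\big(e^{\eta y_i}/\sum_j e^{\eta y_j}\big)_i\in\Delta_m$ and $\smax_\eta(y+\delta)-\smax_\eta(y)=\frac1\eta\log\big(\sum_i\nabla_i e^{\eta\delta_i}\big)$; applying $\log(1+u)\le u$ and the elementary bound $e^u\le1+u+u^2$, valid for $|u|\le\tfrac13$ (indeed for all $u\le\log2$, as $1+u+u^2-e^u$ is convex there and vanishes with its derivative at $0$), so applicable since $|\eta\delta_i|\le\eta\|\delta\|_\infty\le\tfrac13$, and then dividing by $\eta$, finishes it.
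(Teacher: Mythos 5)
Your proposal is correct, and its skeleton is the same as the paper's: identify the maximizer $r^*(y)$ with $\nabla\omega^*_{q,\eta}(y)$ via the KKT system, dominate the Hessian by $\tfrac{\eta}{1-q}\diag\bigl((r^*)^{\odot 2-q}\bigr)$ after discarding a positive semidefinite rank-one correction, show that $r^*$ moves by at most a constant multiplicative factor when $\|\delta\|_\infty\le\tfrac{1-q}{8\eta}$, and finish with Taylor. The differences are in execution, and they are all sound: you obtain the Hessian in closed form $\tfrac{\eta}{1-q}\bigl(D-D\mathbf 1\mathbf 1^\top D/\mathbf 1^\top D\mathbf 1\bigr)$ by implicitly differentiating the KKT system along the segment, which makes the PSD-ness of the discarded term immediate, whereas the paper argues it from symmetry and entrywise nonnegativity; you control the drift of $r^*$ through a log-derivative differential inequality ($r_i^*(y+s\delta)\le e^{1/4}r_i^*(y)$), whereas the paper compares endpoints directly through bounds on the Lagrange multiplier $\lambda$ and gets a factor $2$; and you handle general $\eta$ directly instead of by rescaling. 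One small remark: your claim that the weight $(1-s)$ in the integral remainder is what saves the constant is not quite the right diagnosis — the paper's sup-form Taylor already carries a factor $\tfrac12$, which cancels its factor-$2$ gradient bound, so both routes land on the coefficient $\tfrac{\eta}{1-q}$; your $\tfrac12 e^{1/2}<1$ is simply a slightly different bookkeeping. Finally, your softmax argument is genuinely different and more elementary: rather than repeating the Hessian-plus-drift scheme as the paper does, you compute $\smax_\eta(y+\delta)-\smax_\eta(y)=\tfrac1\eta\log\sum_i\nabla_i e^{\eta\delta_i}$ and use $\log(1+u)\le u$ together with $e^u\le 1+u+u^2$ for $u\le\log 2$; this is shorter and makes the constant $\tfrac13$ in the hypothesis transparent.
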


\begin{proof}
    Consider first the $\ell_q$-regularizer with $\eta=1$. To lighten notations we write $\omega^*$ for $\omega^*_{q,\eta}$. Recall that
    \begin{align}
        \omega^*(y)=\max_{r\in \Delta_m} \langle r,y\rangle+\frac{1}{q}\sum_{i=1}^m r_i^q\mper\label{eq:recall}
    \end{align}
    By Danskin's theorem (see e.g. \cite[Proposition B.25]{B99}), we have $\nabla \omega^*(y)=r^*\in \Delta_m$, where $r^*$ is the optimum in \pref{eq:recall}. For the KKT conditions to hold, we must have for some $\lambda:\R^m\to \R$ (the Lagrange multiplier associated to the equality constraint of the simplex): $y_i+(r_i^*)^{q-1}=\lambda(y)$ for all $i\in [m]$. Note that the Lagrange multipliers associated to the inequality constraints disappear by complementary slackness since necessarily $r_i^*\neq 0$. Also we must have $\lambda(y)>\max_{i\in [m]} y_i$ by the previous equality. In fact, $\lambda(y)$ is the unique solution to $\sum_{i\in [m]} (\lambda(y)-y_i)^{1/(q-1)}=1$.
    
    In summary, $\nabla\omega^*(y)=(\lambda(y) \mathbf{1} - y)^{\odot \frac{1}{q-1}}\in\Delta_m$. Differentiating once more, we see that
	\[
		\nabla^2 \omega^*(y)=\frac{1}{1-q}\left(\diag(\nabla\omega^*(y)^{\odot 2-q})-(\nabla \lambda(y))(\nabla\omega^*(y)^{\odot 2-q})^\top\right)\mper
	\]
    Let $M\defeq(\nabla \lambda(y))(\nabla\omega^*(y)^{\odot 2-q})^\top$. Observe that M has rank 1 and must be symmetric as the Hessian itself is symmetric. Further, $\lambda(y)$ is a nondecreasing function of $y_i$ for all $i\in[m]$, so that every entry of $M$ is nonnegative. It follows that $M$ is positive semidefinite, and thus
    \begin{align}
        \label{eq:gradhes}
        \nabla^2 \omega^*(y)\preccurlyeq \frac{1}{1-q}\diag(\nabla \omega^*(y)^{\odot 2-q})\mper
    \end{align}

    Now fix $\delta\in\mathbb{R}^m$. The function $s\mapsto \sum_i (s-y_i)^{\frac{1}{q-1}}$ defined for $s>\max_i y_i$ is nonincreasing, so for all $i$,
    \begin{equation}
        \label{eq:lambdainf}
        |\lambda(y+\delta)-\lambda(y)|\le \lVert \delta\rVert_\infty \text{ and } \lambda(y)\ge 1+y_i\mper
    \end{equation}
    Now fix $i\in[m]$ and suppose that $\delta$ satisfies $\lVert \delta\rVert_\infty\le \frac{1-q}{8}$. We write
	\begin{align*}
		(\nabla\omega^*(y+\delta))_i^{2-q}
		&=(\nabla \omega^*(y))_i^{2-q}\left(1+\frac{\lambda(y+\delta)-\lambda(y)-\delta_i}{\lambda(y)-y_i}\right)^{\frac{2-q}{q-1}}\\
		&\le (\nabla \omega^*(y))_i^{2-q}\exp\left(\frac{2-q}{1-q} \frac{\lambda(y)+\delta_i-\lambda(y+\delta)}{\lambda(y+\delta)-y_i-\delta_i} \right)\mcom
	\end{align*}
    where we used the inequality $\log(1+y)\ge \frac{y}{1+y}$. Now we plug in the inequalities \pref{eq:lambdainf}:
	\begin{align}
		(\nabla\omega^*(y+\delta))_i^{2-q}
		&\le (\nabla \omega^*(y))_i^{2-q}\exp\left(\frac{2-q}{1-q}\frac{\lambda(y)+\delta_i-\lambda(y+\delta)}{1+\lambda(y+\delta)-\lambda(y)-\delta_i}\right)\nonumber\\
		&\le (\nabla \omega^*(y))_i^{2-q}\exp\left(\frac{2}{1-q}\frac{2\|\delta\|_\infty}{1-2\|\delta\|_\infty}\right)\nonumber\\
		&\le 2(\nabla \omega^*(y))_i^{2-q}\mper\label{eq:ineqgrad}
	\end{align}
    Finally, from Taylor's inequality, under the same assumption $\lVert \delta\rVert_\infty\le \frac{1-q}{8}$,
	\[
		\left|\omega^*(y+\delta)- \omega^*(y)-\langle \nabla\omega^*(y),\delta\rangle\right|\le \frac{1}{2}\sup_{u\in[0,1]} \left|\delta^\top \nabla^2 \omega^*(y+u\delta)\delta\right|\le \frac{1}{1-q}\sum_{i=1}^m (\nabla\omega^*(y))_i^{2-q} \delta_i^2\mcom
	\]
    where the last inequality follows from \pref{eq:gradhes} and \pref{eq:ineqgrad}.
    
    For the entropy regularizer and $\eta=1$, it holds that
    \[
        \nabla \smax(y)=\frac{\exp(y)}{\sum_{i=1}^m \exp(y_i)}\text{ and } \nabla^2\smax(y)\preccurlyeq \diag(\nabla\smax(y))\mper
    \]
    Therefore for all $i\in [m]$, $(\nabla\smax(y+\delta))_i\le (\nabla\smax(y))_i \exp(2\lVert\delta\rVert_\infty)$, and we conclude in the same way as for the $\ell_q$-regularizers.
    
	For general $\eta$, observe that $\nabla \omega_{q,\eta}^*(y)=\nabla\omega_{q,1}^*(\eta y)$ and $\nabla^2\omega_{q,\eta}^*(y)=\eta\nabla^2\omega_{q,1}^*(\eta y)$ (and similarly for $\smax_\eta$). Therefore, the same argument based on Taylor's inequality gives the desired result as long as $\lVert\delta\rVert_\infty\le \frac{1-q}{8\eta}$ for $\omega^*_{q,\eta}$ and $\lVert\delta\rVert_\infty\le \frac{1}{3\eta}$ for $\smax_\eta$. 
\end{proof}

\begin{remark}
    This gives an analytic explanation for why we might prefer $\ell_q$-regularization to negative entropy regularization in certain situations, although the approximation error from \pref{lem:explain} is worse (for the same value of $\eta$). Observe that a typical entry $\nabla_i$ of the gradient is much smaller that $1$. Hence $\ell_q$-regularization can be advantageous whenever we can leverage the fact that $\nabla_i^{2-q}$ is typically much smaller than $\nabla_i$. As we will now see, this is the case in Spencer's setting.
\end{remark}
\section{Spencer's setting}
\label{sec:spencer}

We now focus on the setting of Spencer's theorem (\pref{thm:spencer}), namely the discrepancy of matrices with bounded entries. Our goal in this section is twofold: first, we give a rigorous version of the proof of Spencer's theorem that we sketched in \pref{sec:overview}. Then, we show how to improve the constant with a slightly more careful analysis.

\subsection{Full proof of Spencer's theorem}
\label{sec:spencereasy}

We give a complete proof of Spencer's theorem in the general case where the matrix has $m$ rows and $n$ columns. Our choice of $q\in(0,1)$ in the $\ell_q$-regularization is going to depend on the ratio $m/n$.

\begin{theorem}
	\label{thm:spencergeneral}
	Let $n\le m$. There is a deterministic algorithm running in polynomial time that for each $A\in[-1,1]^{m\times n}$, finds $x\in\{\pm 1\}^n$ such that
	\[
	    \lVert Ax\rVert_\infty = O\left(\sqrt{n\log(\frac{2m}{n})}\right).
	\]
\end{theorem}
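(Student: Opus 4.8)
The plan is to instantiate the meta-algorithm (\pref{alg:iterative}) with the $\ell_q$-regularized maximum $\omega^*_{q,\eta}$ applied to the $2m$-dimensional vector obtained by stacking $Ax$ and $-Ax$ (so that tracking its maximal entry controls $\lVert Ax\rVert_\infty$). The step size $L$ is set to a small polynomial in $n$, as allowed by \pref{obs:comp}, and small enough that the $\ell_\infty$-radius condition $\lVert A\delta\rVert_\infty\le \frac{1-q}{8\eta}$ of \pref{lem:taylor} is satisfied at every iteration (using $\lVert A\delta\rVert_\infty\le \sqrt n L$ since columns of $A$ have entries in $[-1,1]$ and $\delta$ is a unit vector; the lost polynomial factors only enter the running time). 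The oracle returns, given the active set $F$, the set of update directions $\delta$ supported on $F$ that are positively correlated with the gradient $A^\top\nabla\omega^*$ \emph{and} lie in a subspace on which $\delta^\top A^\top\nabla^2\omega^*(\cdot) A\,\delta$ is small; I have to check this satisfies \pref{ass:oracle}, i.e.\ that after discarding "heavy" rows and intersecting with the halfspace $\langle A^\top\nabla\omega^*,\delta\rangle\le 0$ and with $\langle\delta,x\rangle=0$, the remaining set still meets every halfspace of $\R^F$ in a half-line — this holds as long as the subspace we restrict to has dimension at least, say, $|F|/2+2$, so that after imposing two linear constraints a nonzero subspace remains.

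The core of the argument is the per-step potential bound, which is exactly the discrete analogue of \pref{eq:analysis1}. Fix an iteration with active set $F$, $|F|=k$, and write $\nabla\defeq\nabla\omega^*_{q,\eta}(\tilde y)$ where $\tilde y$ stacks $\pm Ax(t)$; by \pref{lem:taylor}, the increase in potential after the update $\varepsilon\delta$ (with $\varepsilon\le L$) is at most $\varepsilon\langle A^\top\nabla,\delta\rangle+\frac{\eta}{1-q}\varepsilon^2\sum_i \nabla_i^{2-q}(A\delta)_i^2$. The first-order term is killed by the oracle's sign condition. For the second-order term, since $\nabla\in\Delta_{2m}$ at most $2\sqrt{k}$ rows can have $\nabla_i \ge 1/(2\sqrt k\cdot ?)$ — more precisely I will threshold at $\nabla_i \le \tau$ for a suitable $\tau\asymp k^{-1/2}$-ish value and let the oracle's subspace be orthogonal to the columns $A_i$ of those few heavy rows, which costs only $O(\sqrt k)$ dimensions; then $\sum_i \nabla_i^{2-q}(A\delta)_i^2 \le \tau^{1-q}\sum_i \nabla_i (A\delta)_i^2 \le \tau^{1-q}\lVert A\delta\rVert_2^2$ after using $\sum_i\nabla_i=1$ crudely, but that loses the $\sqrt k$; the right move is to also bound $\sum_i \nabla_i (A\delta)_i^2$ in expectation over a random $\delta$ in the subspace by $\lVert\delta\rVert_2^2$ times the average squared column norm restricted to $F$, which is at most $n/k$ by a counting argument on $\lVert A\rVert_F^2 \le mn$... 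I will instead follow the overview: pick $\delta$ uniformly at random in the (real, $\ge k/2$-dimensional) allowed subspace, so $\E (A\delta)_i^2 = \lVert P_S A_i\rVert_2^2/\dim S$ and $\E\sum_i \nabla_i^{2-q}(A\delta)_i^2 \le \frac{\tau^{1-q}}{\dim S}\sum_{i}\lVert A_i\rVert_2^2\lVert\text{(relevant proj.)}\rVert \le \frac{\tau^{1-q}}{\dim S}\cdot$ (trace bound), which with $\tau\asymp 1/k$ and $\dim S\asymp k$ gives the per-step bound $\diff\omega^* \lesssim \frac{\eta}{1-q}\,k^{-(2-q)}\cdot k \cdot \varepsilon^2 \lVert\delta\rVert^2\cdot(\text{col.\ norms})$. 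I will reconcile the exact exponent so that, summing a geometric-type telescoping as in the overview, the total is $O\!\big(\tfrac{\eta}{1-q}\,n^{?}\big)$; the clean statement to aim for is $\omega^*(Ax(T))-\omega^*(0)\lesssim \frac{\eta}{1-q}\sqrt n$ when the heavy-row threshold and $\dim S$ are tuned, using summation by parts against $\sum_{i\le k}(\lVert x(t_{i-1})\rVert_2^2-\lVert x(t_i)\rVert_2^2)\le k$.

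To finish, combine three error terms. By \pref{lem:explain}, $\omega^*_{q,\eta}(\tilde y)$ overshoots $\max_i \tilde y_i = \lVert A x\rVert_\infty$ by at most $\frac{(2m)^{1-q}}{\eta q}$; initially $\omega^*(0) \le \frac{(2m)^{1-q}}{\eta q}$ as well; the dynamic adds $O(\frac{\eta}{1-q}\sqrt n)$; and \pref{obs:disc} adds only $O(\max|A_{ij}|)=O(1)$ when rounding the final $O(1)$ active coordinates. Hence $\lVert Ax^*\rVert_\infty \lesssim \frac{m^{1-q}}{\eta q} + \frac{\eta}{1-q}\sqrt n + 1$. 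Now optimize over $\eta$ and $q$: with $q = 1 - 1/\log(2m/n)$ we have $m^{1-q} = m^{1/\log(2m/n)}$ and $1-q = 1/\log(2m/n)$, and choosing $\eta$ to balance the first two terms yields $\lVert Ax^*\rVert_\infty = O\big(\sqrt{n\log(2m/n)}\big)$, as claimed; in the square case $m=n$ this is $O(\sqrt n)$. The main obstacle I anticipate is the bookkeeping in the second-order estimate: getting the heavy-row threshold $\tau$, the subspace dimension, and the random-direction averaging to interlock so that the per-step bound is genuinely $O(\diff\lVert x\rVert_2^2/\sqrt k)$ with the right $q$-dependent constant $\frac{\eta}{1-q}$ — everything else (choice of $L$, verifying \pref{ass:oracle}, the telescoping sum, and the final optimization) is routine given the lemmas already proved.
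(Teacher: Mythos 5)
Your outline does follow the paper's route (doubling the rows, $\ell_q$-regularizer controlled by \pref{lem:taylor}, an oracle consisting of a low-curvature subspace intersected with the halfspace that kills the first-order term, telescoping over the masses $\beta_k$, and the final error budget via \pref{lem:explain} and \pref{obs:disc}), but the one step you flag as unresolved -- the per-step second-order bound -- is precisely where the proposal breaks, in two ways. First, the heavy-row bookkeeping does not interlock: with a threshold $\tau\asymp 1/k$ there can be $\Theta(k)$ rows with $\nabla_i\ge\tau$, not $O(\sqrt k)$, so orthogonalizing against ``the few heavy rows'' as you describe could consume essentially all of the $k$ available dimensions. The paper's \pref{lem:spencerlem} avoids any value threshold: it orthogonalizes against the rows carrying the $\lceil k/2\rceil-1$ largest gradient entries (spending about $k/2$ of the $k$ dimensions), which automatically forces every surviving entry to satisfy $\nabla_i\le 2/k$ because $\nabla\in\Delta_{2m}$, and then an averaging argument over an orthonormal basis of the remaining $\ge k/2$-dimensional space extracts a $2$-dimensional subspace on which $\sum_i \nabla_i\langle u_i,\delta\rangle^2\le \frac{2}{k-2}\|\delta\|_2^2$; combined with $\sum_{j\in F}A_{ij}^2\le k$ this yields $\sum_i\nabla_i^{2-q}\langle A_i,\delta\rangle^2\lesssim \|\delta\|_2^2/k^{1-q}$, which is the bound your random-direction sketch never pins down.

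Second, your ``clean statement to aim for,'' a total increase of order $\eta\sqrt n/(1-q)$, is wrong for general $q$: it is the $q=\tfrac12$ case of the overview. The correct telescoped total is $\sum_k(\beta_k-\beta_{k-1})/k^{1-q}\lesssim n^q/q$, i.e.\ a contribution $\eta n^q/(q(1-q))$, and the final optimization genuinely needs this exponent: balancing $(2m)^{1-q}/(\eta q)$ against $\eta n^q/(q(1-q))$ with $\eta=\sqrt{(1-q)m^{1-q}/n^q}$ and $q=1-1/\log(\tfrac{2m}{n})$ uses $(2m/n)^{1-q}=e$, so $(2m)^{1-q}n^q=O(n)$ and $1/(1-q)=\log(\tfrac{2m}{n})$, giving $O\bigl(\sqrt{n\log(\tfrac{2m}{n})}\bigr)$. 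By contrast, balancing your $m^{1-q}/(\eta q)$ against $\eta\sqrt n/(1-q)$ would give roughly $\sqrt{m^{1-q}\sqrt n\,\log(\tfrac{2m}{n})}$, which for, say, $m=n^{10}$ is about $n^{1/4}\sqrt{\log n}$ -- below the known $\Omega\bigl(\sqrt{n\log(\tfrac mn)}\bigr)$ lower bound -- so the per-step bound you are aiming for cannot hold, and the claimed conclusion does not follow from it as stated. Once the per-step estimate is proved in the $k^{q-1}$ form (\pref{lem:spencerlem} together with \pref{eq:incr}) and the telescoping is redone with exponent $1-q$, the remaining ingredients of your write-up (choice of $L$, verification of \pref{ass:oracle}, \pref{lem:explain}, and the final rounding of the $O(1)$ active coordinates) coincide with the paper's proof.
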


We start by proving the following lemma, which will allow us to find an update vector that does not increase too much the $\ell_q$-regularization of the maximal coordinate when there are $k$ active coordinates remaining.

\begin{lemma}
    \label{lem:spencerlem}
    Let $k, m$ be such that $4\le k\le 2m-2$. Let $u_1, \ldots, u_m$ be unit vectors in $\R^k$ and $\nabla\in\Delta_m$. Consider
    \[
        M\defeq \sum_{i=1}^m \nabla_i^{2-q} u_i u_i^\top.
    \]
    There is a subspace $S$ of dimension at least 2 such that for all $v\in S$,
    \[
        v^\top Mv\le 8k^{q-2}\|v\|_2^2.
    \]
    Moreover, this subspace can be found efficiently.
\end{lemma}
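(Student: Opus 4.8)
The plan is to exploit the fact that $\nabla \in \Delta_m$, so the weights $\nabla_i$ sum to $1$ and hence only a few of them can be large. First I would split the index set $[m]$ into "heavy" indices $H \defeq \{i : \nabla_i > c/k\}$ and "light" indices $L \defeq [m]\setminus H$, for a suitable universal constant $c$. Since $\sum_i \nabla_i = 1$, we have $|H| < k/c$, so by choosing $c$ small enough (say $c = 1/4$) we get $|H| \le k - 2$, leaving room for a subspace of dimension at least $2$. The key geometric step is to pass to the subspace $S \defeq \{v \in \R^k : \langle u_i, v\rangle = 0 \text{ for all } i \in H\}$, which kills the contribution of all heavy rows to the quadratic form and has $\dim S \ge k - |H| \ge 2$.

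On this subspace, $v^T M v = \sum_{i \in L} \nabla_i^{2-q} \langle u_i, v\rangle^2$. The obstacle here is that even though each light weight $\nabla_i^{2-q}$ is at most $(c/k)^{2-q}$, there could be up to $m$ of them, and $m$ can be much larger than $k$, so I cannot naively bound $\sum_{i \in L}\langle u_i, v\rangle^2$ by something like $k\|v\|_2^2$. Instead I would use that $\sum_{i\in L} \nabla_i^{2-q} = \sum_{i \in L} \nabla_i \cdot \nabla_i^{1-q} \le (c/k)^{1-q}\sum_{i \in L}\nabla_i \le (c/k)^{1-q}$, together with the fact that for unit vectors $u_i$, $\langle u_i, v\rangle^2 \le \|v\|_2^2$. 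This gives the worst-case deterministic bound $v^T M v \le (c/k)^{1-q}\|v\|_2^2$, which is not quite $k^{q-2}$. To get the extra factor of $1/k$, I would instead pick $v$ at random in $S$: taking $v$ uniform on the unit sphere of $S$, $\E[\langle u_i, v\rangle^2] = \|\Pi_S u_i\|_2^2 / \dim S \le 1/\dim S$, so $\E[v^T M v] \le \frac{1}{\dim S}\sum_{i \in L}\nabla_i^{2-q} \le \frac{(c/k)^{1-q}}{\dim S}$. This still only saves a $1/\dim S$ factor, which may be $\ll k$ when $\dim S$ is small; I would therefore argue that it suffices to instead keep the heavy threshold so that $\dim S = k - |H| \ge k/2$ (possible since $|H| \le k/2$ when $c = 1/2$... but then $|H|$ could equal $k-1$). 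The cleaner route, which I would actually pursue, is: set $H = \{i : \nabla_i > 4/k\}$ so $|H| < k/4$ and $\dim S \ge 3k/4$, then $\E_{v}[v^T M v] \le \frac{1}{3k/4}\sum_{i\in L}\nabla_i^{2-q}$, and crucially bound $\sum_{i\in L}\nabla_i^{2-q} \le \max_{i\in L}\nabla_i^{1-q} \cdot \sum_i \nabla_i \le (4/k)^{1-q} \le 1$ for $q \in (0,1)$ and $k \ge 4$; combined this yields $\E_v[v^T M v] \le \frac{4}{3k} \le 8 k^{q-2}$ once we check $k^{1-q} \ge$ constant (true for $k \ge 4$, $q<1$ only if $q$ is bounded away from $1$ — hmm).

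Since the claimed bound is $8k^{q-2}$ and $k^{q-2} = k^{q-1}\cdot k^{-1}$ with $k^{q-1} \le 1$, the target is actually \emph{weaker} than $8/k$ for $q<1$; wait, $k^{q-2} < k^{-1}$, so $8k^{q-2} < 8/k$ and I need the \emph{stronger} $O(k^{q-2})$ bound, which is where the factor $\nabla_i^{2-q}$ with the small weights must genuinely be used. So I would instead bound $\sum_{i\in L}\nabla_i^{2-q} = \sum_{i\in L}\nabla_i \cdot \nabla_i^{1-q} \le (4/k)^{1-q}\sum_i \nabla_i = (4/k)^{1-q} = 4^{1-q} k^{q-1} \le 4 k^{q-1}$, whence $\E_v[v^T M v] \le \frac{4 k^{q-1}}{3k/4}\|v\|_2^2 = \frac{16}{3} k^{q-2}\|v\|_2^2 \le 8 k^{q-2}\|v\|_2^2$. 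Finally, since the expectation over the sphere of $S$ is at most $8 k^{q-2}\|v\|_2^2$, a further averaging / pigeonhole argument shows there is a subspace — actually I would take the span of the top eigenvectors of $\Pi_S M \Pi_S$ restricted to $S$ with eigenvalue at most $8 k^{q-2}$, which by Markov on the eigenvalue distribution has dimension at least $\dim S / 2 \ge 2$. All of this is constructive: $H$, the projection, and the eigenspace of a symmetric matrix are computable by elementary linear algebra, giving the "found efficiently" clause.

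The main obstacle, as the above discussion shows, is getting the powers of $k$ exactly right: one must be careful that the heavy/light threshold is chosen so that (i) $\dim S = \Omega(k)$, not merely $\ge 2$, since we need to save a full factor of $k$ from the averaging, and (ii) the light contribution $\sum_{i\in L}\nabla_i^{2-q}$ is bounded by $O(k^{q-1})$ — which uses both $\sum_i\nabla_i = 1$ and the light upper bound $\nabla_i \le O(1/k)$ in the combined form $\nabla_i^{2-q} = \nabla_i^{1-q}\cdot\nabla_i \le (O(1/k))^{1-q}\nabla_i$. Once these two estimates are in hand, the bound $v^T M v \le 8k^{q-2}\|v\|_2^2$ for a random $v\in S$ follows immediately, and passing from "random $v$" to "a subspace of dimension $\ge 2$" is the routine eigenvalue-counting step.
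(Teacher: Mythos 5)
Your proposal is correct and follows essentially the same route as the paper's proof: you orthogonalize against the heavy rows (the paper removes the top $\lceil k/2\rceil-1$ rows by sorted weight, which is the same as your $O(1/k)$-threshold split), pull out the factor $\nabla_i^{1-q}\le (O(1/k))^{1-q}$ on the light rows, and then use a trace/averaging (eigenvalue-counting) argument on the restricted weighted Gram matrix to extract a $2$-dimensional low-quadratic-form subspace, exactly as the paper does with its orthonormal basis $w_1,\ldots,w_l$ and the bound $\sum_j w_j^TRw_j\le 1$. The only nit is your final count: the Markov bound gives at least $\dim S - k/2 \ge k/4$ good eigendirections rather than $\dim S/2$, which is $\ge 2$ for $k\ge 8$ and needs a one-line direct check for $4\le k\le 7$ (it passes), so the conclusion stands.
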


\begin{proof}
    Without loss of generality, suppose that $\nabla_1\ge \ldots \ge\nabla_m$.
    Let
    \[
        S_1\defeq \left\{v\in\R^k : \langle v,u_i\rangle=0 \text{ for all } i=1,\ldots,\left\lceil\frac{k}{2}\right\rceil-1\right\}.
    \]
    Observe that $\nabla_{\left\lceil\frac{k}{2}\right\rceil}\le \frac{2}{k}$, so for all $v\in S_1$,
    \begin{align}
        \label{eq:spencerfirst}
        v^\top M v\le \left(\frac{2}{k}\right)^{1-q} \sum_{i=1}^m \nabla_i \langle u_i, v\rangle^2.
    \end{align}
    Let $R\defeq \sum_{i\le m} \nabla_i u_i u_i^\top$ and consider an orthonormal basis $w_1, \ldots, w_\ell$ of $S_1$ such that $w_1^\top Rw_1\le \ldots \le w_\ell^\top R w_\ell$. Select $S$ to be the span of $\{w_1,w_2\}$. Observe that
    \[
        \sum_{j=1}^\ell w_j^\top R w_j=\sum_{i=1}^m \nabla_i \sum_{j=1}^\ell \langle w_j,u_i\rangle^2\le 1.
    \]
    Thus, by an averaging argument, it holds that
    \[
        v^\top R v\le \frac{1}{\ell-1}\|v\|_2^2\le \frac{2}{k-2}\|v\|_2^2
    \]
    for all $v\in S$. We conclude by combining this with \pref{eq:spencerfirst} and using the assumption on $k$.
\end{proof}

\begin{proof}[Proof of {\pref{thm:spencergeneral}}]
    We first double all the rows of $A$ and consider the matrix $\begin{bmatrix} A \\-A  \end{bmatrix}$. Thus, we assume without loss of generality that we are given a $2m\times n$ matrix $A$ such that for all $x\in\hc$, $\|Ax\|_\infty=\max_i (Ax)_i$.

    We set the parameter $L$ of \pref{alg:iterative} to be $L\defeq \frac{1-q}{8\eta n}$, where $q$ and $\eta$ are the parameters of the $\ell_q$-regularizer to be fixed later.
    
    We now describe our construction of $\oracle(A,x(t))$ with $F(t)$ being the set of active coordinates of $x(t)$ and $k=k(t) \defeq |F(t)|$. To simplify notations, we write $x=x(t)$ and $F=F(t)$. Observe that $\|A\delta\|_\infty\le n\|\delta\|_\infty$. Hence, by \pref{lem:taylor}, there exists $\nabla\in\Delta_n$ such that for all update $\delta\in\mathbb{R}^n$ with $\lVert \delta\rVert_\infty\le L$,
	\[
		\omega^*_{q,\eta}(A(x+\delta))-\omega^*_{q,\eta}(Ax)\le \langle \nabla, A\delta\rangle+\frac{\eta}{1-q}\sum_{i=1}^n \nabla_i^{2-q} \langle A_i,\delta\rangle^2.
	\]
	By assumption, $\sum_{j\in F} A_{i,j}^2\le k$, so we can apply \pref{lem:spencerlem} to get a 2-dimensional subspace $S$ such that for all $\delta\in S$,
	\begin{align}
	    \label{eq:incr}
	    \sum_{i=1}^n \nabla_i^{2-q} \langle A_i,\delta\rangle^2\le
	    \frac{4\|\delta\|_2^2}{k^{1-q}}.
	\end{align}
	The second-order term is invariant if we change $\delta$ to $-\delta$, but the first-order term changes sign. We return from $\oracle(A,x(t))$ the subspace $S$ intersected with the halfspace $\{\delta\in\R^n : \langle A^\top \nabla,\delta\rangle\le 0\}$.

    Now we switch to the global analysis of \pref{alg:iterative} and estimate what is the total discrepancy incurred over the whole walk. \pref{ass:oracle} is here satisfied for $C=3$, so since the entries of $A$ are bounded, the last step of \pref{alg:iterative} only changes the discrepancy of the final coloring by an additive constant.
    
    Denote by $\beta_k$ the sum of the $\ell_2$-squared norm of the update vectors starting from the point where there are at most $k$ unfrozen coordinates remaining.  Recall that we always choose our update vector orthogonal to the current position, so that $\beta_{k}\le k$. We now sum by parts the main term of the increases \pref{eq:incr} over the execution of the algorithm,
	\begin{align}
		\sum_{k=4}^n \frac{\beta_k-\beta_{k-1}}{k^{1-q}}&=\frac{\beta_n}{n^{1-q}}+\sum_{k=4}^{n-1}
		\beta_k\left(\frac{1}{k^{1-q}}-\frac{1}{(k+1)^{1-q}}\right)
		\le n^{q}+\sum_{k=4}^{n-1} \frac{1}{k^{1-q}}
		\le \frac{2n^q}{q}.\label{eq:final}
	\end{align}
	Thus, by \pref{eq:final} and \pref{lem:explain}, the final coloring $x(T)$ satisfies
	\[
		\lVert Ax(T)\rVert_\infty\le \omega^*_{q,\eta}(Ax(T))\le \frac{(2m)^{1-q}}{\eta q}+\frac{8\eta}{q(1-q)}n^{q}.
	\]
	The result follows by setting
	\[
	    \eta=\sqrt{\frac{(1-q)m^{1-q}}{n^q}} \text{ and } q=1-\frac{1}{\log(\frac{2m}{n})}.\qedhere
	\]
\end{proof}

\begin{remark}
    At this point, it is worth looking at what happens in this proof if we replace the $\ell_q$-regularizer with the entropic regularizer. For simplicity, consider the case where $m=O(n)$. While the constant cost is only $\log n/\eta$, we are not able to win anything in the local update as in \pref{lem:spencerlem} and we would get an $\eta n$ loss in the potential during the walk. Optimizing over $\eta$ would give discrepancy $\sqrt{n\log n}$. Negative entropy regularization in this context corresponds merely to a derandomization of the Chernoff and union bound argument.
    
    In fact, one could repeat the same analysis by replacing the regularizer by a general function of the form $\phi(r)=\sum_{i\le n} \varphi(r_i)$ for some convex, non-positive function $\varphi:\R\to\R$. Under additional conditions on $\varphi$ (for example the fact that $x\mapsto x\varphi''(x)$ is non-increasing) one would obtain a  discrepancy of
    \begin{equation}
        O\left( \sqrt{-n \varphi\left(\frac{1}{n}\right) \sum_{k\le n} \frac{k}{\varphi''\left(\frac{1}{k}\right)}} \right).\label{eq:discdiff}
    \end{equation}
    With this bound established, we can quickly verify that setting $\varphi$ to be the negative entropy, we obtain $\varphi(1/n) = -\log n/n$ and $\varphi''(1/k) = k$, which immediately recovers a discrepancy of $O(\sqrt{n\log n})$.
    
    Given this expression in~\pref{eq:discdiff}, it appears that we can derive the best possible regularizer by solving a differential equation. Since there is no silver bullet for such problems, one can simply test various elementary functions. Setting $\varphi(x) = -x^q$ for $0<q<1$ we verify the required condition and obtain $\varphi(1/n) = -1/n^q$, and $\varphi''(1/k) = q(1-q)/k^{2-q}$, which removes the logarithmic factor for constant $q$.
\end{remark}

\begin{remark}[Spherical discrepancy] 
    A slight variation of the same algorithm, which does not freeze variables, automatically achieves optimal bounds for \textit{spherical discrepancy}. This setting is a relaxation of the \Komlos problem, where the columns of the input matrix are vectors with at most unit $\ell_2$-norm, but the sought coloring only has an $\ell_2$-norm constraint i.e. $\|x\|_2 = \sqrt{n}$, rather than $x \in \{\pm 1\}^n$~\cite{jones2020spherical}. The key difference between this setting and that of \Komlos is that we are not forced to lose degrees of freedom by freezing variables, so throughout the entire execution of the algorithm we have $\Theta(n)$ degrees of freedom to update the partial coloring.

    To show this, we simply observe that at all times there is an update that does not increase the discrepancy of rows with large \textit{global} $\ell_2$ norm, which represent only at most a constant fraction of the entire set of rows, by Markov's inequality. The rate of increase in discrepancy entirely depends on the $\ell_2$-norm of the rows of the underlying matrix (restricted to the unfrozen variables, which in this case are all the variables). Following through with the same argument we used for Spencer, we obtain discrepancy $O(1)$.
\end{remark}

\subsection{Towards a better constant for Spencer's theorem}
\label{sec:spencerconstant}

In his original paper, Spencer proves that any matrix in 
$[-1,1]^{n\times n}$ has discrepancy at most $5.4 \sqrt n$~\cite{S85}.
In this section, we improve Spencer's bound to $4.1\sqrt n$. 
Prior to this work,~\cite[\S 5]{B13} improves Spencer's
bound to $5.2\sqrt n$ and sketches how to obtain
$3.7\sqrt n$, but some
computations rely on personal communication.
Unlike all these previous results, our proof is algorithmic. 

\begin{theorem}
	\label{thm:spencerConstant}
	For every $A\in[-1,1]^{n\times n}$, there exists $x\in\{-1,1\}^n$ such that 
	\[\lVert A x\rVert_\infty \le 4.1 \sqrt n + O(1)\,.\]
	Moreover, $x$ can be found by a randomized algorithm 
	running in polynomial time.
\end{theorem}

To prove~\pref{thm:spencerConstant}, we revisit the argument 
from~\pref{sec:spencer} by tracking constants more carefully.
We start by giving an analog of~\pref{lem:spencerlem} with 
a tighter leading constant.

\begin{lemma}
    \label{lem:spencerlem2}
	There exists $C>0$ such that the following holds
	for any $q\in (0,1)$ and $k,n\ge 1$.
	Let $M = \sum_{i=1}^n \nabla_i^{2-q}  u_i u_i^\top$
	for some vectors $u_1, \ldots, u_n$ in the unit
	ball of $\R^k$,
	and $\nabla\in\Delta_n$.
	
	Then there exists a 2-dimensional subspace $S$ such that the
	projection
	$\Pi_S$ onto $S$ satisfies
	\[
		\| \Pi_S M \Pi_S\|_{\textnormal{op}}\le \left(1 + \frac C k\right) \frac 1 {k^{2-q}}\,.
	\]
	Moreover, for any constant $\eps>0$, there is a randomized 
	polynomial-time algorithm outputting a 2-dimensional subspace
	$S$ satisfying with high probability
	\[
		\|\Pi_S M \Pi_S\|_{\textnormal{op}}\le (1+\eps)\left(1 + \frac C k\right) \frac 1 {k^{2-q}}\,.
	\]
\end{lemma}

\begin{proof}
	Assume that $\nabla_1\ge \ldots \ge \nabla_m$ without loss 
	of generality.
	We will prove that if $\alpha$ is sampled 
	uniformly in the interval $(\frac{1}{2},1)$, then
    \begin{align}
        \label{eq:randalpha}
	f(\nabla)\defeq \E_{\alpha\sim (\frac12, 1)}\left[ \sum_{i\ge \lfloor \alpha k\rfloor} \nabla_i^{2-q}\right]\le \frac{k^{q-1}}{4}+O(k^{q-2})=\E_{\alpha\sim (\frac12, 1)}[(1-\alpha)k^{q-1}]+O(k^{q-2})\,.
    \end{align}
	We first explain why~\pref{eq:randalpha} implies the desired bound.
	Let $\alpha\in(\frac{1}{2},1)$ be such that \[\frac{1}{(1-\alpha)k}\sum_{i\ge \lfloor \alpha k\rfloor} \nabla_i^{2-q}\le k^{q-2}+O(k^{q-3})\,.\] Then we can repeat the proof of~\pref{lem:spencerlem} to get a 2-dimensional subspace $S$ such that for all unit $v\in S$,
    \[
		\langle v, M v\rangle \le k^{q-2}+O(k^{q-3})\,,
    \]
	which is equivalent to the desired statement since $M\succeq 0$.
    Furthermore, if $\varepsilon>0$ is constant, the corresponding 
	$\alpha$ can be found with high probability by repeating 
	the experiment and using Markov's inequality.

    It remains to prove~\pref{eq:randalpha}. We compute explicitly
    \[
        f(\nabla)=2\int_{\frac{1}{2}}^1 \sum_{i\ge \lfloor\alpha k\rfloor} \nabla_i^{2-q}\diff{\alpha}= \sum_{i\ge \left\lfloor\frac{k}{2}\right\rfloor} \left(\frac{2(i+1)}{k}-1\right)\nabla_i^{2-q}\,.
    \]
    Let $\mathcal P=\{\nabla\in\Delta_m:\nabla_1\ge \ldots \ge \nabla_m\}$. Since $f\colon \mathcal P\to \R$ is a convex function, it attains its maximum at an extreme point of $\mathcal P$. Those are of the form $z_\ell\defeq \left(\frac{1}{\ell}\ldots \frac{1}{\ell}\,0\ldots 0\right)$ (with $\ell$ nonzero coordinates) for some $\ell\in [m]$. Moreover, the maximum of $f$ has to be attained when $\ell=\gamma k$, with $\gamma\in[\frac{1}{2},1]$. However, in that case,
	\begin{align*}
		f(z_\ell)&=\ell^{q-2}\left(\frac{\ell(\ell+1)-\frac{k}{2}(\frac{k}{2}+1)}{k}-\left(\ell-\frac{k}{2}+1\right)+O(1)\right)\\
			  &= k^{q-1}\left(\sqrt{\gamma}-\frac{1}{\sqrt{\gamma}}+\frac{1}{4\gamma^{3/2}}\right)+O(k^{q-2})\,.
	\end{align*}
	Finally, $\gamma\mapsto \sqrt{\gamma}-\frac{1}{\sqrt{\gamma}}+\frac{1}{4\gamma^{3/2}}$ is increasing on $[\frac{1}{2},1]$, with maximum equal to $\frac{1}{4}$ for $\gamma = 1$. This concludes the proof of~\pref{eq:randalpha}.
\end{proof}

We also sharpen the constant in front of the second-order term in~\pref{lem:taylor}.

\begin{lemma}
    \label{lem:taylorplus}
    There exist universal constants $C_1,C_2\in (0,1)$ such that 
	if $\nabla\defeq \nabla \omega^*_{q,\eta}(y)$, then for all $\delta\in\R^n$ with $\|\delta\|_\infty\le C_1\frac{1-q}{n\eta}$,
	\[
		\omega_{q,\eta}^*(y+ \delta)\le \omega_{q,\eta}^*( y)+\langle  \nabla,  \delta\rangle+\frac{\eta}{2(1-q)}\left(1+\frac{C_2}{n}\right) \sum_{i=1}^n \nabla_i^{2-q} \delta_i^2.
	\]
\end{lemma}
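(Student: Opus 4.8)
The plan is to redo the proof of \pref{lem:taylor} for the $\ell_q$-regularizer, but with two quantitative improvements. First, recall that in \pref{lem:taylor} the factor $\frac{1}{1-q}$ in front of the second-order term came from combining the identity $\nabla^2\omega^*_{q,1}(y)\preccurlyeq \frac{1}{1-q}\diag(\nabla^{\odot 2-q})$ with the crude bound $(\nabla\omega^*(y+u\delta))_i^{2-q}\le 2(\nabla\omega^*(y))_i^{2-q}$ from \pref{eq:ineqgrad}, plus the factor $\frac12$ from Taylor's inequality. To get the claimed $\frac{\eta}{2(1-q)}(1+C_2/n)$, I would keep the $\frac12$ from Taylor's theorem and, instead of the constant-factor-$2$ slack in \pref{eq:ineqgrad}, track the slack more precisely: under $\|\delta\|_\infty\le C_1\frac{1-q}{n\eta}$, the exponent $\frac{2}{1-q}\cdot\frac{2\|\delta\|_\infty}{1-2\|\delta\|_\infty}$ in \pref{eq:ineqgrad} is $O(\|\delta\|_\infty/(1-q)) = O(C_1/n)$, so $\exp(\cdots) \le 1 + O(C_1/n)$ for $C_1$ a small enough universal constant. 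Rescaling to general $\eta$ as at the end of the proof of \pref{lem:taylor} (using $\nabla\omega^*_{q,\eta}(y)=\nabla\omega^*_{q,1}(\eta y)$ and $\nabla^2\omega^*_{q,\eta}(y)=\eta\nabla^2\omega^*_{q,1}(\eta y)$), and writing $\langle A_i,\delta\rangle$ as the generic $\delta_i$ here, this yields exactly $\frac{\eta}{2(1-q)}(1+C_2/n)$ with $C_2$ proportional to $C_1$.

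The second subtlety is the size of the admissible step: in \pref{lem:taylor} one needed $\|\delta\|_\infty \le \frac{1-q}{8\eta}$ just to make the Taylor remainder estimate valid, whereas here the hypothesis is the stronger $\|\delta\|_\infty\le C_1\frac{1-q}{n\eta}$. This stronger hypothesis is precisely what is used in \pref{thm:spencergeneral}'s proof (where $\delta$ is applied through $A$ with $\|A\delta\|_\infty\le n\|\delta\|_\infty$), so it is natural here; and it is what lets us pay only a $(1+C_2/n)$ multiplicative error rather than a constant factor. Concretely, with $\|\delta\|_\infty\le C_1\frac{1-q}{n\eta}\le \frac{1-q}{8\eta}$ we are still in the regime where \pref{eq:lambdainf}, the logarithmic inequality $\log(1+y)\ge \frac{y}{1+y}$, and Taylor's inequality all apply verbatim; the only change is the sharper bookkeeping on the exponential.

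I would structure the write-up as: (i) reduce to $\eta=1$; (ii) invoke \pref{eq:gradhes}, i.e. $\nabla^2\omega^*(y)\preccurlyeq\frac{1}{1-q}\diag(\nabla^{\odot 2-q})$, which is already established; (iii) for $u\in[0,1]$, bound $(\nabla\omega^*(y+u\delta))_i^{2-q}\le (1+C_2/n)(\nabla\omega^*(y))_i^{2-q}$ by rerunning the chain of inequalities leading to \pref{eq:ineqgrad} but keeping the exponential's argument explicit and using $\|\delta\|_\infty\le C_1\frac{1-q}{n\eta}$; (iv) apply Taylor's inequality with the factor $\frac12$ to combine (ii) and (iii); (v) undo the rescaling. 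The only genuine obstacle is step (iii): one must be careful that the inequality $(1+z)^{\frac{2-q}{q-1}}\le\exp\!\big(\frac{2-q}{1-q}\cdot\frac{-z}{1+z}\big)$ and the substitution of \pref{eq:lambdainf} are done so that the resulting argument of $\exp$ is genuinely $O(\|\delta\|_\infty/(1-q))$ uniformly in $i$ and in $u\in[0,1]$ — in particular that the denominators $1+\lambda(y+u\delta)-\lambda(y)-u\delta_i$ stay bounded below by a constant, which holds since $|\lambda(y+u\delta)-\lambda(y)-u\delta_i|\le 2\|\delta\|_\infty$ is much smaller than $1$. Everything else is the routine computation already carried out in \pref{lem:taylor}, just with sharper constants, so I would not belabor it.
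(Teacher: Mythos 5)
Your proposal is correct and follows exactly the paper's route: the paper's own proof is the one-liner that one reruns the proof of \pref{lem:taylor}, replacing \pref{eq:ineqgrad} by the sharper bound $\exp(\cdot)\le 1+O(1/n)$ that the stronger hypothesis $\|\delta\|_\infty\le C_1\frac{1-q}{n\eta}$ yields, and then applies Taylor's inequality with its factor $\frac12$ and rescales in $\eta$. Your additional care about uniformity in $u\in[0,1]$ and the denominators staying bounded away from zero is a fine (if implicit in the paper) detail, and the mention of $\langle A_i,\delta\rangle$ is a harmless slip since the lemma is stated for a generic $\delta$.
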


\begin{proof}
    The proof is identical to the proof of \pref{lem:taylor}. We simply replace \pref{eq:ineqgrad} by the stronger inequality following from the stronger assumption on $\|\delta\|_\infty$.
\end{proof}

\begin{proof}[Proof of~{\pref{thm:spencerConstant}}]
   	We follow the proof of~\pref{thm:spencergeneral}. We set the 
	update size in \pref{alg:iterative} to be $L=\frac{C_1}{4n^2}$, 
	where $C_1$ is the constant from \pref{lem:taylorplus}. 
	We use the doubling trick to replace $ A$
	by a $2n\times n$ matrix such that $\| A  x\|_\infty = \max_{i\in [n]} \langle  A_i,  x\rangle$ for any vector $ x$.
	We use the potential function $\omega^*_{q,\eta}$ for some 
	parameters
	$q\in (0,1)$ and $\eta>0$ to be optimized at the end.

	First, by~\pref{lem:explain}, the initial loss is
	\[
		\omega^*_{q,\eta}( 0)\le \frac {(2n)^{1-q}} {\eta q}\,.
	\]
    
	Then, at any time $t$, we apply~\pref{lem:taylorplus} to 
	get that for any $\| \delta\|_\infty\le L$,
    \[
		\omega^*_{q,\eta}( A x(t)+ A \delta)-\omega^*_{q,\eta}( A x(t))\le \langle  A  \delta,  \nabla\rangle + \frac{\eta}{2(1-q)}\left(1+\frac{C_2}{n}\right) \sum_{i=1}^{2n} \nabla_i^{2-q} \langle  A_i, \delta\rangle^2\,,
    \]
	where $ \nabla = \nabla \omega^*_{q,\eta}( A  x(t))\in \Delta_{2n}$.
    Next, we apply~\pref{lem:spencerlem2}, where the $ u_i$ are the
	normalized rows of $ A$ restricted to the active coordinates. 
	Note that these rows have $\ell_2$-norm at most $\sqrt k$, where
	$k$ is the number of active coordinates.
	In this way, we find a $2$-dimensional 
	subspace $S$ such that if $ \delta\in S$ has small enough
	$\ell_\infty$-norm,
    \[
		\frac 1 {\| \delta\|_2^2}\sum_{i=1}^{2n} \nabla_i^{2-q} \left\langle \frac 1 {\sqrt k}  A_i, \delta\right\rangle^2\le \left(1 + \frac C k\right) \frac 1 {k^{2-q}}
	\]
	We then choose a direction for $ \delta\in S$ so that
	$\langle  \delta,  x(t)\rangle = 0$, and a
	signing  $\pm  \delta$ that makes the first-order term
	$\langle  A (\pm  \delta), \nabla\rangle < 0$. With this
	choice of $ \delta$, the increase in the potential at 
	time $t$ is at most
	\[
		\omega^*_{q,\eta}( A x(t)+ A \delta)-\omega^*_{q,\eta}( A x(t))\le \| \delta\|_2^2 \,\frac {\eta} {2(1-q)} \left(1 + \frac {C_2} n\right) \left(1 + \frac C k\right) \frac 1 {k^{1-q}}\,.
	\]
	Then, the summation by parts argument of~\pref{sec:spencer}
	yields
	\begin{align*}
		\| A  x(T)\|_\infty&\le \omega^*_{q,\eta}( A  x(T))\\
								 &\le \omega^*_{q,\eta}( 0) + \sum_{t=1}^{T} \omega^*_{q,\eta}( A  x(t)) - \omega^*_{q,\eta}( A  x(t-1))\\
										 &\le \frac {(2n)^{1-q}} {\eta q} + \frac {\eta} {2(1-q)} \cdot \frac {n^{q}} {q} + O(1)\,. 
	\end{align*}
	Optimizing over $\eta>0$ and $q\in (0,1)$, we obtain that up to
	a constant additive error, the
	discrepancy of $ A$ is at most
	\[
		2 \sqrt n \min_{q\in (0,1)} \sqrt{\frac 1 {2^{q} q^2 (1-q)}} \le 4.1 \sqrt n \,,
	\]
	achieved for $q \approx 0.71$.
	Finally, the algorithmic statement follows from using 
	the constructive version of \pref{lem:spencerlem2}.
\end{proof}

The best known asymptotic lower bound on the constant in Spencer's
problem, as $n\to\infty$, is $1$, achieved by Hadamard matrices (and by
random matrices).
In general, we do not expect this analysis
to be tight.
In particular, the use of $\ell_{0.71}$-regularization seems somewhat
ad hoc, and the analysis disregards information contained in the
first-order term.
\section{New discrepancy bounds for pseudorandom instances}
\label{sec:main}

An advantage of approaches for discrepancy minimization based on potential functions is that it is easy to track two (or a constant number of) potentials in parallel. Up to changing a few constants in the analysis, we can ensure that the best of both worlds happens. In this section, we illustrate this idea by proving a new discrepancy bound for instances of Beck-Fiala and \Komlos conjectures that satisfy a certain pseudorandomness condition.

Following \cite{P20}, we define the quantity $\lambda(A)$ associated to a matrix $A\in\R^{m\times n}$ as follows:
\[
    \lambda(A)\defeq \sup_{\|u\|_2=1,\langle u,\mathbf 1\rangle=0} \|B u\|_2,\text{ where $B_{ij}\defeq A_{ij}^2$ for all $i\in [m],j\in [n]$}\mper
\]
In the special case where $A$ is the adjacency matrix of a $d$-regular graph, $\lambda(A)$ is the second largest eigenvalue of $A$ and is bounded by $d$. As observed in \cite{P20}, $\lambda(A)$ is typically much smaller than this worst-case bound when $A$ is the incidence matrix of a \textit{random} regular set system. We will also check in \pref{sec:corollary} that this still holds for natural random instances of \Komlos conjecture.

We now recall Potukuchi's result for pseudorandom Beck-Fiala instances.

\begin{theorem}[Theorem 1.1 in \cite{P20}]
	\label{thm:potu}
	Let $A\in\{0,1\}^{m\times n}$ be such that each column has at  most $s$ nonzero entries. Then there is a randomized algorithm running in polynomial time to find $x\in\{\pm 1\}^n$ such that
	\[
	    \lVert Ax\rVert_\infty=O(\sqrt{s}+\lambda(A))\mper
	\]
\end{theorem}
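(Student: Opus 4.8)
The plan is to run the iterative meta-algorithm of \pref{alg:iterative} with the $\ell_q$-regularized maximum $\omega^*_{q,\eta}$ as potential, reusing the global analysis of \pref{thm:spencergeneral} almost verbatim but with an oracle adapted to the Beck--Fiala structure. Two preliminary observations. After doubling the rows we may assume $\lVert Ax\rVert_\infty=\max_i(Ax)_i$ throughout, and \pref{ass:oracle} will hold with an absolute constant, so the final rounding step of \pref{alg:iterative} costs only $O(1)$. Second, since $A$ is a $0/1$ matrix we have $A^{\odot2}=A$, so $\lambda(A)$ is exactly the operator norm of $A$ — and of every submatrix of $A$ obtained by selecting a subset of rows and restricting to a column subset $F$ — acting on vectors $u$ with $\sum_{j\in F}u_j=0$: extending such a $u$ by zeros gives a mean-zero vector on $[n]$, and deleting rows only shrinks the $\ell_2$-norm of the image. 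I will therefore always pick update directions that, besides being orthogonal to the current partial coloring $x$ as required in \pref{alg:iterative}, also satisfy $\langle\delta,\mathbf 1|_F\rangle=0$; this is one extra linear constraint and keeps the running coloring mean-zero.

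Consider a step with active set $F$, $|F|=k$. By \pref{lem:taylor} there is $\nabla\in\Delta_m$ such that the increase of $\omega^*_{q,\eta}$ is at most a first-order term — made nonpositive by the usual signing $\pm\delta$ — plus $\frac{\eta}{1-q}\,\delta^\top M\delta$, where $M\defeq\sum_i\nabla_i^{2-q}(A_i|_F)(A_i|_F)^\top$. I cut out a small subspace on which $\delta^\top M\delta$ is tiny by a sequence of cheap orthogonalizations: (i) impose $\langle\delta,\mathbf 1|_F\rangle=0$; (ii) orthogonalize against $A_i|_F$ for the $\lceil k/2\rceil-1$ indices $i$ of largest $\nabla_i$, which forces $\nabla_i\le 2/k$ on every surviving row, exactly as in \pref{lem:spencerlem}; (iii) orthogonalize against the \emph{heavy} rows, those incident to more than $16s$ coordinates of $F$ — there are at most $k/16$ of them, since $\sum_i|\{j\in F:A_{ij}=1\}|=\sum_{j\in F}(\text{column degree of }j)\le sk$. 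After (i)--(iii) the subspace still has dimension $\ge k/3$, and one averaging step over an orthonormal basis diagonalizing $R\defeq\sum_{i\text{ light}}\nabla_i(A_i|_F)(A_i|_F)^\top$, as in \pref{lem:spencerlem}, yields a $2$-dimensional $S$ (which \pref{alg:iterative} then intersects with $\{\langle\delta,x\rangle=0\}$ and the appropriate halfspace). On $S$, every row killed in (ii) or (iii) has $\langle A_i|_F,\delta\rangle=0$, so the surviving \emph{light} rows contribute $\le(2/k)^{1-q}\cdot\frac{\mathrm{tr}(R)}{k/3-2}=O(sk^{q-2})\lVert\delta\rVert_2^2$ — the analogue of Spencer's bound with $\lVert A_i|_F\rVert_2^2\le 16s$ in place of unit norm — while the surviving heavy rows, all with $\nabla_i\le 2/k$, contribute $\sum_{i\text{ heavy}}\nabla_i^{2-q}\langle A_i|_F,\delta\rangle^2\le(2/k)^{1-q}\cdot(2/k)\cdot\lambda(A)^2\lVert\delta\rVert_2^2=O(\lambda(A)^2k^{q-2})\lVert\delta\rVert_2^2$, using $\nabla_i\le 2/k$ and $\lVert A_H|_F\,\delta\rVert_2\le\lambda(A)\lVert\delta\rVert_2$ for $\delta\perp\mathbf 1|_F$ by the second observation. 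Hence $\delta^\top M\delta\le O\big((s+\lambda(A)^2)\,k^{q-2}\big)\lVert\delta\rVert_2^2$ on $S$.

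Plugging this into the global analysis of \pref{thm:spencergeneral} — summation by parts over the shrinking active set, using $\beta_k\le k$, together with \pref{lem:explain} for the approximation error $m^{1-q}/(\eta q)$ — and optimizing over $\eta$ and $q$ recovers $\lVert Ax(T)\rVert_\infty=O\big(\sqrt{s+\lambda(A)^2}\big)=O(\sqrt s+\lambda(A))$, up to one subtlety which is exactly where I expect the genuine difficulty to be. Taking $q\to1$ to tame the $m^{1-q}$ term reactivates the $(1-q)^{-1}$ factors of \pref{lem:taylor}, so the argument as stated actually delivers only $O\big((\sqrt s+\lambda(A))\,\mathrm{polylog}\,n\big)$. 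Removing this polylog — and thereby recovering \pref{thm:potu} — seems to require dropping the global potential in the final step in favour of a partial-coloring analysis, where each row's discrepancy is tracked as a martingale, a row is frozen out of the walk once it first reaches the target $\tau=\Theta(\sqrt s+\lambda(A))$, and all later updates are chosen orthogonal to this ``dangerous'' set; that set stays of size at most $n/2$ precisely because $\sum_i(Ax)_i^2=\lVert Ax\rVert_2^2\le\lambda(A)^2\lVert x\rVert_2^2\le\lambda(A)^2 n$ while the coloring is kept mean-zero. The crux is then controlling $\max_i(Ax)_i^2$ over all $m$ rows without paying the $\log m$ that a naive union bound would cost — which is the phenomenon that the hypothesis $\lambda=\lambda(A^{\odot2})$, a bound on the \emph{squared} entries, exists to capture.
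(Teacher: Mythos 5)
First, a point of comparison: the paper does not prove this statement at all --- it is quoted as Theorem~1.1 of \cite{P20}, and the surrounding text only sketches Potukuchi's argument (iterated Lovett--Meka partial colorings applied to the rows that still behave ``randomly'' on the active coordinates, plus an endgame that ``crucially uses the fact that the instance is $\{0,1\}$-valued''). So your attempt cannot be measured against an in-paper proof; it has to stand on its own, and by your own admission it does not: running a single $\ell_q$-regularized potential over all $2m$ rows forces the trade-off between the approximation error $(2m)^{1-q}/(\eta q)$ of \pref{lem:explain} and the $(1-q)^{-1}$ factor of \pref{lem:taylor}, and optimizing gives only $O\bigl((\sqrt{s}+\lambda(A))\sqrt{\log m}\bigr)$. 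The closing paragraph, where you propose to fix this by a martingale/partial-coloring analysis with rows frozen at a threshold $\tau$, is a sketch of a different algorithm whose key step (controlling all $m$ rows ``without paying the $\log m$'') is exactly what is left unproved. There is also an internal inconsistency in the local step: you orthogonalize against the heavy rows in item (iii) and then separately bound the contribution of ``surviving heavy rows'' via $\lambda(A)$; either treatment can be made to work for the local bound, but as written the two are contradictory.

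The missing idea is the endgame, and for $\{0,1\}$ matrices it is elementary --- no concentration, union bound, or potential tracking is needed there. By \pref{lem:expander}, once the active $\ell_2^2$-mass of a row drops below $O(\lambda(A))$ (which happens after the row has accrued only $O(\sqrt{s})$ discrepancy in the pseudorandom phase), the fact that the entries are $0/1$ means its remaining active $\ell_1$-mass \emph{equals} its remaining $\ell_2^2$-mass, so every later update can change its discrepancy by at most $O(\lambda(A))$ deterministically, no matter what the walk does. This is precisely the observation the paper reuses in rescaled form at the end of the proof of \pref{thm:prbf} (entries in $\{0,1/\sqrt{s}\}$, at most $O(s\lambda)$ active nonzeros, hence at most $O(\lambda\sqrt{s})$ additional rescaled discrepancy), and it dissolves the ``crux'' you point to. The remaining issue in your random-phase analysis --- the $\sqrt{\log m}$ loss from a single global regularizer --- is what the paper's machinery avoids by splitting the rows into dyadic total-mass groups $R_r$ and running a separate $\ell_{1/2}$-regularizer per group with $\eta_r$ tuned so each group's approximation error is $O(1)$ (\pref{lem:pseudo}, \pref{lem:both}, \pref{lem:first}), while Potukuchi avoids it by iterated Lovett--Meka partial colorings. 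Without one of these two devices plus the $\ell_1=\ell_2^2$ endgame, your argument stops at $O\bigl((\sqrt{s}+\lambda(A))\sqrt{\log m}\bigr)$ and does not establish the theorem.
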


The algorithm behind \pref{thm:potu} relies on iteratively running the random walk of Lovett and Meka \cite{LM15} to obtain a good partial coloring on the  rows that behave ``randomly'' on the set of current active coordinates. When $\lambda(A)$ is small, one can make sure that the rows behave ``randomly'' as long as their $\ell_2$-mass on the active coordinates is large enough. However, the bound on the discrepancy of the rows after their $\ell_2$-mass has become small crucially uses the fact that the instance is $\{0,1\}$-valued.

Our contributions here are new discrepancy bounds that generalize both \pref{thm:potu} and Banaszczyk's bound \cite{B98, BDG19}. Moreover, they hold both in the Beck-Fiala setting \textit{and} in the \Komlos setting. In \pref{sec:corollary}, we will apply these results to deduce random versions of \Komlos conjecture. Next, we state the two theorems that we will prove in this section. For simplicity, we focus on the case of square matrices ($m=n$), although we naturally expect the techniques to generalize to the $m\ge n$ case.

\begin{theorem}[Bound for pseudorandom Koml\'os instances]
	\label{thm:prkomlos}
	Let $A\in\mathbb{R}^{n\times n}$ be such that each column has $\ell_2$-norm at most $1$. Then there is a deterministic, polynomial-time algorithm to find $x\in\{\pm 1\}^n$ such that
	\[
	    \lVert Ax\rVert_\infty = O(1+\sqrt{\lambda(A) \log n})\mper
    \]
\end{theorem}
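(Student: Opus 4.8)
The plan is to instantiate the meta-algorithm \pref{alg:iterative} with a polynomially small step size $L=n^{-\Theta(1)}$ and the entropic regularizer, tracking the potential $\Phi(x)\defeq\smax_\eta(Ax)$ after doubling the rows of $A$ (so that $\Phi(x)\ge\lVert Ax\rVert_\infty$ and $\Phi(0)=\frac{\log(2n)}{\eta}$), with a parameter $\eta$ tuned roughly as $\eta\asymp\sqrt{\log n/\lambda(A)}$. By \pref{lem:taylor}, a step along a unit direction $\delta$ supported on the active set $F$, with $|F|=k$, increases $\Phi$ by at most $\langle\nabla,A\delta\rangle+\eta\sum_i\nabla_i\langle A_i,\delta\rangle^2$ where $\nabla=\nabla\Phi(x)\in\Delta_{2n}$; choosing $\delta$ in the halfspace $\langle A^{\top}\nabla,\delta\rangle\le 0$ kills the first-order term (as in Spencer's setting), so it suffices to design the oracle to keep the quadratic form $Q(\delta)\defeq\delta^{\top}M\delta$ small, where $M\defeq\sum_i\nabla_i (A_i^F)(A_i^F)^{\top}$ is the Hessian of $\Phi$ restricted to the active coordinates.

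The core step is a \Komlos-type analogue of \pref{lem:spencerlem}: I would produce a $2$-dimensional subspace $S\subseteq\R^F$ on which $Q(\delta)\le\rho_k\lVert\delta\rVert_2^2$ for a suitably small $\rho_k$. To build $S$, orthogonalize against the at most $k/2$ ``heavy'' rows — those with large $\ell_2$-mass $\lVert A_i^F\rVert_2^2$ on the active set (at most $k/4$ rows have $\lVert A_i^F\rVert_2^2>4n/k$, since the column norms are bounded) and those with large weight $\nabla_i$ (at most $k/4$ rows have $\nabla_i>4/k$) — and then take the bottom $2$-dimensional eigenspace of $M$ on the surviving subspace $S_1$, so that $\rho_k\le\mathrm{tr}(M|_{S_1})/(\dim S_1-1)=O(\mathrm{tr}(M|_{S_1})/k)$. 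On $S_1$ we have $\mathrm{tr}(M|_{S_1})\le\langle\nabla',B\mathbf 1_F\rangle$ where $\nabla'$ is $\nabla$ restricted to the surviving rows and $B=A^{\odot 2}$. Now the pseudorandomness enters: writing $\mathbf 1_F=\tfrac{k}{n}\mathbf 1+w$ with $w\perp\mathbf 1$ and $\lVert w\rVert_2\le\sqrt k$, the definition of $\lambda(A)$ gives $\lVert Bw\rVert_2\le\lambda(A)\sqrt k$, so $\langle\nabla',B\mathbf 1_F\rangle=\tfrac kn\langle\nabla',B\mathbf 1\rangle+\langle\nabla',Bw\rangle$; the ``average'' part is $O(1)$ because the surviving rows have bounded $\ell_2$-mass, and the ``fluctuating'' part is $O(\lambda(A))$ because $\lVert\nabla'\rVert_2\le 2/\sqrt k$ (the heavy-weight rows have been removed). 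This yields $\rho_k=O((1+\lambda(A))/k)$.

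With such an oracle in hand, the global analysis mirrors \pref{thm:spencergeneral}: letting $\beta_k$ be the total squared length of the updates performed while at most $k$ coordinates remain active, one has $\beta_k\le k$ since the updates are orthogonal to the current coloring, and summing the per-step bounds $\eta\rho_k(\beta_k-\beta_{k-1})$ by parts yields a total increase of $\Phi$ of order $\eta(1+\lambda(A))\log n$. Combining this with the error estimate $\Phi(x)\le\lVert Ax\rVert_\infty+\tfrac{\log(2n)}{\eta}$ from \pref{lem:explain}, optimizing $\eta$, and adding the additive $O(1)$ from the final rounding step (\pref{obs:disc}, which also dominates when $\lambda(A)$ is tiny) already gives $\lVert Ax(T)\rVert_\infty=O\bigl(1+\sqrt{1+\lambda(A)}\,\log n\bigr)$.

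The step I expect to be the crux is closing the gap between this $O(\sqrt{\lambda}\,\log n)$ and the claimed $O(\sqrt{\lambda\log n})$: the crude trace bound above loses a $\sqrt{\log n}$ factor because it treats every ``light'' row through the single potential $\Phi$, whereas rows whose current $\ell_2$-mass has already dropped well below a constant accumulate at most $O(\sqrt{\lambda(A)\log n})$ additional discrepancy by a martingale/variance argument with total variance proportional to $\lambda(A)$. The route I would take is the one advertised at the start of \pref{sec:main}: track a \emph{second} potential in parallel, tailored to these light rows, and merge the two analyses up to constant factors so that the sharper of the two bounds holds at every step. Getting the two potentials to coexist cleanly — in particular choosing the thresholds that separate ``heavy'' from ``light'' rows consistently across all scales $k$, and checking that removing up to $k/2$ rows still leaves room for a $2$-dimensional eigenspace so that \pref{ass:oracle} is met — is the delicate part; it is also what makes the bound specialize to Banaszczyk's $O(\sqrt{\log n})$ when $\lambda(A)=O(1)$ and improve on \pref{thm:potu} after rescaling in the Beck--Fiala regime $\lambda=\Omega(\log n)$.
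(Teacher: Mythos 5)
Your first two paragraphs are a correct and nicely executed analysis, but they only prove the weaker bound $O(\sqrt{(1+\lambda)}\,\log n)$, and the part you defer to the final paragraph is precisely the content of the theorem; as sketched, the deferred step does not go through. There are two concrete gaps. First, in the pseudorandom regime a single entropic potential cannot reach the target no matter how you tune $\eta$: the approximation error $\log(2n)/\eta$ forces $\eta\gtrsim\sqrt{\log n/\lambda}$, and then even the ``average'' part of your trace bound, which contributes $\eta\cdot O(1/k)$ per unit of injected $\ell_2$-mass, sums over the walk to $\eta\cdot O(\log n)=O(\log^{3/2}n/\sqrt{\lambda})$ --- already larger than $\sqrt{\lambda\log n}$ whenever $\lambda=o(\log n)$, which is exactly the interesting regime (the theorem gives $O(1)$ when $\lambda=O(1/\log n)$). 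The paper does not patch the entropic potential here; it abandons it in this regime and instead tracks $O(\log n)$ potentials in parallel, one $\ell_{1/2}$-regularized potential $\Phi_{r,t}$ per dyadic bucket $R_r$ of rows grouped by \emph{total} $\ell_2$-mass, with $\eta_r=\sqrt{n2^{1-r}}$ chosen so that each bucket's approximation error is $O(1)$, an explicit allocation of effective dimension $k_r$ per bucket at each scale $k$ (\pref{lem:both}), and a weighted sign-choice trick to handle the many first-order terms one cannot be simultaneously orthogonal to (\pref{lem:first}); this is what makes the random-phase contribution of every row $O(1)$ rather than polylogarithmic.

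Second, for rows whose active mass has dropped to $O(\lambda)$, a ``martingale/variance argument'' has no force: the walk is deterministic and adversarially correlated with the matrix, so bounded total variance does not by itself bound the accumulated discrepancy. The paper replaces this heuristic by a potential that is literally non-increasing: it thresholds $A$ to a matrix $B$ with $B_{ij}^2\le 1/(16K^2\eta^2)$ (the substitution error being $O(\sqrt{\lambda\log n})$ by \pref{lem:other}), and applies $\smax_\eta$ with $\eta=\sqrt{\log n/\lambda}$ to the \emph{corrected} quantities $\langle B_i,x-x(t_i)\rangle-K\eta\sum_j B_{ij}^2(x_j-x_j(t_i))^2$; then \pref{lem:quadraticsub} produces a codimension-$k/4$ subspace on which the second-order Taylor terms are absorbed by the explicit negative correction, giving $\Psi_{t+1}\le\Psi_t$ (\pref{lem:smalllocal}, \pref{lem:small}). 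Finally, the per-row accounting must be split at the stopping time $t_i$ at which row $i$ ceases to be pseudorandom --- $O(1)$ from the $\Phi_{r}$ phase plus $O(\sqrt{\lambda\log n})$ from the $\Psi$ phase --- and the two oracles must coexist, which works because each uses codimension at most $k/4+O(1)$. Your proposal names the right high-level architecture (two parallel potentials) but supplies neither the bucketed $\ell_{1/2}$ construction that removes the $\log$ losses in the random phase nor a deterministic mechanism replacing the variance argument in the small-row phase, so the claimed bound is not established.
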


\begin{theorem}[Bound for pseudorandom Beck-Fiala instances]
	\label{thm:prbf}
	Let $A\in\{0,\pm 1\}^{n\times n}$ be such that each column has at most $s$ nonzero entries. Then there is a deterministic, polynomial-time algorithm to find $x\in\{\pm 1\}^n$ such that
	\[
	    \lVert Ax\rVert_\infty = O(\sqrt{s}+\min(\sqrt{\lambda(A) \log n}, \lambda(A)))\mper
	\]
\end{theorem}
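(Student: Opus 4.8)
The plan is to combine two potential functions in parallel within the framework of \pref{alg:iterative}: the $\ell_q$-regularizer (tuned to the Beck-Fiala/Koml\'os setting) to control the $\sqrt{s}$ contribution, together with a second potential that exploits the pseudorandomness parameter $\lambda(A)$ to save on the rows whose $\ell_2$-mass on the active coordinates is large. Concretely, I would first prove \pref{thm:prkomlos}, since \pref{thm:prbf} will follow by running the same argument but additionally tracking the $\ell_q$-potential from \pref{sec:spencer} to absorb the $\sqrt{s}$ term, and then taking the better of the two discrepancy guarantees depending on whether $\lambda(A)=O(\log n)$ or not.

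For \pref{thm:prkomlos}, I would run the sticky walk and at each step split the rows into \emph{heavy} rows (those with large $\ell_2$-mass restricted to the active set $F$, say $\|A_i|_F\|_2^2 \ge \tau$ for a threshold $\tau$ to be chosen) and \emph{light} rows. The key observation is that the heavy rows are few: since $\sum_i \|A_i|_F\|_2^2 = \sum_{j\in F}\|A^j\|_2^2 \le |F|$, there are at most $|F|/\tau$ heavy rows. For the heavy rows, I would use an averaging/orthogonality trick exactly as in \pref{lem:spencerlem}: by intersecting with the orthogonal complement of the heavy rows (which costs only $|F|/\tau$ dimensions out of roughly $|F|$) and then doing the averaging argument on $R=\sum_i \nabla_i u_i u_i^T$, one obtains an update direction $\delta\in F$ that is \emph{exactly} orthogonal to all heavy rows while keeping the quadratic form on the light-row potential under control. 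So the heavy rows do not increase at all during that step. For the light rows, the point of $\lambda(A)$ enters: writing the increase in (entropy-regularized) potential as $\eta\sum_i \nabla_i \langle A_i,\delta\rangle^2$ with $\nabla\in\Delta_n$, the entrywise-squared structure $\langle A_i,\delta\rangle^2 = \langle A_i^{\odot 2}, \delta^{\odot 2}\rangle + (\text{cross terms})$ together with $\nabla\perp\mathbf 1$ after subtracting the mean lets us bound $\sum_i \nabla_i \langle A_i^{\odot 2},\delta^{\odot 2}\rangle \le \lambda(A)\|\delta^{\odot 2}\|_2 \le \lambda(A)\|\delta\|_\infty\|\delta\|_2$, which for small steps is $O(\lambda(A)\cdot L\cdot\|\delta\|_2)$; summing this over all $O(n/L^2)$ steps — being careful because the number of active coordinates only decreases — gives a total light-row increase of $O(\eta\lambda(A))$ up to log factors. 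The light rows also never get heavy without being frozen-adjacent in the relevant sense because their total mass is bounded; here I would use the softmax bound from \pref{lem:explain} and \pref{lem:taylor}, so the approximation error is $\log n/\eta$. Optimizing $\eta\sim\sqrt{\log n/\lambda(A)}$ yields $O(\sqrt{\lambda(A)\log n})$, and the $O(1)$ comes from \pref{obs:disc} and the residual freezing terms.

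For \pref{thm:prbf}, I would carry the \emph{same} heavy/light decomposition but use $\omega^*_{q,\eta}$ with $q$ chosen as in \pref{thm:spencergeneral} (so $q=1-1/\Theta(\log)$ is not needed; for square Beck-Fiala instances with column sparsity $s$ one takes $q$ a suitable constant) to handle the $\sqrt{s}$ part, because $\sum_{j\in F}A_{ij}^2 \le s$ gives the usual $\sqrt{s}$-type bound for the $\ell_q$-potential restricted to any fixed row. Running both potentials simultaneously and choosing at each step an update that simultaneously lies in the good 2-dimensional subspace for the light rows \emph{and} does not blow up the $\ell_q$-potential (the intersection is still at least $1$-dimensional, which suffices once we also impose orthogonality to $x$ — or we start from a 3-dimensional subspace to leave room) gives discrepancy $O(\sqrt s) + O(\sqrt{\lambda(A)\log n})$. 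Finally, when $\lambda(A)\le \log n$ we instead use the trivial orthogonality-only argument (a $\pm 1/0$ matrix with column sparsity $s$ has $\|A_i|_F\|_2^2$ equal to the support size, so the second-largest eigenvalue / $\lambda$ bound directly yields $O(\sqrt s + \lambda(A))$ via the same heavy/light split without the softmax error), giving the $\min(\sqrt{\lambda\log n},\lambda)$ in the statement.

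The main obstacle I anticipate is the global summation for the light-row potential: the per-step increase $O(\eta\lambda(A) L\|\delta\|_2)$ must be summed over a walk that can take $\Theta(n/L^2)$ steps, and naively this diverges as $L\to 0$. The resolution is that the relevant quantity is not $L$ but the change in $\|x\|_2^2$ (which is $L^2$ per non-freezing step and telescopes to at most $n$), so one should arrange the bound as $O(\eta\lambda(A))\cdot(\text{increment in }\|x\|_2^2)/(\text{something})$ and sum by parts exactly as in \pref{sec:overview} and \pref{eq:final}, possibly stratified by the current number $k$ of active coordinates. Getting the cross terms $\langle A_i,\delta\rangle^2$ versus $\langle A_i^{\odot 2},\delta^{\odot 2}\rangle$ to line up with the definition of $\lambda(A)$ — i.e. showing the off-diagonal contribution is lower-order or can be folded into the heavy-row handling — is the second delicate point, and I expect it to require using that $\delta$ is supported on $F$ and that we have already projected away the heavy rows.
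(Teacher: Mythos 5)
Your high-level skeleton (each row has a ``large active mass'' phase and a ``small active mass'' phase; use an LRR17-style softmax with $\eta\approx\sqrt{\log n/\lambda}$ for the second phase) points in the right direction, but the two mechanisms you propose to implement it both break in exactly the regime the theorem is about. First, the heavy/light split with ``be orthogonal to all heavy rows'': to have the light-row potential deliver $O(\sqrt{\lambda\log n})$, the heaviness threshold must be $\tau=\Theta(\lambda)$ (rows must stop accumulating discrepancy before their active mass drops to $O(\lambda)$, since the softmax pays $\eta\cdot(\text{active mass at entry})+\log n/\eta$). But your counting bound gives at most $|F|/\tau$ heavy rows, and in the interesting regime $\lambda\ll 1$ (e.g.\ $\lambda=O(1/\log n)$, where the theorem beats Banaszczyk) this is $|F|/\Theta(\lambda)\gg|F|$, so you cannot afford orthogonality to the heavy rows; with a constant threshold instead, the counting works but the light rows enter with constant mass and you only recover $O(\sqrt{\log n})$. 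The paper never zeroes out the discrepancy of large-mass rows: it uses Potukuchi's \pref{lem:expander} (via \pref{claim:easypseudo}) to show that, apart from $|F|/16$ exceptional rows, every row whose active mass still exceeds $8\lambda$ has active mass at most $\tfrac{2|F|}{n}\sum_j A_{ij}^2$, and then runs a Spencer-type $\ell_{1/2}$-regularized analysis \emph{grouped by total row mass} (the potentials $\Phi_{r,t}$, with per-scale dimension budgets $k_r$ and the weighted-combination trick of \pref{lem:first} to handle all first-order terms with one sign choice), so that each row gains only $O(1)$ discrepancy during its pseudorandom phase. This machinery is the heart of the proof and is absent from your proposal.

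Second, your light-row bound misuses $\lambda$: writing $\sum_i\nabla_i\langle A_i^{\odot 2},\delta^{\odot 2}\rangle$ and invoking the spectral definition requires splitting $\delta^{\odot 2}$ into its mean component along $\mathbf 1$ plus an orthogonal part, and the mean component contributes $\tfrac{\|\delta\|_2^2}{n}\sum_i\nabla_i\sum_j A_{ij}^2$, which is uncontrolled for Koml\'os instances (a single row may have total mass up to $n$). Even the $\lambda$-part, $\eta\lambda\|\delta\|_\infty\|\delta\|_2\approx\eta\lambda L^2$ per step, sums over the walk to $\eta\lambda n$, not $\eta\lambda$; there is no $1/k$-type decay here, so the summation-by-parts rescue you suggest does not apply. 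The paper instead makes the small-row potential \emph{non-increasing}: it tracks $\smax_\eta$ of $\langle B_i,x-x(t_i)\rangle-K\eta\sum_j B_{ij}^2(x_j-x_j(t_i))^2$, where $B$ is a thresholded copy of $A$; the subtracted quadratic term self-amortizes the second-order increase (cross terms are killed at codimension $k/4$ by \pref{lem:quadraticsub}), and its total size is $O(\eta\lambda)$ per row precisely because the active mass at time $t_i$ is $O(\lambda)$, with the thresholding error bounded by $O(\sqrt{\lambda\log n})$ in \pref{lem:other}. Finally, for Beck-Fiala your claim $\sum_{j\in F}A_{ij}^2\le s$ is false (column sparsity does not bound row mass), so the extra $\ell_q$-potential cannot be justified this way; in the paper the $\sqrt s$ term comes simply from applying the Koml\'os result to $A/\sqrt s$, and the $\lambda$ alternative in the $\min$ comes from noting that after time $t_i$ a rescaled row has only $O(s\lambda(A)/s)$ nonzero active entries, so its residual $\ell_1$ mass, hence residual discrepancy, is $O(\lambda(A))$ after undoing the rescaling.
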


As immediate corollaries, \pref{thm:prkomlos} implies \Komlos conjecture when $\lambda(A)=O(\frac{1}{\log n})$ and \pref{thm:prbf} implies Beck-Fiala conjecture when $\lambda(A)=O(\sqrt{s}+\frac{s}{\log n})$. This strictly improves \pref{thm:potu} for Beck-Fiala instances in the regime $\lambda(A)\in[O(\log n),O(s)]$. Furthermore, to the best of the authors' knowledge, \pref{thm:prkomlos} is the first result for pseudorandom instances of \Komlos conjecture.

\begin{remark}
    The mere column-sparsity assumption in \pref{thm:prbf} does not suffice to ensure that $\lambda(A)\le s$.\footnote{For example, consider a vector $v$ with half $+1$ and half $-1$ entries, take the first row of $A$ to be $v$ and fill the other rows with zeros. $A$ has one nonzero entry per column but $\| Av\|_2=\sqrt{n}\| v\|_2$.} However, as we will see (\pref{rem:bana}), we can essentially replace $\lambda(A)$ by $\min(\lambda(A),s)$ in the analysis. In this sense, our algorithm also matches Banaszczyk's bound.
\end{remark}

\subsection{Proof strategy and notations}
\label{sec:strategy}

We start by giving some idea of how we will use the fact that $\lambda(A)$ is small in our discrepancy framework. The main insight of \cite{P20} is that at any point in time in a discrepancy walk, we can control the $\ell_2$-mass restricted to active coordinates of all rows simultaneously.

\begin{lemma}[Lemma 2.3 in \cite{P20}]
	\label{lem:expander}
	Let $A\in \R^{n\times n}$ and $F\subseteq [n]$ be of size $k$. Then, for any constant $D>0$, there exists a subset $S\subseteq [n]$ such that $|S|\le k/D^2$ and for any $i\notin S$,
	\[
		\sum_{j\in F} A_{ij}^2\le \frac{k}{n}\sum_{j=1}^n A_{ij}^2+ D\lambda(A)\mper
	\]
\end{lemma}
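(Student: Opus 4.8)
\textbf{Proof plan for Lemma~\ref{lem:expander}.} The statement is purely spectral, so the plan is to extract $S$ as the set of ``bad'' rows — those whose $\ell_2$-mass on $F$ exceeds the claimed threshold — and then bound $|S|$ by an averaging/spectral argument using the definition of $\lambda(A)$. Concretely, write $B_{ij}\defeq A_{ij}^2$ and let $\mathbf{1}_F\in\R^n$ denote the indicator vector of $F$, so that $(B\mathbf{1}_F)_i=\sum_{j\in F}A_{ij}^2$ is exactly the quantity we want to control, while $\frac{k}{n}(B\mathbf{1})_i=\frac{k}{n}\sum_{j=1}^n A_{ij}^2$ is the ``expected'' value. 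The key point is that $\mathbf{1}_F-\frac{k}{n}\mathbf{1}$ is orthogonal to $\mathbf{1}$ and has $\ell_2$-norm $\sqrt{k(1-k/n)}\le\sqrt{k}$, so by the definition of $\lambda(A)$ we get $\bigl\|B\bigl(\mathbf{1}_F-\frac{k}{n}\mathbf{1}\bigr)\bigr\|_2\le\lambda(A)\sqrt{k}$.

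Next I would define $S\defeq\bigl\{i\in[n] : (B\mathbf{1}_F)_i-\frac{k}{n}(B\mathbf{1})_i> D\lambda(A)\bigr\}$, which immediately gives the desired pointwise bound for every $i\notin S$. To bound $|S|$, observe that each $i\in S$ contributes more than $\bigl(D\lambda(A)\bigr)^2$ to $\bigl\|B\bigl(\mathbf{1}_F-\frac{k}{n}\mathbf{1}\bigr)\bigr\|_2^2$, hence
\[
    |S|\cdot\bigl(D\lambda(A)\bigr)^2 < \Bigl\|B\Bigl(\mathbf{1}_F-\tfrac{k}{n}\mathbf{1}\Bigr)\Bigr\|_2^2 \le \lambda(A)^2 k\mcom
\]
so $|S|< k/D^2$, as claimed. (One should double-check the edge case $\lambda(A)=0$, where the inequality $\sum_{j\in F}A_{ij}^2\le\frac{k}{n}\sum_j A_{ij}^2$ must hold for all rows; indeed $\lambda(A)=0$ forces every row of $B$ to be a scalar multiple of $\mathbf{1}$, so the per-row bound holds with equality and $S=\varnothing$.)

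I do not anticipate a genuine obstacle here — the whole lemma is a one-line consequence of applying the definition of $\lambda(A)$ to the mean-zero vector $\mathbf{1}_F-\frac{k}{n}\mathbf{1}$ and then a Markov-type counting step. The only mild subtlety is being careful that the ``centering'' is done with respect to $\mathbf{1}$ (not $\mathbf{1}_F$), since $\lambda(A)$ only controls $B$ on the hyperplane $\langle u,\mathbf{1}\rangle=0$; this is exactly why the term $\frac{k}{n}\sum_j A_{ij}^2$ appears in the statement rather than a cruder bound. Everything else is routine.
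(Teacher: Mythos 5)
Your proposal is correct and matches the paper's proof: the test vector $\mathbf{1}_F-\frac{k}{n}\mathbf{1}$ is exactly the vector $u$ used in the paper (with $u_j=1-\frac{k}{n}$ on $F$ and $-\frac{k}{n}$ off $F$), and your Markov-type counting step is the paper's ``simple counting argument'' spelled out. No gaps.
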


Intuitively, the term $\frac{k}{n}\sum_{j\le n} A_{ij}^2$ corresponds to the $\ell_2$-squared-mass we would expect the row $A_i$ to have if the set of active coordinates $F$ were picked at random. The parameter $\lambda(A)$ gives a bound on the deviation from this random behavior. In particular, the $\ell_2$-mass of a row essentially decreases as in the average case as long as it is $\Omega(\sqrt{\lambda(A)})$. We recall the proof of \pref{lem:expander} for completeness.

\begin{proof}
    Let us denote $B_{ij}\defeq A_{ij}^2$ for all $i,j\in [n]$. If $u\in\R^n$ is a vector orthogonal to $\bf{1}$, then by definition we have $\sum_i \langle B_i,u\rangle^2\le \lambda(A)^2\|u\|_2^2$. Consider $u\in\R^n$ such that $u_j\defeq 1-\frac{k}{n}$ if $j\in F$ and $u_j\defeq -\frac{k}{n}$ if $j\notin F$. Then $u$ is orthogonal to $\bf 1$ and $\|u\|_2^2\le k$. Hence,
    \[
        \sum_{i=1}^n \left(\sum_{j\in F} A_{ij}^2-\frac{k}{n}\sum_{j=1}^n A_{ij}^2\right)^2\le \lambda(A)^2 k\mper
    \]
    The result follows from a simple counting argument.
\end{proof}

\paragraph{Roadmap of the proof.} In order to prove \pref{thm:prkomlos} and \pref{thm:prbf}, we will track two different types of potential functions, depending on which of the main term or the error term in \pref{lem:expander} dominates.  \pref{sec:pseudorandom} will be devoted to bounding the discrepancy incurred in the regime where the row mass decreases as if the input were random. The analysis here will mirror our proof of Spencer's theorem. In \pref{sec:small}, we will consider the case where the error term dominates. There we leverage the fact that the row mass has become small. In this setting, we will use a potential function that was introduced in \cite[Appendix B]{LRR17} to recover Banaszczyk's bound with the multiplicative weights update method.

Before starting the proof, we introduce some useful concepts and notations. From now on, we fix a matrix $A\in \R^{n\times n}$ with column $\ell_2$-norm bounded by 1. With the context being clear, we will write $\lambda\defeq \lambda(A)$. Our algorithm will follow the structure of the meta-algorithm \pref{alg:iterative}, therefore we will use our usual notations: $x(t)$ for the coloring at time $t$, $F(t)$ for the active coordinates at time $t$, etc.

\begin{definition}
    For each row $i\in [n]$, we define 
    \[
        t_i\defeq \min\left\{t\ge 0 : \sum_{j\in F(t)} A_{ij}^2\le 8\lambda\right\}\mper
    \]
\end{definition}

    In words, $t_i$ represents the time at which the $i$-th row stops behaving as if it were random. We will write $P(t)\defeq \{i\in [n] : t\le t_i\}$ for the set of ``pseudorandom'' rows at time $t$. The following observation explains what we mean by ``pseudorandom'' -- we can pretend as if the freezing process decreases linearly the $\ell_2$-mass of the rows. It is an easy consequence of \pref{lem:expander}.

\begin{claim}
    \label{claim:easypseudo}
    Fix any time step $t\ge 0$. There exists a subset of rows $I=I(t)\subseteq [n]$ such that $|I|\le \frac{|F(t)|}{16}$ and for any $i\in P(t),i\notin I$:
    \[
        \sum_{j\in F(t)} A_{ij}^2\le \frac{2|F(t)|}{n}\sum_{j=1}^n A_{ij}^2\mper
    \]
\end{claim}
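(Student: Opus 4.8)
The plan is to apply Lemma~\ref{lem:expander} with a well-chosen constant $D$ and then translate the resulting bound on the deviation into the claimed multiplicative bound, using the definition of $t_i$ to handle the ``main term versus error term'' dichotomy.

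\medskip

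\textbf{Step 1: Invoke the expander lemma.} Fix a time step $t$ and set $k\defeq |F(t)|$. Apply Lemma~\ref{lem:expander} with $D\defeq 4$ (or any sufficiently large universal constant). This yields a set $S=S(t)\subseteq [n]$ with $|S|\le k/D^2 = k/16$ such that for every $i\notin S$,
\[
    \sum_{j\in F(t)} A_{ij}^2 \le \frac{k}{n}\sum_{j=1}^n A_{ij}^2 + 4\lambda\mper
\]
Set $I\defeq S$, so immediately $|I|\le k/16$, which gives the cardinality bound in the claim.

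\medskip

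\textbf{Step 2: Use pseudorandomness to absorb the error term.} Now take any $i\in P(t)$ with $i\notin I$. By definition of $P(t)$ we have $t\le t_i$, which means that at time $t$ the row has not yet stopped behaving randomly; unpacking the definition of $t_i$ as the \emph{first} time the active $\ell_2$-mass drops to $8\lambda$, and using that this mass is non-increasing in $t$ (freezing only removes coordinates), we get
\[
    \sum_{j\in F(t)} A_{ij}^2 > 8\lambda\mper
\]
Combining this with Step~1, the error term $4\lambda$ is at most half of the left-hand side, so
\[
    \sum_{j\in F(t)} A_{ij}^2 \le \frac{k}{n}\sum_{j=1}^n A_{ij}^2 + 4\lambda \le \frac{k}{n}\sum_{j=1}^n A_{ij}^2 + \frac{1}{2}\sum_{j\in F(t)} A_{ij}^2\mper
\]
Rearranging gives $\tfrac12\sum_{j\in F(t)} A_{ij}^2 \le \tfrac{k}{n}\sum_{j=1}^n A_{ij}^2$, i.e.\ $\sum_{j\in F(t)} A_{ij}^2 \le \tfrac{2k}{n}\sum_{j=1}^n A_{ij}^2 = \tfrac{2|F(t)|}{n}\sum_{j=1}^n A_{ij}^2$, which is exactly the inequality asserted in the claim. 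One caveat: if $\sum_{j=1}^n A_{ij}^2$ is very small compared to $\lambda$, the inequality $\sum_{j\in F(t)}A_{ij}^2>8\lambda$ could be impossible for $t\le t_i$ — but in that degenerate case $P(t)$ simply may not contain $i$ for $t$ beyond the start, and in any case the chain above only requires the strict lower bound $8\lambda$, which is precisely guaranteed by $i\in P(t)$ and monotonicity; so no extra case analysis is needed.

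\medskip

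\textbf{Main obstacle.} There is essentially no deep obstacle here — the claim is a direct corollary of Lemma~\ref{lem:expander}, and the only thing to be careful about is the bookkeeping: making sure the constant $8\lambda$ in the definition of $t_i$ is at least twice the error term $D\lambda$ coming from the expander lemma (so one needs $D\le 4$, and $D=4$ with $8 = 2D$ works on the nose, giving $|I|\le k/16$ as stated), and confirming monotonicity of $\sum_{j\in F(t)}A_{ij}^2$ in $t$ so that $t\le t_i$ genuinely forces the active mass to exceed $8\lambda$. Both are routine, so the proof is short.
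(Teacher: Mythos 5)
Your proof is correct and is precisely the argument the paper intends: the paper gives no proof beyond calling the claim ``an easy consequence'' of \pref{lem:expander}, and your instantiation with $D=4$ (so $|I|\le |F(t)|/16$, with the additive error $4\lambda$ absorbed by half of the active mass of any row that still exceeds $8\lambda$) is exactly that consequence. The only caveat is the boundary time $t=t_i$: since the paper defines $P(t)=\{i : t\le t_i\}$, a row with $t=t_i$ belongs to $P(t)$ even though its active mass is already $\le 8\lambda$, so your statement that $i\in P(t)$ ``precisely guarantees'' the strict bound $>8\lambda$ is off by one — an imprecision inherited from the paper's own definitions (harmless if $P(t)$ is read with strict inequality) rather than a defect of your approach.
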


For any row $i\in [n]$, we will track separately the contributions to its discrepancy for $t\le t_i$ and $t>t_i$.

\paragraph{Random regime.} Similarly to \cite{P20}, we group together the rows that have similar \textit{total} $\ell_2$-mass. For any $r\in \{1,\ldots, \lceil\log_2 n\rceil\}$, let
\[
    R_r\defeq \left\{i\in [n] : \sum_{j=1}^n A_{ij}^2\in (2^{r-1},2^r]\right\}\mper
\]
We also consider $R_0\defeq \{i\in [n] : \sum_{j=1}^n A_{ij}^2\le 1$\}. An easy double counting argument bounds the size of each $R_r$: $|R_r|\le n 2^{1-r}$ for any $r\le \lceil \log_2 n\rceil$. Our strategy will be to play several Spencer's games in parallel (restricted to the rows in $R_r$, for each value of $r$) and carefully allocate some ``effective dimension'' to each of them at any step of the walk.

We now define $\pi_{r,t}:\R^n\to \R^n$ to be a projection to the coordinates of $R_r$ of the discrepancy of the rows that behave pseudorandomly. Once $t>t_i$, we keep tracking in the $i$-th row the same value $\langle A_i,x(t_i)\rangle$. In short, for any $x\in\R^n$ and $i\in [n]$,
\[
    (\pi_{r,t}(x))_i=\begin{cases}
        0\quad &\text{if $i\notin R_r$}\\
        \langle A_i,x(t_i)\rangle\quad &\text{if $i\in R_r$ but $i\notin P(t)$}\\
        \langle A_i,x\rangle \quad&\text{if $i\in R_r\cap P(t)$}
    \end{cases}
\]
Now we are able to define our potential functions in the random regime. For any $r=0,\ldots, \lceil \log_2 n\rceil$, let
\[
    \Phi_{r,t}(x)\defeq \omega^*_{\frac{1}{2},\sqrt{n2^{1-r}}}(\pi_{r,t}(x))\mper
\]
The choice of $\eta_r=\sqrt{n2^{1-r}}$ as regularization parameter can be justified by the fact that there are at most $n2^{1-r}$ rows in the $r$-th group -- so this is essentially the smallest value of $\eta_r$ that makes the additive approximation error of the regularized maximum $O(1)$ (which is our target discrepancy in this regime).

Our main lemma states that it is possible to design $\oracle$ with a bit of slack in the dimension requirements, in such a way that all the potential functions $\Phi_r$ only pay a constant amortized increase over the duration of the walk.

\begin{lemma}
    \label{lem:pseudo}
    There is a construction of $\oracle(A,x(t))$ that always returns a subspace of codimension at most $\frac{k}{4}+O(1)$ (with $k$ being the number of active coordinates of $x(t)$), such that for any $r=0,\ldots, \lceil \log_2 n\rceil$, $\Phi_{r,T}(x(T))\le O(1)$.
\end{lemma}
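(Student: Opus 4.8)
The plan is to construct $\oracle(A,x(t))$ as an intersection of halfspaces that simultaneously controls all of the potentials $\Phi_{r,t}$, and then run the amortized analysis of \pref{sec:spencereasy} in parallel across the groups $R_0,\ldots,R_{\lceil\log_2 n\rceil}$. At a given step with $k=|F(t)|$ active coordinates, I would first invoke \pref{claim:easypseudo} to discard a set $I$ of at most $k/16$ rows, so that every remaining pseudorandom row $i\in R_r\cap P(t)\setminus I$ satisfies $\sum_{j\in F(t)}A_{ij}^2\le \frac{2k}{n}\sum_j A_{ij}^2\le \frac{2k}{n}\cdot 2^r=k\cdot 2^{1-r}\cdot\frac{1}{n}\cdot n=2^{1-r}k/n\cdot n$; more cleanly, this mass is at most $\frac{2k}{n}2^r$, which is exactly $k/\eta_r^2\cdot(\text{const})$ given the choice $\eta_r=\sqrt{n2^{1-r}}$. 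The point of this particular choice of $\eta_r$ is that, after rescaling each row of $R_r$ by $\eta_r/\sqrt{k}$, the rescaled restricted rows are (up to a constant) unit vectors in $\R^k$, which is precisely the hypothesis of \pref{lem:spencerlem}.

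The key computation is the per-step increase of each $\Phi_{r,t}$. By \pref{lem:taylor} applied to $\omega^*_{1/2,\eta_r}$ with update $\delta$ supported on $F(t)$ (and $\|\delta\|_\infty\le L$ chosen small enough, $L=n^{-O(1)}$, so the Taylor bound is valid simultaneously for all $r$ since $\eta_r\le\sqrt{2n}$), there is $\nabla^{(r)}\in\Delta_n$ with
\[
    \Phi_{r,t}(x(t)+\delta)-\Phi_{r,t}(x(t))\le \langle (\pi_{r,t})'^\top\nabla^{(r)},\delta\rangle+\frac{2\eta_r}{1}\sum_{i\in R_r\cap P(t)}(\nabla^{(r)}_i)^{3/2}\langle A_i,\delta\rangle^2,
\]
where the first-order term is a linear form in $\delta$ (note $\pi_{r,t}$ is affine-linear in $x$ on the pseudorandom rows and constant on the rest). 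For each $r$, I would apply \pref{lem:spencerlem} to the vectors $u_i$ obtained by normalizing the restricted rows $(A_{ij})_{j\in F(t)\setminus\text{supp of frozen}}$, $i\in R_r\cap P(t)\setminus I$, after scaling $\nabla^{(r)}$ appropriately; this yields a $2$-dimensional subspace $S_r\subseteq\R^{F(t)}$ avoiding the directions $u_i$ for the $\lceil k'/2\rceil-1$ largest $\nabla^{(r)}_i$ (here $k'=k-|I|$), on which the quadratic form is at most $8(k')^{-1/2}\|\delta\|_2^2$ times the mass normalization, which after unwinding the $\eta_r$ and mass factors gives a quadratic increase of $O(1/\sqrt{k})\cdot\|\delta\|_2^2$ — the same bound as in Spencer's proof, independent of $r$. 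To handle all $r$ at once, $\oracle(A,x(t))$ returns the intersection $\bigcap_r (\text{span of the }\lceil k'/2\rceil-1\text{ largest }u_i\text{'s for }R_r)^\perp$, further intersected with $\{\langle\delta,x(t)\rangle=0\}$ and with the $O(\log n)$ halfspaces $\{u_r(\delta)\le 0\}$ killing the first-order terms; since each orthogonality constraint removes at most $\lceil k'/2\rceil-1$ dimensions and there are $\lceil\log_2 n\rceil+1$ groups, the codimension is at most... no — this naively gives $\Theta(k\log n)$, which is too much. The fix, exactly as suggested by the phrase ``a bit of slack in the dimension requirements,'' is that in group $R_r$ we only need to avoid the rows whose restricted mass is large, i.e. at least $c\cdot 2^{1-r}$; since $|R_r|\le n2^{1-r}$ and the total mass $\sum_{i}\sum_{j\in F}A_{ij}^2\le k$ (columns have unit norm), only $O(k/2^{1-r}\cdot\text{?})$... the right accounting is that the rows we must orthogonalize against across \emph{all} $r$ have disjoint-ish mass budgets summing to $k$, so a weighted/geometric-series argument shows the total number of orthogonality constraints is $k/4+O(1)$ rather than $k/2$ per group. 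Concretely: spend fraction $2^{-r-2}$ of the available dimension budget on group $r$, so $\sum_r 2^{-r-2}k\le k/4$; \pref{lem:spencerlem}-type averaging still works because within $R_r$ the $2^{-r-2}k$-th largest $\nabla^{(r)}_i$ is $O(2^{r}/k)$ which is what is needed after the $\eta_r$ rescaling. This requires re-deriving \pref{lem:spencerlem} with the cutoff at a general position $\beta k$ rather than $k/2$, giving quadratic form $O(\beta^{-1}k^{-1/2})$ — absorbed into the $O(\cdot)$ since $\beta=2^{-r-2}$ contributes a factor $2^{r+2}$ that cancels against the $\eta_r^2/k = 2^{1-r}n/k\cdot$(mass $\le 2^r k/n$) $=O(1)$ bookkeeping. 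After these per-step bounds, the global analysis is verbatim \pref{sec:spencereasy}: let $\beta^{(r)}_k$ be the total $\ell_2^2$ of updates while $\le k$ coordinates are active; since updates are orthogonal to $x(t)$, $\beta^{(r)}_k\le k$, and summation by parts against the $k^{-1/2}$ weights gives total quadratic increase $O(\sqrt n)\cdot(\text{mass-weighting})$, which combined with $\eta_r$ and \pref{lem:explain} (additive error $\frac{|R_r|^{1/2}}{\eta_r/2}\le\frac{(n2^{1-r})^{1/2}}{\sqrt{n2^{1-r}}/2}=2=O(1)$) yields $\Phi_{r,T}(x(T))\le O(1)$ for each $r$.

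\textbf{Main obstacle.} The crux is the dimension budgeting: a black-box parallel application of \pref{lem:spencerlem} to the $\Theta(\log n)$ groups destroys $\Theta(k\log n)$ dimensions, far exceeding the $k/4+O(1)$ claimed (and indeed more than the $k$ dimensions available). Making this work requires exploiting that the groups $R_r$ have geometrically decaying size \emph{and} that the relevant ``expensive'' rows in each group carry a proportional share of the total $\ell_2$-mass budget $k$, so that a geometric allocation $\beta^{(r)}\propto 2^{-r}$ of the dimension suffices while the correspondingly weaker per-group spectral bound from a generalized \pref{lem:spencerlem} (cutoff at $\beta^{(r)}k$ instead of $k/2$) still telescopes to $O(1)$ because the $2^r$ penalty cancels against $\eta_r^2\sim n2^{-r}$ and the normalized mass $\sim 2^r k/n$. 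Verifying that these three factors cancel exactly — and that the resulting subspace is still of codimension $k/4+O(1)$ and satisfies \pref{ass:oracle} (its intersection with any halfspace contains a half-line, which holds since it is a linear subspace of dimension $\ge 2$ intersected with finitely many closed halfspaces through the origin) — is the technical heart of the argument; everything downstream is a rerun of the Spencer analysis.
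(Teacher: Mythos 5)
Your overall architecture (per-group Taylor bound with $\eta_r=\sqrt{n2^{1-r}}$, \pref{claim:easypseudo} to control restricted row mass, a Spencer-type averaging per group, summation by parts, and the $O(1)$ approximation error) matches the paper, and you correctly identified that the whole difficulty is the dimension budgeting across the $\Theta(\log n)$ groups. But the budgeting you propose is the wrong way around, and the cancellation you hope for does not happen. Allocating codimension $k_r=2^{-r-2}k$ to group $r$ means the gradient/eigenvalue cutoff sits at position $\beta_r k$ with $\beta_r=2^{-r-2}$, and the generalized \pref{lem:spencerlem} bound scales like $(\beta_r k)^{q-2}=(\beta_r k)^{-3/2}$, not $\beta_r^{-1}k^{-1/2}$. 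Carrying this through the Taylor bound, the per-step quadratic increase of $\Phi_{r,t}$ is $\approx \eta_r\,(\beta_r k)^{-1/2}\cdot\frac{k2^r/n}{\beta_r k}\,\|\delta\|_2^2 \approx \frac{2^{2r}}{\sqrt{nk}}\|\delta\|_2^2$: your ``$\eta_r^2/k\cdot\text{mass}=O(1)$'' bookkeeping is exactly an $O(1)$ and leaves nothing to absorb the $\beta_r^{-3/2}=2^{\Theta(r)}$ penalty. Summing over the window in which group $r$ is active ($k\lesssim n2^{-r}$, since for larger $k$ those rows must be fully orthogonalized, which you also need to budget) gives total increase $\approx \frac{2^{2r}}{\sqrt n}\sqrt{n2^{-r}}=2^{3r/2}$, which is polynomial in $n$ for large $r$ rather than $O(1)$. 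The paper's allocation (\pref{lem:both}) is the opposite: group $r$ gets $k_r= C_1(k2^r/n)^{\varepsilon}k$, i.e.\ geometrically \emph{more} budget to the largest active $r$ (the groups near the threshold $R_0=\lceil\log_2(32n/k)\rceil$), with $\sum_{r\le R_0}k_r\le k/8$ because the series is dominated by its top term; this yields per-step increase $\frac1k(k2^r/n)^{(1-3\varepsilon)/2}\|\delta\|_2^2$, and then $(2^r/n)^{\Theta(\varepsilon)}\sum_{k\le 64n2^{-r}}\frac{\beta_k-\beta_{k-1}}{k^{1-\Theta(\varepsilon)}}=O(1)$ for every $r$ simultaneously.

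A second gap is the first-order terms. Intersecting the update set with the $O(\log n)$ halfspaces $\{u_r(\delta)\le 0\}$ is not viable: on the low-dimensional subspace you have left, the intersection of that many homogeneous halfspaces can be $\{0\}$ (already three half-planes through the origin can intersect trivially in $\R^2$), and replacing them by orthogonality constraints costs $\Theta(\log n)$ codimension, which exceeds the $k/4+O(1)$ budget once $k$ is small. The paper's \pref{lem:first} resolves this with one constraint only: choose the sign of $\delta$ so that the single weighted combination $\sum_{r\le R_0}2^{-\varepsilon r}u_r(\delta)\le 0$, and then use that each regularized remainder $\Phi_{r,t}(x+\delta)-\Phi_{r,t}(x)-u_r(\delta)$ is controlled (indeed nonnegative up to the quadratic bound) to transfer the inequality to every $r$ individually at the cost of a $2^{\varepsilon r}$ factor, which the summation-by-parts step absorbs. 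Without both of these fixes the claimed conclusion $\Phi_{r,T}(x(T))\le O(1)$ for all $r$ does not follow from your construction.
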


\paragraph{Small row regime.} Set $\eta\defeq \sqrt{\frac{\log n}{\lambda}}$ (it will be the parameter of the regularizer in this regime). We define $B\in\R^{n\times n}$ to be the following thresholded version of $A$: for any $i,j\in [n]$, let $B_{ij}\defeq A_{ij}$ if $A_{ij}^2\le \frac{1}{16K^2\eta^2}$ and $B_{ij}\defeq 0$ otherwise. We will later see in \pref{sec:together} that it is sufficient to monitor the discrepancy of $B$ instead of $A$.

Recall that at this point of the walk, each row will have effective $\ell_2$-mass $O(\lambda)$. For algorithms based on orthogonality constraints, there is not much difference between having bounded rows and bounded columns, so this justifies patching an algorithm for Banaszczyk's setting at this point. We implement the potential function from \cite{LRR17} in our regularization framework.

Let $\pi'_{t}:\R^n\to\R^n$ to be such that for any $x\in\R^n$ and $i\in [n]$,
\[
    (\pi'_{t}(x))_i=\begin{cases}
        \langle B_i,x-x(t_i)\rangle-K\eta\sum_{j=1}^n B_{ij}^2(x_j-x_j(t_i))^2\quad &\text{if $i\notin P(t)$}\\
        0 \quad&\text{if $i\in P(t)$}
    \end{cases}
\]
for some constant $K>0$ that we will fix in the proof. Finally, we define our potential function for this regime:
\[
    \Psi_t(x)\defeq \smax_\eta(\pi'_t(x))\mper
\]
Our main lemma states we can make this potential non-increasing (and moreover the subspace dimension necessary for that allows some slack).

\begin{lemma}
    \label{lem:small}
    There is a construction of $\oracle(A,x(t))$ that always returns a subspace of codimension at most $\frac{k}{4}+O(1)$ (where $k$ is the number of active coordinates of $x(t)$), such that $\Psi_T(x(T))\le \Psi_0(x(0))$.
\end{lemma}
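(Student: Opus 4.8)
The plan is to mimic the analysis of Spencer's theorem (\pref{sec:spencereasy}), but now tracking the entropic potential $\Psi_t$ along the \textit{discretized} walk of \pref{alg:iterative}. The key structural feature is that $\pi'_t$ is designed so that $\smax_\eta$ applied to it is non-increasing: the correction term $-K\eta\sum_j B_{ij}^2(x_j-x_j(t_i))^2$ in the definition of $(\pi'_t)_i$ is exactly what one needs to absorb the second-order term in \pref{lem:taylor} for $\smax_\eta$. So the oracle must return a subspace on which (a) the update is negatively correlated with the gradient $\nabla\smax_\eta(\pi'_t(x))$ restricted to the relevant rows, and (b) the net contribution of the second-order Hessian term, \emph{after} subtracting the contribution coming from the change in the correction term, is non-positive.

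First I would set up the per-step inequality. Fix a time step $t$ with $k=|F(t)|$ active coordinates, write $x=x(t)$, $\nabla=\nabla\smax_\eta(\pi'_t(x))\in\Delta_n$, and take any update $\delta$ with $\|\delta\|_\infty\le L\le\frac{1}{3\eta}$ (enforced by choosing $L=n^{-O(1)}$ small enough, using $\|B\delta\|_\infty$-type bounds as in \pref{sec:spencereasy}; note the thresholding $B_{ij}^2\le\frac{1}{16K^2\eta^2}$ is precisely what keeps the per-entry changes controlled). Applying \pref{lem:taylor} for $\smax_\eta$ and then tracking how $\pi'_t(x+\delta)$ differs from $\pi'_t(x)$: for a row $i\notin P(t)$, the linear part changes by $\langle B_i,\delta\rangle$ and the correction term changes by roughly $-2K\eta\sum_j B_{ij}^2(x_j-x_j(t_i))\delta_j$ plus a $-K\eta\sum_j B_{ij}^2\delta_j^2$ piece. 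The plan is to show that the genuinely second-order contributions combine into something of the form $\eta\sum_i\nabla_i\langle B_i,\delta\rangle^2 - K\eta\sum_i\nabla_i\sum_j B_{ij}^2\delta_j^2 + (\text{lower order})$, and that since each pseudorandom-regime row has $\sum_{j\in F(t)}B_{ij}^2 = O(\lambda)$ (by the definition of $t_i$, as these are rows with $i\notin P(t)$, i.e. $t>t_i$, whose mass on active coordinates stays $O(\lambda)$ — this is where the small-row hypothesis enters) and $\|\delta\|_2\le 1$, one has $\langle B_i,\delta\rangle^2\le\left(\sum_j B_{ij}^2\right)\|\delta\|_2^2 = O(\lambda)$, so Cauchy--Schwarz gives $\sum_i\nabla_i\langle B_i,\delta\rangle^2\le O(\lambda)$ while $\sum_i\nabla_i\sum_j B_{ij}^2\delta_j^2 \geq$ something comparable once $\delta$ is chosen to avoid the few ``bad'' rows of \pref{claim:easypseudo}/\pref{lem:expander}. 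Choosing $K$ a large enough universal constant then makes the whole bracket non-positive, so $\Psi_{t+1}(x(t+1))\le\Psi_t(x(t))$ after also using that $\pi'_{t+1}$ only zeroes out more rows (rows entering $P$) and dropping a row can only decrease $\smax_\eta$, and using the sign choice to kill the first-order term.

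The dimension accounting is the other half. The constraints imposed on $\delta$ are: $\langle\delta,x(t)\rangle=0$ (one constraint); the sign/halfspace constraint from the first-order term (no dimension loss, as in \pref{sec:spencereasy}); avoiding the $O(1)$-fraction of bad rows from \pref{claim:easypseudo} / the counting argument behind \pref{lem:expander} so that the $\ell_2$-mass bounds hold (this costs $|I|\le k/16$ dimensions, or similar, to make the rows orthogonal to $\delta$ — actually here we want $\delta$ supported on $F(t)$ and we restrict to a subspace where the remaining mass estimates go through); plus the $\smax$-Hessian averaging step analogous to \pref{lem:spencerlem} restricted to the relevant rows, which costs another fraction of the $k$ dimensions. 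Adding up, the codimension is at most $\frac{k}{4}+O(1)$ — exactly the slack the lemma allows, and exactly what is needed so that running this in parallel with \pref{lem:pseudo} (codimension $\le\frac k4+O(1)$) leaves a positive-dimensional subspace (codimension $\le\frac k2+O(1)<k$ for $k$ large) from which to pick the update, consistent with \pref{ass:oracle} and \pref{obs:comp}. Finally, since $\Psi$ is non-increasing, $\Psi_T(x(T))\le\Psi_0(x(0))$ is immediate.

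The main obstacle I expect is the bookkeeping in the per-step inequality: correctly expanding $\pi'_t(x+\delta)$ versus $\pi'_t(x)$, separating genuinely second-order terms from the cross terms $-2K\eta\sum_j B_{ij}^2(x_j-x_j(t_i))\delta_j$ (which are linear in $\delta$ and must be folded into the linear form $u(\delta)$ that the halfspace constraint handles), and verifying that the residual second-order terms are dominated by the $-K\eta\sum_i\nabla_i\sum_j B_{ij}^2\delta_j^2$ term for a universal $K$ — this requires a clean use of $\sum_{j\in F(t)}B_{ij}^2=O(\lambda)$ for $i\notin P(t)$ together with the averaging trick to handle the rows where this fails. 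Getting all the constants to line up so that the codimension is genuinely $\frac k4+O(1)$ rather than, say, $\frac k3$, is the delicate part, and it is what makes the two-potential argument in \pref{sec:together} go through.
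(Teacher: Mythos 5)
There is a genuine gap at the heart of your per-step argument: the mechanism by which the positive second-order term $\eta\sum_i\nabla_i\langle B_i,\delta\rangle^2$ gets dominated by the built-in negative term $-K\eta\sum_i\nabla_i\sum_j B_{ij}^2\delta_j^2$. You propose to bound the positive term via Cauchy--Schwarz using $\sum_{j\in F(t)}B_{ij}^2=O(\lambda)$ and then assert that the negative term is ``something comparable'' after avoiding the bad rows of \pref{claim:easypseudo}. That comparison is false in general: the negative term depends on where $\delta$ places its mass relative to the entries $B_{ij}^2$ of the heavy-gradient rows, and nothing forces it to be $\Omega(\lambda)\|\delta\|_2^2$ (it can be essentially zero while the Cauchy--Schwarz bound on the positive term is $\Theta(\lambda)\|\delta\|_2^2$); moreover, even if it held per step, a per-step increase of order $\eta\lambda\|\delta\|_2^2$ summed over the walk would be far too large, since the lemma needs $\Psi$ to be \emph{non-increasing}. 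What is actually needed --- and what the paper uses --- is a pointwise-in-$\delta$ domination on a large subspace, supplied by \pref{lem:quadraticsub} (the eigenspace/trace argument of \cite[Theorem 8]{BDG19}, \cite[Lemma 21]{LRR17}): restricting $\delta$ to the orthocomplement of the top $\alpha k$ eigenspace of the weighted, column-normalized Gram matrix gives $\sum_i\nabla_i(\sum_j B_{ij}\delta_j)^2\le\frac{1}{\alpha}\sum_i\nabla_i\sum_j B_{ij}^2\delta_j^2$ for \emph{every} $\delta$ in that subspace. Notably, the $\lambda$-smallness of rows with $i\notin P(t)$ plays no role in this local step in the paper; it enters only in \pref{sec:together}, to bound the magnitude of the correction term and $\Psi_0$.

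Two further, related issues. First, you fold the cross term $-2K\eta\sum_j B_{ij}^2(x_j-x_j(t_i))\delta_j$ entirely into the linear form, but its \emph{square} reappears in the smax second-order term $\eta\sum_i\nabla_i\Delta_i^2$ and is genuinely quadratic in $\delta$; it is not automatically ``lower order'' and requires a second application of \pref{lem:quadraticsub} together with the thresholding $B_{ij}^2\le\frac{1}{16K^2\eta^2}$ to be absorbed into the diagonal negative term (this is \pref{eq:cond2small} in the paper, and it is what forces the thresholded matrix $B$ and the eventual choice $K=9$). Second, the codimension accounting is not pinned down: the paper gets exactly $\frac{k}{8}+\frac{k}{8}+O(1)=\frac{k}{4}+O(1)$ from two applications of \pref{lem:quadraticsub} with $\alpha=\frac18$ plus $O(1)$ constraints killing the two linear forms, whereas your budget additionally spends $k/16$ on the rows of $I$ from \pref{claim:easypseudo} (which this lemma does not need) and leaves the ``another fraction of the $k$ dimensions'' for the Hessian step unspecified, so the claimed $\frac{k}{4}+O(1)$ is not actually established. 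The overall architecture (Taylor expansion of $\smax_\eta$, sign/orthogonality for the linear part, monotonicity when rows drop out of $P$, and the parallel-dimension budget with \pref{lem:pseudo}) matches the paper, but without the quadratic-form subspace lemma the central inequality does not go through.
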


\begin{remark}
    Some intuition for $\pi_t'$ comes from the fact that to get a coloring of discrepancy $\sqrt{\lambda\log n}$ for an input matrix with \textit{rows} of $\ell_2$-norm bounded by $\lambda$, the Chernoff and union bound argument suffices. If $v_1, \ldots, v_n$ are the rows of the matrix, it is essentially consisting in arguing that when $x\sim\{\pm 1\}^n$, it holds for any $\eta\ge 0$ that:
    \[
        \log \E \exp\left(\eta \max_{i\in [n]} |\langle v_i,x\rangle|\right)\le \log \E \sum_{i=1}^n \exp(\eta |\langle v_i,x\rangle|)\le \log \sum_{i=1}^n \exp\left(\frac{\eta^2\|v_i\|_2^2}{2}\right)\mper
    \]
    If we interpret $\|v_i\|_2^2$ as $\sum_{j\in [n]} v_{i,j}^2 x_j^2$, this might motivate us to look at the softmax of $\{\langle v_i,x\rangle-\eta \langle v_i^{\odot 2},x^{\odot 2}\rangle/2\}$.
\end{remark}

The plan for the rest of this section is as follows. First, we prove \pref{lem:pseudo} in \pref{sec:pseudorandom} and \pref{lem:small} in \pref{sec:small}. Then in \pref{sec:together} we show how to deduce \pref{thm:prkomlos} and \pref{thm:prbf}. Finally, we study the consequences of \pref{thm:prkomlos} and \pref{thm:prbf} for random instances in \pref{sec:corollary}.

\subsection{Discrepancy in the random regime}
\label{sec:pseudorandom}

Our main goal in this section is to prove \pref{lem:pseudo}. Throughout this discussion, we fix a small constant $\varepsilon\in (0,1/5)$. Our first lemma describes a construction of $\oracle$ that bounds locally the increase of the potential. This part is very similar in spirit to our proof of Spencer's theorem.

\begin{lemma}
    \label{lem:both}
    Let $x\defeq x(t)$ and $k=k(t)\defeq F(t)$. Let $R_0=\lceil\log_2(32n/k)\rceil$.  There exists a subspace $S=S(t)\subseteq F(t)$ such that $S$ has codimension at most $\frac{k}{4}$ and if $\delta\in S$ satisfies $\|\delta\|_\infty\le 1/\text{poly}(n)$,
    \[
        \Phi_{r,t}(x+\delta)-\Phi_{r,t}(x)\lesssim  u_r(\delta)+\frac{1}{k}\left(\frac{k2^r}{n}\right)^{\frac{1-3\varepsilon}{2}}\|\delta\|_2^2 \quad\text{ for any $r\le R_0$}
    \]
    and
    \[
        \Phi_{r,t}(x+\delta)=\Phi_{r,t}(x)\quad\text{for any $r>R_0$}
    \]
    where $\{u_r:r\le R_0\}$ are linear forms.
\end{lemma}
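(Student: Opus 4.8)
\textbf{Proof plan for \pref{lem:both}.}
The plan is to reproduce the structure of the Spencer argument (\pref{lem:spencerlem} combined with \pref{lem:taylor}), but now in parallel across all scale classes $R_r$, and to organize the dimension budget so that the total codimension of the subspace remains at most $k/4+O(1)$. First I would invoke \pref{lem:taylor} applied to $\omega^*_{\frac12,\eta_r}$ at the point $\pi_{r,t}(x)$: this gives, for each $r$, a gradient $\nabla^{(r)}\in\Delta_n$ supported on $R_r\cap P(t)$ and a quadratic bound on $\Phi_{r,t}(x+\delta)-\Phi_{r,t}(x)$ whose second-order term is $\frac{\eta_r}{1-q}\sum_{i\in R_r\cap P(t)} (\nabla^{(r)}_i)^{3/2}\langle A_i,\delta\rangle^2$ (using $q=1/2$, so the exponent $2-q=3/2$), provided $\|\delta\|_\infty$ is at most $1/\mathrm{poly}(n)$; the hypothesis $\|\delta\|_\infty\le 1/\mathrm{poly}(n)$ together with $\|A\delta\|_\infty\le n\|\delta\|_\infty$ takes care of the domain condition of \pref{lem:taylor}. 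The linear part becomes the $u_r(\delta)$ in the statement. For $r>R_0=\lceil\log_2(32n/k)\rceil$, I would observe that $|R_r|\le n2^{1-r}$, and that when $2^{r}>32n/k$ the scale class is so small relative to the number of active coordinates that one can afford to make $S$ orthogonal to \emph{all} of the (at most $n2^{1-r}\le k/16$) rows of $R_r$ that are still pseudorandom — forcing $\langle A_i,\delta\rangle=0$ for every $i$ contributing to $\Phi_{r,t}$, hence $\Phi_{r,t}(x+\delta)=\Phi_{r,t}(x)$. Summing $n2^{1-r}$ over $r>R_0$ is geometric and costs only $O(k)$ dimensions, in fact at most (say) $k/16$ after checking the constants.

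For the small scales $r\le R_0$ I would proceed exactly as in the proof of \pref{lem:spencerlem}, but applied simultaneously. Fix $r\le R_0$. By \pref{claim:easypseudo} there is a set $I(t)$ with $|I(t)|\le k/16$ such that every pseudorandom row outside $I(t)$ has active $\ell_2$-mass at most $\frac{2k}{n}\sum_j A_{ij}^2\le \frac{2k}{n}2^r$ (using $i\in R_r$); so after first constraining $S$ to be orthogonal to the rows in $I(t)$ (cost $\le k/16$), I can normalize and write each such row as $c_i u_i$ with $\|u_i\|_2=1$ and $c_i^2\le 2k2^r/n$. Then applying the "top-half orthogonality trick" as in \pref{lem:spencerlem}: order the pseudorandom rows of $R_r$ by decreasing $\nabla^{(r)}_i$, and make $S$ orthogonal to a $\rho_r$-fraction of them (the largest ones), where the fractions $\rho_r$ are chosen so that $\sum_{r\le R_0}\rho_r\cdot|R_r\cap P|$ is at most, say, $k/8$; a natural choice is $\rho_r$ proportional to $2^{-r}$ or a fixed small fraction $\varepsilon'$, since $|R_r\cap P|\le n2^{1-r}$ is geometrically decreasing, so $\sum_r$ of a constant fraction of $n2^{1-r}$ over $r\le R_0$ is again $O(k)$. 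Killing the top $\rho_r$-fraction forces $\nabla^{(r)}_i\le \frac{1}{\rho_r |R_r\cap P|}$ on the surviving rows, so $(\nabla^{(r)}_i)^{3/2}\le (\nabla^{(r)}_i)\cdot(\rho_r|R_r\cap P|)^{-1/2}$, and then the averaging argument of \pref{lem:spencerlem} (pick $S$ two-dimensional inside the span of the small-Rayleigh-quotient directions of $R^{(r)}=\sum_i \nabla^{(r)}_i c_i^2 u_iu_i^T$) gives $\delta^T R^{(r)}\delta\le \frac{2k2^r}{n}\cdot\frac{1}{(\text{dim}-1)}\|\delta\|_2^2$; combining with the two extra powers of $\nabla^{(r)}$ killed and $\eta_r=\sqrt{n2^{1-r}}$ yields, after bookkeeping the exponents, a bound of the form $\frac{1}{k}(k2^r/n)^{(1-3\varepsilon)/2}\|\delta\|_2^2$, where $\varepsilon$ absorbs the small losses from the fractions $\rho_r$ and from replacing $\mathrm{dim}-1$ by $\mathrm{dim}$.

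The one genuinely delicate point — and the step I expect to be the main obstacle — is making the \emph{dimension accounting} work: I must subtract from the $k$-dimensional active space $F(t)$ three separate orthogonality constraints (the set $I(t)$ from \pref{claim:easypseudo}, the complete kill of all scales $r>R_0$, and the partial kill of the top $\rho_r$-fractions for $r\le R_0$), keep the surviving subspace at least two-dimensional so that the final averaging step has room, and still end with total codimension $\le k/4+O(1)$. This forces the three budgets to be something like $k/16$ each plus lower-order terms, which in turn constrains the fractions $\rho_r$ and makes the constant $\varepsilon$ in the exponent appear; getting all the geometric series to sum to a clean fraction of $k$ is where the care is needed. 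Once that is set, the domain condition for \pref{lem:taylor} (coming from $\|\delta\|_\infty\le 1/\mathrm{poly}(n)$ and $L$ chosen accordingly in the final algorithm) and the linearity of $u_r$ are immediate, and the invariance $\Phi_{r,t}(x+\delta)=\Phi_{r,t}(x)$ for $r>R_0$ follows directly from $\mathrm{supp}(\delta)$ being orthogonal to every relevant row.
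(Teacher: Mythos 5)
Your skeleton matches the paper's proof up to a point: \pref{lem:taylor} applied to each $\Phi_{r,t}$, orthogonality to all (still relevant) rows of the classes $r>R_0$ at a cost of $k/16$ dimensions, orthogonality to the exceptional set $I(t)$ from \pref{claim:easypseudo}, and the ``kill the top gradient entries so that $\nabla_{r,i}^{3/2}\le \nabla_{r,i}\cdot(\text{number killed})^{-1/2}$'' trick are all exactly the right moves. The gap is in the one step you yourself flag as delicate, and it is twofold. First, the final ``averaging argument of \pref{lem:spencerlem}: pick $S$ two-dimensional inside the small-Rayleigh-quotient directions of $R^{(r)}$'' cannot be used here: you need a \emph{single} subspace of codimension at most $k/4$ (hence dimension $\ge 3k/4$) that controls all $R_0+1\approx \log_2(n/k)$ scales simultaneously, both because the lemma asserts it and because downstream this subspace is intersected with the codimension-$(k/4+O(1))$ subspace of \pref{lem:smalllocal}, the hyperplane $\langle\delta,x(t)\rangle=0$ and a halfspace. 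A two-dimensional subspace chosen separately for each $r$ neither intersects across scales nor meets the codimension requirement. The paper's substitute is to allocate an effective dimension $k_r$ to scale $r$ and take, for each $r$, the orthogonal complement of the top $(k_r/2)$-dimensional eigenspace of $M_r\defeq\sum_{i}\nabla_{r,i}^{3/2}A_iA_i^T$ restricted to $\R^{F(t)}$; then \emph{every} $\delta$ in that complement satisfies $\delta^T M_r\delta\le \|\delta\|_2^2\,\mathrm{Tr}(M_r)/(k_r/2)$, and these complements (plus the top-$k_r/2$ gradient kills) can be intersected over $r$ at total cost $\sum_r k_r$.

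Second, your dimension budget is not satisfiable with the fractions you propose. For $r\le R_0$ the classes are large: $\sum_{r\le R_0} n2^{1-r}=\Theta(n)$ (only the tail $r>R_0$ is geometric down to $O(k)$), and $|R_r\cap P(t)|$ need not be $O(k)$, so choosing $\rho_r$ to be a fixed constant, or $\rho_r\propto 2^{-r}$, makes $\sum_{r\le R_0}\rho_r|R_r\cap P(t)|$ of order $n$ (or $\rho_0 n$), blowing the $k/8$ budget whenever $k\ll n$. Relatedly, the exponent $(1-3\varepsilon)/2$ is not a slack term that ``absorbs small losses''; it is forced by a specific scale-dependent allocation. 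The choice that works is $k_r\asymp (k2^r/n)^{\varepsilon}k$: then $\sum_{r\le R_0}k_r=O_\varepsilon(1)\cdot k$ because $(k2^r/n)^\varepsilon$ is geometric in $r$ and $O(1)$ at $r=R_0$, and the bookkeeping $\frac{\sqrt{n}}{2^{r/2}}\cdot\frac{1}{k_r}\cdot k_r^{-1/2}\cdot\frac{2k2^r}{n}\asymp\frac{1}{k}\bigl(\frac{k2^r}{n}\bigr)^{\frac{1-3\varepsilon}{2}}$ (regularizer weight, eigenspace averaging, gradient truncation, and \pref{claim:easypseudo} together with $i\in R_r$, respectively) produces exactly the claimed exponent. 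Without identifying this allocation and replacing the two-dimensional averaging by the eigenspace-complement argument, the proof as written does not go through.
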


\begin{proof}
    For any $r=0,\ldots, R_0$, let $k_r=k_r(t)\defeq C_1 \left(\frac{k2^r}{n}\right)^{\varepsilon} k$ be the effective subspace dimension devoted to rows in the $r$-th group, where $C_1=C_1(\varepsilon)$ is chosen so that $\sum_{r\le R_0} k_r\le k/8$.

    Let $S'$ be the orthogonal complement of the span of the large rows and the row in $I$, namely $(\bigcup_{r>R_0} \{A_i:i\in R_r\})^\perp$. These rows all have total $\ell_2$-squared mass larger than $2^{R_0-1}\ge 16n/k$, so there are at most $k/16$ of them and thereby $S_1$ has codimension at most $k/16$.
    
    Applying \pref{lem:taylor} to $r\le R_0$, for some $\nabla_r\in \Delta_n$, it holds for any $\delta$ with $\|\delta\|_\infty\le 1/\text{poly}(n)$ that
    \[
        \Phi_{r,t}(x+\delta)-\Phi_{r,t}(x)\le u_r(\delta)+2\sqrt{n2^{1-r}}\sum_{i\in R_r\cap P(t)} \nabla_{r,i}^{\frac{3}{2}}\langle A_i,\delta\rangle^2\quad\text{for some linear form $u_r:\R^n\to\R$}\mper
    \]
    Let $I_r$ be the set of coordinates that are in the top $k_r/2$ entries of the gradient $\nabla_r$. We define $S_r$ to be the intersection of the orthogonal complement of the span of the rows in $I\cup I_r$ (where $I$ is the set of rows from \pref{claim:easypseudo} that satisfies $|I|\le k/16$), and of the top $k_r/2$-dimensional eigenspace of $\sum_{i\in R_r\cap P(t),i\notin I\cup I_r} \nabla_{r,i}^{\frac{3}{2}} A_i A_i^\top$ over $\R^{F(t)}$. Then if $\delta\in S_r$,
    \begin{align*}
        \Phi_{r,t}(x+\delta)-\Phi_{r,t}(x)-u_r(\delta)
        &\lesssim \frac{\|\delta\|_2^2\sqrt{n}}{2^{\frac{r}{2}} k_r} \sum_{i\in R_r\cap P(t),i\notin I\cup I_r} \nabla_{r,i}^{\frac{3}{2}}\sum_{j\in F(t)} A_{ij}^2&\text{(since $\delta\in S_r$)}\\
        &\lesssim \frac{\|\delta\|_2^2k}{k_r \sqrt{n} 2^{\frac{r}{2}}} \sum_{i\in R_r\cap P(t),i\notin I\cup I_r} \nabla_{r,i}^{\frac{3}{2}}\sum_{j=1}^n A_{ij}^2\quad&\text{(by \pref{claim:easypseudo} and $i\in P(t),i\notin I$)}\\
        &\le \frac{\|\delta\|_2^2 k 2^{\frac{r}{2}}}{k_r \sqrt{n}} \sum_{i\in R_r\cap P(t),i\notin I\cup I_r}  \nabla_{r,i}^{\frac{3}{2}} \quad&\text{(since $i\in R_r$)}\\
        &\lesssim \frac{\|\delta\|_2^2 k2^{\frac{r}{2}}}{k_r^{\frac{3}{2}} \sqrt{n}} \quad&\text{(since $i\notin I_r$)}\\
        &\lesssim \frac{1}{k} \left(\frac{k2^r}{n}\right)^{\frac{1-3\varepsilon}{2}} \|\delta\|_2^2\mper&\text{(by definition of $k_r$)}
    \end{align*}
    Finally we set $S\defeq S'\cap \,\bigcap_{r\le R_0} S_r$. One can check that $S$ has codimension at most $\frac{k}{16}+\frac{k}{16}+\sum_{r\le R_0} k_r\le \frac{k}{4}$.
\end{proof}

The following step is a trick to handle the first-order terms. Indeed, a caveat is that unlike in Spencer's setting, we cannot afford to move perpendicularly to all the gradients simultaneously. 

\begin{lemma}
    \label{lem:first}
    Fix $x\in\R^n$. Let $S$ be a subspace such that for any $\delta\in S$ and $r\le R_0$,
    \[
        \Phi_{r,t}(x+\delta)-\Phi_{r,t}(x)\lesssim  u_r(\delta)+\frac{1}{k}\left(\frac{k2^r}{n}\right)^{\frac{1-3\varepsilon}{2}}\|\delta\|_2^2\quad \text{for some linear forms $\{u_r:r\le R_0\}$}\mper
    \]
    Then for any $\delta\in S$, at least one of $+\delta$ or $-\delta$ satisfies that for any $r\le R_0$,
    \[
        \Phi_{r,t}(x\pm \delta)-\Phi_{r,t}(x)\lesssim  \frac{1}{k}\left(\frac{k2^r}{n}\right)^{\varepsilon}\|\delta\|_2^2\mper
    \]
\end{lemma}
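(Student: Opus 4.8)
The plan is to reduce the simultaneous sign choice to a counting/pigeonhole argument over the (at most) $R_0 = O(\log n)$ linear forms $u_r$. The key point is that the gap between the exponent $\tfrac{1-3\varepsilon}{2}$ appearing in the hypothesis and the exponent $\varepsilon$ appearing in the conclusion leaves a lot of slack, since $\tfrac{1-3\varepsilon}{2} > \varepsilon$ for $\varepsilon < 1/5$; this slack is precisely what we spend to absorb the linear terms $u_r(\delta)$. Concretely, for a fixed $\delta \in S$, if $u_r(\delta) \le 0$ for the sign we pick, then that row group is already done; the danger is only the groups $r$ with $u_r(\pm\delta) > 0$. Upon flipping the sign of $\delta$, the sign of every $u_r(\delta)$ flips simultaneously, so we cannot kill all of them at once. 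Instead I would argue that the \emph{magnitude} $|u_r(\delta)|$ is itself controlled by the quadratic term, which forces the linear contribution to be small relative to the target bound for all but a few groups, and those few can be handled separately.

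First I would make the magnitude bound precise. Apply the hypothesis inequality to $+\delta$ and to $-\delta$ and add: since the quadratic term is even in $\delta$ and $\Phi_{r,t}$ is convex (as a regularized maximum), we get $0 \le \Phi_{r,t}(x+\delta) + \Phi_{r,t}(x-\delta) - 2\Phi_{r,t}(x) \lesssim \tfrac{2}{k}\big(\tfrac{k2^r}{n}\big)^{(1-3\varepsilon)/2}\|\delta\|_2^2$, which is automatic. The useful direction is the opposite: apply the hypothesis to whichever of $\pm\delta$ makes $u_r$ nonnegative, say $+\delta$ without loss of generality, to get $u_r(\delta) \ge \Phi_{r,t}(x+\delta) - \Phi_{r,t}(x) - c\cdot\tfrac{1}{k}\big(\tfrac{k2^r}{n}\big)^{(1-3\varepsilon)/2}\|\delta\|_2^2$; combined with convexity $\Phi_{r,t}(x+\delta) + \Phi_{r,t}(x-\delta) \ge 2\Phi_{r,t}(x)$ this yields $|u_r(\delta)| \lesssim \tfrac{1}{k}\big(\tfrac{k2^r}{n}\big)^{(1-3\varepsilon)/2}\|\delta\|_2^2$, i.e. each linear form is a priori bounded by the same quantity as the quadratic term. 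Plugging this back, for \emph{either} choice of sign we get $\Phi_{r,t}(x\pm\delta) - \Phi_{r,t}(x) \lesssim \tfrac{1}{k}\big(\tfrac{k2^r}{n}\big)^{(1-3\varepsilon)/2}\|\delta\|_2^2$.

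It remains to upgrade the exponent from $\tfrac{1-3\varepsilon}{2}$ to $\varepsilon$ in the conclusion — but these are not comparable as stated, since $(1-3\varepsilon)/2 > \varepsilon$ means $\big(\tfrac{k2^r}{n}\big)^{(1-3\varepsilon)/2} \le \big(\tfrac{k2^r}{n}\big)^{\varepsilon}$ only when $k2^r \le n$, which holds precisely because $r \le R_0 = \lceil\log_2(32n/k)\rceil$ forces $2^r \lesssim n/k$, hence $k2^r/n = O(1)$. So in fact the ratio $k2^r/n$ is bounded by a universal constant throughout, and any fixed power of it is comparable to any other up to constants; this is where the truncation at $r \le R_0$ is used. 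Thus the final bound $\Phi_{r,t}(x\pm\delta) - \Phi_{r,t}(x) \lesssim \tfrac{1}{k}\big(\tfrac{k2^r}{n}\big)^{\varepsilon}\|\delta\|_2^2$ follows for both signs, and in particular there is a sign (indeed either sign) that works for all $r \le R_0$ simultaneously — so the pigeonhole over $O(\log n)$ groups is not even needed once the magnitude bound on $u_r$ is in hand.

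The main obstacle I anticipate is making the convexity/evenness argument that extracts $|u_r(\delta)| \lesssim (\text{quadratic term})$ fully rigorous: one must be careful that the hypothesis is an inequality in one direction only, so the bound on $|u_r(\delta)|$ genuinely needs the lower bound $\Phi_{r,t}(x+\delta)+\Phi_{r,t}(x-\delta)\ge 2\Phi_{r,t}(x)$ from convexity of $\Phi_{r,t} = \omega^*_{1/2,\eta_r}\circ \pi_{r,t}$ (which is convex since $\omega^*$ is a supremum of affine functions and $\pi_{r,t}$ is affine in $x$ on the relevant region). I would double-check that $\pi_{r,t}$ is indeed affine in the argument $x$ for fixed $t$ — it is, since for fixed $t$ the index sets $P(t)$, $R_r$ and the frozen values $\langle A_i, x(t_i)\rangle$ are all constants — so $\Phi_{r,t}$ is a genuine convex function of $x$ and the midpoint inequality applies verbatim. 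Everything else is bookkeeping with the constant $k2^r/n = O(1)$.
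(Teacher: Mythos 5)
There is a genuine gap at the heart of your argument: the claimed magnitude bound $|u_r(\delta)|\lesssim \frac{1}{k}\bigl(\frac{k2^r}{n}\bigr)^{\frac{1-3\varepsilon}{2}}\|\delta\|_2^2$ does not follow from the hypothesis plus convexity. Writing $a=\Phi_{r,t}(x+\delta)-\Phi_{r,t}(x)$, $b=\Phi_{r,t}(x-\delta)-\Phi_{r,t}(x)$, $u=u_r(\delta)$ and $Q$ for the quadratic term, the hypothesis applied to $\pm\delta$ gives $a\le u+CQ$ and $b\le -u+CQ$, and convexity gives $a+b\ge 0$; together these only bound the symmetric second difference $a+b$ by $2CQ$ and sandwich $a$ in the window $[u-CQ,\,u+CQ]$ — they give no bound on $u$ itself. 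A concrete counterexample to your intermediate claim: if $\Phi_{r,t}$ is affine with a large slope, the hypothesis holds with equality and convexity holds, yet $u_r(\delta)$ is arbitrarily large. In the actual application this is exactly the situation: $u_r(\delta)$ is the first-order term $\langle \nabla_r, A\delta\rangle$ of the Taylor expansion from \pref{lem:both}, which for a typical unit $\delta$ is far larger than the $O(\|\delta\|_2^2/k)$ quadratic term. Consequently your conclusion that ``either sign works'' and that no simultaneous sign selection is needed cannot be right — the whole content of the lemma is precisely that a \emph{single} sign can be chosen to control all $O(\log n)$ potentials at once (and this is what later becomes the halfspace constraint in the construction of the oracle).

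Your correct observation that $k2^r/n=O(1)$ for $r\le R_0$, so that the exponent change from $\frac{1-3\varepsilon}{2}$ to $\varepsilon$ costs only a constant, is real but peripheral; the missing idea is how to kill the linear forms. The paper does this by aggregation: sum the hypothesis inequalities with geometric weights $2^{-\varepsilon r}$, so that (using $\varepsilon<1/5$) the weighted quadratic terms sum to $\lesssim \frac{1}{k}\bigl(\frac{k}{n}\bigr)^{\varepsilon}\|\delta\|_2^2$; since each gap $\Phi_{r,t}(x+\delta)-\Phi_{r,t}(x)-u_r(\delta)$ is nonnegative ($u_r$ being the supporting first-order form of the convex $\Phi_{r,t}$ — this is where convexity is genuinely used), each individual weighted term is bounded by the whole sum; finally, choose the sign of $\delta$ making the single aggregated linear form $\sum_{s\le R_0} 2^{-\varepsilon s}u_s(\delta)$ nonpositive, and multiply back by $2^{\varepsilon r}$ to get the stated $\frac{1}{k}\bigl(\frac{k2^r}{n}\bigr)^{\varepsilon}\|\delta\|_2^2$ bound simultaneously for all $r$. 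If you want to repair your write-up, replace the attempted bound on $|u_r(\delta)|$ by this one-scalar sign choice; the rest of your bookkeeping can stay.
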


\begin{proof}
    By picking $\varepsilon<1/5$, we have for any $\delta\in S$
    \[
        \sum_{r\le R_0} 2^{-\varepsilon r} (\Phi_{r,t}(x+\delta)-\Phi_{r,t}(x)-u_r(\delta))\le \frac{1}{k} \left(\frac{k}{n}\right)^\varepsilon \|\delta\|_2^2\mper
    \]
    By a trivial upper bound, this means that for any $\delta\in S,r\le R_0$,
    \[
        2^{-\varepsilon r} (\Phi_{r,t}(x+\delta)-\Phi_{r,t}(x)) \le \sum_{s\le R_0} 2^{-\varepsilon s} u_s(\delta)+\frac{1}{k} \left(\frac{k}{n}\right)^\varepsilon \|\delta\|_2^2\mper
    \]
    In particular, by picking the signing $\pm \delta$ that satisfies $\sum_{s\le R_0} 2^{-\varepsilon s} u_s(\pm \delta)\le 0$, we get that for any $r\le R_0$,
    \[
        \Phi_{r,t}(x\pm \delta)-\Phi_{r,t}(x)\lesssim  \frac{1}{k}\left(\frac{k2^r}{n}\right)^{\varepsilon}\|\delta\|_2^2\mper\qedhere
    \]
\end{proof}

\begin{proof}[Proof of {\pref{lem:pseudo}}]
    By combining \pref{lem:both} and \pref{lem:first}, we know that at any step where there are $k$ active coordinates remaining, the potential $\Phi_{r,t}$ increases by at most $\frac{1}{k}\left(\frac{k2^r}{n}\right)^{\varepsilon}\|\delta\|_2^2$ if $r\le 1+\log_2(32n/k)$ and is unchanged for $r\ge 1+\log_2(32n/k)$.
    
    Now fix any $r\le \lceil \log_2(n)\rceil$. By a similar argument to the one in the proof of \pref{thm:spencer}, after letting $\beta_k$ be the $\ell_2$-squared mass injected into $x$ starting from the first time for which there are at most $k$ active coordinates remaining, we can upper bound
    \[
        \Phi_{r,T}(x(T))-\Phi_{r,0}(0)\lesssim \left(\frac{2^r}{n}\right)^\varepsilon \sum_{k=1}^{64n2^{-r}} \frac{\beta_k-\beta_{k-1}}{k^{1-\varepsilon}}=O(1)\mper
    \]
    The claimed bound follows after noting that our choice of parameters for the regularizers also implies $\Phi_{r,0}(0)=O(1)$.
\end{proof}

\subsection{Discrepancy in the small row regime}
\label{sec:small}

Our next goal is to prove \pref{lem:small}, which is a direct consequence of the following lemma.

\begin{lemma}
    \label{lem:smalllocal}
    Fix any time $t$ and let $k\defeq |F(t)|$. There is a subspace $S$ of $\R^{F(t)}$ of codimension at most $\frac{k}{4}+O(1)$ such that
    \[
        \Psi_t(x+\delta)\le \Psi_t(x)
    \]
    holds for any $\delta\in S$ with $\|\delta\|_\infty\le 1/\text{poly}(n)$.
\end{lemma}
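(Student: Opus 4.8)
The plan is to carry out the local analysis of the potential $\Psi_t=\smax_\eta\circ\pi'_t$ at a fixed step, in the same spirit as the proof of Spencer's theorem: expand $\Psi_t$ to second order around the current partial coloring, kill the first‑order term with one orthogonality constraint, and confine the second‑order term to a suitable subspace so that it is beaten by the ``prepaid'' quadratic correction built into $\pi'_t$.

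Write $x=x(t)$, $k=|F(t)|$, and $\Delta(\delta):=\pi'_t(x+\delta)-\pi'_t(x)$. Since $\|\delta\|_\infty\le 1/\mathrm{poly}(n)$ and every entry of $B$ is at most $\tfrac1{4K\eta}$ in absolute value by the thresholding, $\|\Delta(\delta)\|_\infty\le\tfrac1{3\eta}$, so the entropy case of \pref{lem:taylor} gives, with $\nabla:=\nabla\smax_\eta(\pi'_t(x))\in\Delta_n$,
\[
\Psi_t(x+\delta)-\Psi_t(x)\le\langle\nabla,\Delta(\delta)\rangle+\eta\sum_i\nabla_i\,\Delta_i(\delta)^2 .
\]
For $i\in P(t)$ both $\pi'_t$‑values vanish, so $\Delta_i(\delta)=0$. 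For $i\notin P(t)$, expanding the square of $x_j+\delta_j-x_j(t_i)$ yields $\Delta_i(\delta)=g_i(\delta)-K\eta\,q_i(\delta)$, where $g_i(\delta)=\langle\widetilde B_i,\delta\rangle$ is linear with $\widetilde B_{ij}:=B_{ij}\bigl(1-2K\eta B_{ij}(x_j-x_j(t_i))\bigr)$, and $q_i(\delta):=\sum_j B_{ij}^2\delta_j^2\ge 0$. Using $|x_j-x_j(t_i)|\le 2$ and $|B_{ij}|\le\tfrac1{4K\eta}$ we get $|\widetilde B_{ij}|\le 2|B_{ij}|$; only coordinates $j\in F(t_i)\supseteq F(t)$ matter, and since $i\notin P(t)$ forces $\sum_{j\in F(t)}A_{ij}^2\le\sum_{j\in F(t_i)}A_{ij}^2\le 8\lambda$, we obtain $\|\widetilde B_i\|_{2,F(t)}^2\le 32\lambda$ and $q_i(\delta)\le\tfrac1{16K^2\eta^2}\|\delta\|_2^2$. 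Summing against $\nabla$: $\langle\nabla,\Delta(\delta)\rangle=\langle w,\delta\rangle-K\eta\,Q(\delta)$ with $w:=\sum_{i\notin P(t)}\nabla_i\widetilde B_i$ and $Q(\delta):=\sum_{i\notin P(t)}\nabla_i q_i(\delta)=\delta^\top D\delta$, $D:=\diag\bigl(\sum_{i\notin P(t)}\nabla_i B_{ij}^2\bigr)_{j\in F(t)}$; and, by $(a-b)^2\le 2a^2+2b^2$ together with the bound on $q_i$, $\eta\sum_i\nabla_i\Delta_i(\delta)^2\le 2\eta\,\delta^\top\widetilde M\delta+\tfrac{\eta}{8}\|\delta\|_2^2\,Q(\delta)$, where $\widetilde M:=\sum_{i\notin P(t)}\nabla_i\widetilde B_i\widetilde B_i^\top\succeq 0$. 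Two facts will be used repeatedly: the trace of $\widetilde M$ over $\R^{F(t)}$ is $\sum_{i\notin P(t)}\nabla_i\|\widetilde B_i\|_{2,F(t)}^2\le 32\lambda$, and $\operatorname{diag}(\widetilde M)_j=\sum_i\nabla_i\widetilde B_{ij}^2\le 4D_j$, i.e. $\operatorname{diag}(\widetilde M)\preccurlyeq 4D$.

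Thus the lemma reduces to finding a subspace $S\subseteq\R^{F(t)}$ of codimension $\le k/4+O(1)$ on which (i) $\langle w,\delta\rangle=0$ and (ii) $\delta^\top\widetilde M\delta\le c\,Q(\delta)$ for a universal constant $c$: then, taking $K$ a large enough universal constant and using $\|\delta\|_2\le 1$ (forced by $\|\delta\|_\infty\le 1/\mathrm{poly}(n)$), the displayed bound collapses to $\Psi_t(x+\delta)-\Psi_t(x)\le(-K+2c+\tfrac18)\eta\,Q(\delta)\le 0$. Condition (i) costs one unit of codimension. For (ii), $\operatorname{diag}(\widetilde M)\preccurlyeq 4D$ disposes of the diagonal, so it suffices to tame the zero‑diagonal part $N:=\widetilde M-\operatorname{diag}(\widetilde M)$, for which a Cauchy–Schwarz estimate gives $|N_{jj'}|\le 4\sqrt{D_jD_{j'}}$. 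To match $N$ against a constant multiple of $D$ on a large subspace, one invokes the pseudorandomness of $A$: applying \pref{lem:expander} to the squared matrix $B$ (equivalently \pref{claim:easypseudo}) lets us discard an $O(k)$‑codimensional obstruction — the rows whose masses have not yet dropped and the coordinates on which $D$ is atypically concentrated — after which $N$ can be charged to $O(1)\cdot D$ on the surviving subspace, still of codimension $\le k/4+O(1)$; when $|F(t)|$ is below the constant of \pref{ass:oracle} there is nothing to prove.

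The main obstacle is precisely condition (ii). Unlike in Spencer's setting, where the analogous second‑order matrix has trace $\le 1$ over the entire space so that restricting to a low‑eigenvalue subspace suffices, here the prepaid correction $Q(\delta)=\delta^\top D\delta$ is itself of the same order $\tfrac{\lambda}{k}\|\delta\|_2^2$ as the genuine second‑order term $\delta^\top\widetilde M\delta$, so the residual cannot merely be discarded — it must be absorbed by $Q$ \emph{pointwise} on $S$. This is where both structural ingredients of the construction are essential: the thresholding defining $B$, which forces $\operatorname{diag}(\widetilde M)\preccurlyeq 4D$ so that the diagonal energy of $\widetilde M$ never exceeds the prepaid energy; and the smallness of $\lambda$, which via \pref{lem:expander} bounds the off‑diagonal correlations $|N_{jj'}|\le 4\sqrt{D_jD_{j'}}$ tightly enough that they too can be charged to $D$ once an $O(k)$‑dimensional set of directions is removed. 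Making this trade‑off quantitatively fit within the $k/4+O(1)$ codimension budget, with a universal constant $K$, is the technical heart of the proof.
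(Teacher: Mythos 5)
Your setup matches the paper's in spirit: the Taylor expansion via \pref{lem:taylor} applied to the perturbation of $\pi'_t$, killing the linear part with an $O(1)$-codimension orthogonality constraint (you merge the cross term $-2K\eta B_{ij}^2(x_j-x_j(t_i))\delta_j$ into a modified matrix $\widetilde B$, where the paper keeps it separate), absorbing the $(K\eta q_i)^2$ piece via the thresholding $B_{ij}^2\le \frac{1}{16K^2\eta^2}$, and then choosing $K$ large so that the prepaid term $-K\eta\,\delta^\top D\delta$ dominates. All of that is sound.

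But the decisive step -- your condition (ii), producing a subspace of codimension at most $k/4$ on which $\delta^\top\widetilde M\delta\le c\,\delta^\top D\delta$ -- is asserted rather than proved, and the mechanism you sketch does not deliver it. The pointwise bound $|N_{jj'}|\le 4\sqrt{D_jD_{j'}}$ on the off-diagonal part is far too weak: it only says $N=\sqrt{D}\,H\,\sqrt{D}$ for some symmetric $H$ with entries bounded by $4$ and zero diagonal, and such an $H$ (e.g.\ a random sign pattern) can have $\Omega(k)$ eigenvalues of size $\Theta(\sqrt{k})$, so after deleting only $k/4$ directions you can at best guarantee $\delta^\top N\delta\lesssim \sqrt{k}\,\delta^\top D\delta$, not $O(1)\cdot\delta^\top D\delta$. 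Moreover, invoking \pref{lem:expander} / \pref{claim:easypseudo} here is a misdirection: those control how the row masses shrink as coordinates freeze, not the spectrum of $\widetilde M$ relative to $D$, and in fact no pseudorandomness of $A$ is needed for this local lemma at all -- $\lambda$ enters only through the choice of $\eta$, the thresholding, and the global assembly in \pref{sec:together}. What the step actually requires is the structural fact that $\widetilde M$ is a $\nabla$-weighted sum of rank-one matrices, so that $\operatorname{Tr}\bigl(D^{-1/2}\widetilde M D^{-1/2}\bigr)\le 4k$, and then an eigenvalue-counting argument after rescaling by $D^{1/2}$ gives the desired subspace of codimension $\alpha k$ with constant $O(1/\alpha)$. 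This is exactly \pref{lem:quadraticsub}, which the paper applies twice with $\alpha=\frac18$ (once to $B$ and once to the matrix with entries $B_{ij}^2x_j$, whereas in your formulation a single application to $\widetilde B$ with weights $\nabla_i$ suffices). Without that trace argument -- which you never invoke, and which you yourself flag as the unproven ``technical heart'' -- the proof is incomplete.
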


Before proving it, we recall the following well-known result (see for example \cite[Theorem 8]{BDG19} or \cite[Lemma 21]{LRR17}):

\begin{lemma}[]
	\label{lem:quadraticsub}
	For any $w_1,\ldots, w_m\ge 0$, $B\in \R^{m\times n}$, and $\alpha\in (0,1)$, there exists a subspace $S$ of $\R^n$ of codimension at most $\alpha n$ such that for all $\delta\in S$,
	\begin{align}
		\sum_{i\le m} w_i \left(\sum_{j\le n} B_{ij} \delta_j\right)^2\le \frac{1}{\alpha}\sum_{i\le m} w_i \sum_{j\le n} B_{ij}^2\delta_j^2\mper\label{eq:qsub}
	\end{align}
\end{lemma}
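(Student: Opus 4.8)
The plan is to build the subspace $S$ greedily, by repeatedly cutting out the worst direction of the "bad" quadratic form while controlling the "good" quadratic form that appears on the right-hand side. Concretely, set $Q \defeq \sum_{i\le m} w_i (B^j_i)(B^j_i)^\top$ where I view $\sum_j B_{ij}\delta_j = \langle b_i, \delta\rangle$ with $b_i$ the $i$-th row of $B$, so the left-hand side of \pref{eq:qsub} is $\delta^\top \left(\sum_i w_i b_i b_i^\top\right)\delta$, a PSD quadratic form $\delta^\top Q \delta$. The right-hand side, ignoring the $1/\alpha$, is $\delta^\top D \delta$ where $D \defeq \diag\left(\sum_i w_i B_{ij}^2\right)_{j\le n}$ is the diagonal matrix of the column sums of $Q$. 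The key identity is that the traces match: $\operatorname{tr}(Q) = \sum_j Q_{jj} = \sum_j \sum_i w_i B_{ij}^2 = \operatorname{tr}(D)$.

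Here is the averaging argument I would run. First I would handle the degenerate coordinates: if $D_{jj}=0$ for some $j$, then the $j$-th column of $B$ is entirely zero on the support of the $w_i>0$, so $\delta_j$ never contributes to either side and I can simply keep $e_j$ in $S$; hence assume $D$ is positive definite. Change variables $\delta = D^{-1/2}\gamma$, so \pref{eq:qsub} becomes: find a subspace of codimension $\le \alpha n$ on which $\gamma^\top (D^{-1/2}QD^{-1/2})\gamma \le \tfrac1\alpha \|\gamma\|_2^2$. Now $\tilde Q \defeq D^{-1/2}QD^{-1/2}$ is PSD with $\operatorname{tr}(\tilde Q) = \operatorname{tr}(D^{-1}Q)$; since every diagonal entry of $D^{-1}Q$, in the original basis, is $(D^{-1})_{jj} Q_{jj} = 1$, we get $\operatorname{tr}(\tilde Q) \le n$ (in fact $=n$ minus the count of zero columns, but $\le n$ suffices). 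Therefore $\tilde Q$ has at most $\alpha n$ eigenvalues exceeding $\tfrac1\alpha$; take $S$ (back in $\delta$-coordinates, i.e. $D^{-1/2}$ applied to) the span of the eigenvectors of $\tilde Q$ with eigenvalue $\le \tfrac1\alpha$. On this subspace $\gamma^\top \tilde Q\gamma \le \tfrac1\alpha\|\gamma\|_2^2$, which unwinds to exactly \pref{eq:qsub}, and $\operatorname{codim} S \le \alpha n$.

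There is no real obstacle here — the whole argument is a trace/Markov bound on eigenvalues after a diagonal rescaling — but the one point to be careful about is the rescaling step when $D$ is not invertible, which is why I separated out the zero-column case first; the cleanest phrasing is to restrict attention to the coordinates $j$ with $D_{jj}>0$, prove the bound there, and then add back the (automatically harmless) zero coordinates to $S$, noting this only helps the codimension bound. I would also remark that this is exactly the statement cited as \cite[Theorem 8]{BDG19} / \cite[Lemma 21]{LRR17}, so in the write-up it would suffice to give this short self-contained argument and move on.
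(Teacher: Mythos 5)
Your proposal is correct and is essentially the paper's own argument: absorbing the weights into the rows, normalizing the columns (your $D^{-1/2}$ rescaling is exactly the paper's replacement of $\delta_j$ by $\delta_j/\lVert B^j\rVert_2$ after setting $w_i=1$), and then cutting off the top eigenspace of a PSD matrix of trace at most $n$ via the trace/Markov counting bound. Your explicit treatment of the zero columns is a minor tidiness improvement over the paper's one-line dismissal, but there is no substantive difference in approach.
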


\begin{proof}
    Up to considering $\sqrt{w_i} B_i$, assume without of generality that $w_i=1$ for all $i\in [m]$. Moreover, the statement is invariant if we remove the zero columns from $B$ and replace $\delta_j$ by $\frac{\delta_j}{\|B^j\|_2}$ for all $j\in[n]$. Therefore we can also assume that all columns of $B$ have unit Euclidean length.
    
    Now the right-hand side is just $\frac{\|\delta\|_2^2}{\alpha}$ and the left-hand side is $\delta^\top \sum_{i\le m} B_i B_i^\top \delta$. We can simply choose $S$ to be the subspace of vectors $\delta$ orthogonal to the top $\alpha n$ eigenspace of the linear operator $\sum_{i\le m} B_i B_i^\top$, which has trace $n$. The result follows a counting argument.
\end{proof}

\begin{proof}[Proof of {\pref{lem:smalllocal}}]
    To avoid overcharging notations we drop in this proof the dependencies on $t$ and let $x\defeq x(t)$, $F\defeq F(t)$ and $P\defeq P(t)$. When $\|\delta\|_\infty\le 1/\text{poly}(n)$, we can apply Taylor expansion (\pref{lem:taylor}) -- for some $\nabla\in \Delta_n$:
    \begin{align*}
        \Psi_t(x+\delta)-\Psi_t(x)\le &\sum_{i\notin P} \nabla_i \sum_{j\in F} B_{ij} \delta_j+\eta \sum_{i\notin P} \nabla_i \left(\sum_{j\in F} B_{ij} \delta_j\right)^2\\
        &-2K\eta \sum_{i\notin P} \nabla_i \sum_{j\in F} B_{ij}^2 x_j \delta_j+2K^2\eta^3 \sum_{i\notin P^\top}\nabla_i \left(\sum_{j\in F} B_{ij}^2 x_j \delta_j\right)^2\\
        &-K\eta\sum_{i\notin P} \nabla_i \sum_{j\in F} B_{ij}^2 \delta_j^2+K^2\eta^3 \sum_{i\notin P} \nabla_i \left(\sum_{j\in F} B_{ij}^2 \delta_j^2\right)^2\mper
    \end{align*}
    First, note that the last term scales as $\delta_j^4$ so we can make it negligible by picking $\|\delta\|_\infty\le 1/\text{poly}(n)$.
    
    We consider the subspace $S_1$ of codimension at most 2 that is the orthogonal complement of the span of the 2 vectors from the linear terms in $\delta$ in the previous right-hand side. Also, by applying \pref{lem:quadraticsub} to two different matrices with $\alpha=\frac{1}{8}$, we can find $S_2$ of codimension $k/4$ such that any $\delta\in S_3$ satisfies the following two conditions:
	\begin{align}
	    \eta \sum_{i\notin P} \nabla_i \left(\sum_{j\in F}B_{ij}\delta_j\right)^2&\le 8\eta \sum_{i\notin P} \nabla_i\sum_{j\in F} B_{ij}^2 \delta_j^2.\label{eq:cond1small}\\
	    2K^2\eta^3 \sum_{i\notin P} \nabla_i \left(\sum_{j\in F} B_{ij}^2 x_j \delta_j\right)^2&\le 16K^2\eta^3\sum_{i\notin P} \nabla_i \sum_{j\in F} B_{ij}^4 x_j^2 \delta_j^2\label{eq:cond2small}\mper
	\end{align}
	Note that whenever \pref{eq:cond2small} is satisfied, it also follows from $|x_i|\le 1$ and the assumption $B_{ij}^2\le \frac{1}{16K^2\eta^2}$ that
	\[
    	2K^2\eta^3 \sum_{i\notin P} \nabla_i \left(\sum_{j\in F} B_{ij}^2 x_j \delta_j\right)^2\le \eta \sum_{i\notin P}  \nabla_i \sum_{j\in F} B_{ij}^2 \delta_j^2\mper
	\]
    Let $S\defeq S_1\cap S_2$. $S$ has codimension at most $k/4+O(1)$ by construction. Picking $K\defeq 9$, we get
    \[
        \Psi_t(x+\delta)-\Psi_t(x)\le 0\mcom
    \]
    for any $\delta\in S$ satisfying $\|\delta\|_\infty\le 1/\text{poly}(n)$.
\end{proof}

Finally we bound the error of replacing $A$ by $B$.

\begin{lemma}
    \label{lem:other}
    For any $i\in [n]$,
    \[
        |\langle A_i-B_i, x(T)-x(t_i)\rangle|\lesssim \sqrt{\lambda\log n}\mper
    \]
\end{lemma}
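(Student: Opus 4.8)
The plan is to bound the contribution of the thresholded entries, i.e. the indices $j$ for which $A_{ij}^2 > \tfrac{1}{16K^2\eta^2}$, using the fact that $\lambda = \lambda(A)$ controls the $\ell_2$-mass of the squared rows restricted to arbitrary coordinate subsets. Fix a row $i$. Since $A_i - B_i$ is supported exactly on the heavy set $H_i \defeq \{j\in[n] : A_{ij}^2 > \tfrac{1}{16K^2\eta^2}\}$, the quantity we must bound is $|\sum_{j\in H_i} A_{ij}(x_j(T)-x_j(t_i))|$. The key observation is that each coordinate of $x(T)-x(t_i)$ lies in $[-2,2]$ and, more importantly, we only care about the coordinates that were \emph{active} at some point after time $t_i$ — but actually the cleanest bound is the crude one: $|\langle A_i-B_i, x(T)-x(t_i)\rangle| \le 2\sum_{j\in H_i} |A_{ij}|$, and we estimate this sum via Cauchy--Schwarz and a counting argument on $|H_i|$.

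First I would bound $|H_i|$. By definition of $t_i$, at time $t_i$ the row has $\sum_{j\in F(t_i)} A_{ij}^2 \le 8\lambda$, but this is not directly what we need since $H_i$ may contain coordinates outside $F(t_i)$. Instead, observe that the heavy coordinates of $A_i$ have $A_{ij}^2 > \tfrac{1}{16K^2\eta^2} = \tfrac{\lambda}{16K^2\log n}$, and since the whole row satisfies $\sum_j A_{ij}^2 \le 1$ (columns, hence... — wait, it's columns of $A$ that have norm $\le 1$, not rows). So here is the real subtlety: we cannot assume $\sum_j A_{ij}^2$ is small. This is where $\lambda$ must enter. The point is that the \emph{entries} $x_j(T) - x_j(t_i)$ are nonzero only for $j \in F(t_i)$ (coordinates frozen before $t_i$ never move again), so in fact $|\langle A_i - B_i, x(T)-x(t_i)\rangle| \le 2\sum_{j\in H_i \cap F(t_i)} |A_{ij}|$. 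Now $\sum_{j\in H_i\cap F(t_i)} A_{ij}^2 \le \sum_{j\in F(t_i)} A_{ij}^2 \le 8\lambda$ by the definition of $t_i$, and each such entry has $A_{ij}^2 > \tfrac{\lambda}{16K^2\log n}$, hence $|H_i \cap F(t_i)| < \tfrac{8\lambda \cdot 16K^2\log n}{\lambda} = 128K^2\log n$. Then by Cauchy--Schwarz,
\[
    \Big|\sum_{j\in H_i\cap F(t_i)} A_{ij}\,(x_j(T)-x_j(t_i))\Big| \le 2\sqrt{|H_i\cap F(t_i)|}\cdot\sqrt{\sum_{j\in H_i\cap F(t_i)} A_{ij}^2} \le 2\sqrt{128 K^2 \log n}\cdot\sqrt{8\lambda} = O(\sqrt{\lambda\log n}),
\]
since $K$ is a fixed universal constant. (If one prefers to avoid Cauchy--Schwarz, bound $\sum |A_{ij}| \le |H_i\cap F(t_i)| \cdot \max|A_{ij}|$; either way one gets $O(\sqrt{\lambda\log n})$ after substituting the bounds, since $\max_{j} |A_{ij}| \le 1$ and $|H_i\cap F(t_i)| = O(\lambda \log n / \lambda) = O(\log n)$ would give $O(\log n)$ — so the Cauchy--Schwarz route giving $O(\sqrt{\lambda\log n})$ is the one to take.)

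The main obstacle, and the step to get right, is the observation that $x_j(T)-x_j(t_i)$ vanishes outside $F(t_i)$: a coordinate frozen strictly before time $t_i$ has $x_j(t_i) = \pm 1 = x_j(T)$, so it contributes zero, which lets us replace the sum over $H_i$ (a priori uncontrolled, since rows of $A$ are not bounded) by a sum over $H_i \cap F(t_i)$ whose squared $\ell_2$-mass is $\le 8\lambda$ by the very definition of $t_i$. Once this reduction is in place the rest is the elementary counting-plus-Cauchy--Schwarz estimate above, using $\eta^2 = \log n/\lambda$ and that $K = O(1)$. I would write the proof in essentially that order: (i) reduce to $j\in F(t_i)$; (ii) bound $|H_i\cap F(t_i)|$ via the threshold and $t_i$; (iii) conclude by Cauchy--Schwarz.
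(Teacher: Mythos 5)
Your proposal is correct and follows essentially the same route as the paper's proof: restrict to the coordinates active at time $t_i$ (since frozen coordinates do not move), use the definition of $t_i$ and the threshold $\frac{1}{16K^2\eta^2}=\frac{\lambda}{16K^2\log n}$ to count $O(\log n)$ heavy entries, and finish with Cauchy--Schwarz to get $O(\sqrt{\lambda\log n})$. The only cosmetic difference is that you carry the factor $|x_j(T)-x_j(t_i)|\le 2$ explicitly, which the paper absorbs into the $\lesssim$.
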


\begin{proof}
    Fix $i\in [n]$. Let $F\defeq F(t_i)$ be the set of active coordinates when the $i$-th row becomes small. Since $\sum_{j\in F} A_{ij}^2=O(\lambda)$, it must be that
    \[
        |\{j\in F: A_{ij}-B_{ij}\neq 0\}|\lesssim \log n\mper
    \]
    Therefore, by Cauchy-Schwarz,
    \[
        |\langle A_i-B_i, x(T)-x(t_i)\rangle|\le \sum_{j\in F} |A_{ij}-B_{ij}|\lesssim \sqrt{\log n}\sqrt{\sum_{j\in F} (A_{ij}-B_{ij})^2}\lesssim \sqrt{\lambda\log n}\mper\qedhere
    \]
\end{proof}

\subsection{Putting everything together}
\label{sec:together}

We are now ready to prove \pref{thm:prkomlos} and \pref{thm:prbf}.

\begin{proof}[Proof of {\pref{thm:prkomlos}} and {\pref{thm:prbf}}]
    
We assume without loss of generality that $\max_i (Ax)_i=\|Ax\|_\infty$ by the usual trick of doubling the rows. We define $\oracle(A,x(t))$ to be the intersection of the halfspace from \pref{lem:both} and  \pref{lem:first}, and of the subspace from \pref{lem:smalllocal}.

On the one hand, \pref{lem:pseudo} implies that for any $r\le \lceil\log_2 n\rceil$ and $i\in R_r$,
\[
    \langle A_i,x(t_i)\rangle = \pi_{r,T}(x(T))_i\le \Phi_{r,T}(x(T))\le O(1)\mper
\]
On the other hand, \pref{lem:small} implies that for any $i\in [n]$,
\[
    \langle B_i,x(T)-x(t_i)\rangle-K\eta \sum_{j=1}^n B_{ij}^2 (x_j(T)-x_j(t_i))^2\le \Psi_T(x(T))\le \Psi_0(x(0))\le \sqrt{\lambda\log n}\mper
\]
Observe that 
\[
    \sum_{j=1}^n B_{ij}^2 (x_j(T)-x_j(t_i))^2\le 4\sum_{j=1}^n B_{ij}^2\lesssim\lambda\mper
\]
Hence, $\langle B_i,x(T)-x(t_i)\rangle\lesssim \sqrt{\lambda\log n}$, and by \pref{lem:other}, $|\langle A_i-B_i, x(T)-x(t_i)\rangle|\lesssim  \sqrt{\lambda\log n}$ holds as well. It remains to use the triangle inequality:
\[
    \|Ax(T)\|_\infty =O(1+\sqrt{\lambda \log n})\mper
\]
This implies \pref{thm:prkomlos} and the first part of \pref{thm:prbf}. For the second part of \pref{thm:prbf}, simply observe that if $A$ is a rescaled Beck-Fiala instance, namely $A_{ij}\in \{0,1/\sqrt{s}\}$ for all $i,j\in [n]$, then the condition $\sum_{j\in F(t_i)} A_{ij}^2= O(\lambda)$ implies by that ${A_i}$ has at most $O(s\lambda)$ nonzero entries in $F(t_i)$, and so $\sum_{j\in F(t_i)} |A_{ij}|=O(\lambda\sqrt{s})$. In particular, the time steps $t\in [t_i,T]$ affect the discrepancy of $A_i$ by at most $O(\lambda\sqrt{s})$. This shows that the constructed coloring also has discrepancy $O(1+\lambda\sqrt{s})$ in this case, which is equivalent to the second part of the bound in \pref{thm:prbf}.
\end{proof}

\begin{remark}
    \label{rem:bana}
    One may also recover Banaszszyk's bound for \Komlos (or Beck-Fiala) instances by repeating the argument from \pref{sec:small}, replacing $\lambda$ by some universal constant larger than 1 and adding an additional orthogonality constraint to large rows, so that the algorithm can pretend that the rows all have bounded $\ell_2$-mass. It follows from our previous observations on the link between the multiplicative weights update method and negative entropy regularization that this would be equivalent to the approach for recovering Banaszczyk's bound in \cite{LRR17}.
\end{remark}

\subsection{Application to random instances}
\label{sec:corollary}

\subsubsection{Random orthogonal matrices}
\label{sec:random}

The next consequence of our bound for pseudorandom instances is that Koml\'os conjecture is true for \textit{random} rotation matrices. An equivalent geometric way to state our result is the following: there exists a universal constant $C>0$ such that when we randomly rotate the $n$-dimensional hypercube around the origin, with high probability there exists a corner at $\ell_{\infty}$-distance at most $C$ from the origin. 

Rotation matrices appear to be hard instances for proving the \Komlos conjecture, as present proof techniques merely manage to match the discrepancy bounds to those for the suprema of Rademacher processes involving the transpose matrix. Improving beyond $O(\sqrt{\log n})$ discrepancy for orthogonal matrices would therefore provide new techniques for treating Rademacher/Gaussian processes for structured matrices. A first step in making progress on this front would therefore be to consider \textit{random} orthogonal matrices.

What we mean by random rotation is a random matrix distributed according to the Haar measure on the orthogonal group $\mathcal O(n)$. The Haar measure is a natural generalization of the uniform distribution. We can just think of the sampling as picking the matrix columns to be i.i.d. standard Gaussians in $\R^n$ (which will be linearly independent almost surely), and orthonormalizing them with the Gram-Schmidt process. 

We explicitly computed small moments of the entries of such a random matrix with the help of the Maple package \texttt{IntHaar} \cite{GK21}.

\begin{claim}
	\label{claim:haar}
	Suppose $A$ is distributed according to the Haar measure on $\mathcal O(n)$. Then,
	\begin{align*}
		&\E \left[A_{11}^8\right]=\frac{105}{n(n+2)(n+4)(n+6)}\mcom\\
		&\E \left[A_{11}^4 A_{12}^4\right]=\frac{9}{n(n+2)(n+4)(n+6)}\mcom\\
		&\E \left[A_{11}^2 A_{12}^2 A_{21}^2 A_{22}^2\right]=\frac{n^2+4n+15}{n(n+2)(n-1)(n+1)(n+4)(n+6)}\mper
	\end{align*}
\end{claim}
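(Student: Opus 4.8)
The plan is to compute these three mixed moments of the entries of a Haar-random orthogonal matrix $A \sim \mathcal{O}(n)$ by exploiting the invariance of the Haar measure together with the classical Weingarten calculus for the orthogonal group. The key structural fact is that for $A$ Haar-distributed, the joint law of the entries of the first few rows and columns is governed by formulas of the form $\E[A_{i_1 j_1}\cdots A_{i_{2p} j_{2p}}] = \sum_{\sigma,\tau} \mathrm{Wg}^{\mathcal{O}}(n,\sigma\tau^{-1}) \,\delta_{\text{row pairing }\sigma}\,\delta_{\text{column pairing }\tau}$, where the sums run over pair partitions of $2p$ indices, and $\mathrm{Wg}^{\mathcal{O}}$ is the orthogonal Weingarten function. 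Since all three moments in \pref{claim:haar} are degree-$8$ polynomials (so $p=4$), the sums range over the $105$ pair partitions of an $8$-element set, but for each specific choice of row/column index multiset most pairings contribute zero, so in practice only a handful of terms survive. Rather than rederive the Weingarten values by hand, I would simply record that the stated numerics are exactly the output of the \texttt{IntHaar} Maple package of \cite{GK21}, which implements this calculus; the three displayed identities are then a direct symbolic computation.

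Concretely, the steps are as follows. First I would fix the relevant index structure for each moment: $\E[A_{11}^8]$ involves one row index ($1$) and one column index ($1$) each repeated $8$ times; $\E[A_{11}^4 A_{12}^4]$ involves row index $1$ with multiplicity $8$ and column indices $1,2$ each with multiplicity $4$; and $\E[A_{11}^2 A_{12}^2 A_{21}^2 A_{22}^2]$ involves row indices $1,2$ and column indices $1,2$ each with multiplicity $4$. Second, I would invoke the orthogonal Weingarten expansion: the only surviving pairings $(\sigma,\tau)$ are those for which $\sigma$ pairs positions with equal row index and $\tau$ pairs positions with equal column index. Third, I would collect the surviving terms, evaluate the corresponding Weingarten values $\mathrm{Wg}^{\mathcal{O}}(n,\cdot)$ on the relevant cycle types (these are rational functions of $n$ with denominators dividing $n(n+1)(n+2)(n-1)(n+4)(n+6)$, explaining the shape of the answers), and sum. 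Fourth, I would simplify the resulting rational function of $n$ and check it against the three closed forms stated in the claim.

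A useful sanity check that I would carry out in parallel, and which can even serve as an independent verification avoiding Weingarten calculus entirely, is to use the fact that each individual column of $A$ is uniformly distributed on the unit sphere $S^{n-1}$, so that $(A_{11}^2,\ldots,A_{n1}^2)$ follows a Dirichlet$(\tfrac12,\ldots,\tfrac12)$ distribution. This immediately yields $\E[A_{11}^8] = \frac{(1/2)(3/2)(5/2)(7/2)}{(n/2)(n/2+1)(n/2+2)(n/2+3)} = \frac{105}{n(n+2)(n+4)(n+6)}$ and similarly $\E[A_{11}^4 A_{12}^4]$ from the two-column marginal of the Dirichlet, once one accounts for the fact that columns $1$ and $2$ are not independent but jointly uniform on the Stiefel manifold $V_2(\R^n)$ — the pair $(A^1, A^2)$ has a density proportional to the indicator that they are orthonormal, and one can integrate out using a Beta-type change of variables. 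The last moment, $\E[A_{11}^2 A_{12}^2 A_{21}^2 A_{22}^2]$, genuinely mixes two rows and two columns and is the one where the full Weingarten machinery (or equivalently a careful computation on $V_2(\R^n)$ together with the induced marginal on the top-left $2\times 2$ block) is hardest to avoid.

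The main obstacle is precisely this third identity: its numerator $n^2+4n+15$ is not a product of shifted linear factors, which signals genuine cancellation between several Weingarten terms of different cycle types, and one must be careful not to drop a surviving pairing or mis-evaluate a Weingarten value. The cleanest rigorous route is therefore to treat the computation as a finite, mechanical application of the orthogonal Weingarten formula — which is exactly what \texttt{IntHaar} automates — and to regard the Dirichlet/Stiefel computations above only as a cross-check for the first two (and a partial check for the third, e.g. by verifying the leading $n\to\infty$ asymptotics $\E[A_{11}^2 A_{12}^2 A_{21}^2 A_{22}^2] \sim n^{-4}$, consistent with the four entries behaving asymptotically like independent $N(0,1/n)$ variables). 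Hence the proof of \pref{claim:haar} reduces to citing \cite{GK21} for the symbolic evaluation, with the moment formulas above serving as the verification that the package output is correct.
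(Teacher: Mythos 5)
Your proposal takes essentially the same route as the paper: the paper establishes \pref{claim:haar} precisely by evaluating these moments symbolically with the \texttt{IntHaar} Maple package of \cite{GK21}, i.e.\ by the mechanical orthogonal Weingarten computation you describe, so citing that computation is exactly what is done. Your Dirichlet cross-checks are a sensible addition (and in fact the second moment also follows directly from the Dirichlet$(\tfrac12,\ldots,\tfrac12)$ law, since $A_{11}$ and $A_{12}$ lie in the same row, which is itself uniform on the sphere, so no Stiefel-manifold argument is needed there), but the core argument coincides with the paper's.
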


\begin{corollary}[Koml\'os conjecture for random rotations]
	There is a deterministic algorithm that given a Haar-distributed random matrix $A$ on $\mathcal O(n)$, finds with high probability $x\in\{\pm 1\}^n$ such that
	\[
	    \| Ax\|_\infty = O(1)\mper
	\]
\end{corollary}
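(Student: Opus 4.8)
The goal is to invoke \pref{thm:prkomlos} on a Haar-random orthogonal matrix $A$, so the entire task reduces to showing that $\lambda(A) = O(1/\log n)$ with high probability. Once that is established, \pref{thm:prkomlos} gives a deterministic polynomial-time algorithm producing $x \in \{\pm 1\}^n$ with $\|Ax\|_\infty = O(1 + \sqrt{\lambda(A)\log n}) = O(1)$, and (since the columns of a Haar-random orthogonal matrix have exactly unit $\ell_2$-norm) the hypothesis of \pref{thm:prkomlos} is satisfied automatically. So the plan is: (i) recall $\lambda(A) = \|B\|_{\mathrm{op}\,\to\,\mathbf 1^\perp}$ where $B_{ij} = A_{ij}^2$; (ii) bound $\lambda(A)$ by controlling the deviation of $B$ from its ``expected'' rank-one-ish structure; (iii) feed the resulting bound into \pref{thm:prkomlos}.

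\textbf{Bounding $\lambda(A)$.} The matrix $B = A^{\odot 2}$ is entrywise nonnegative with every row and every column summing to exactly $1$ (since the rows and columns of $A$ are unit vectors), so $B$ is doubly stochastic and $B\mathbf 1 = \mathbf 1$, $B^\top \mathbf 1 = \mathbf 1$. Writing $J = \frac{1}{n}\mathbf 1\mathbf 1^\top$, we have $\lambda(A) = \|B u\|_2/\|u\|_2$ maximized over $u \perp \mathbf 1$, which is exactly $\|B - J\|_{\mathrm{op}}$ (using $BJ = JB = J$). The natural route is the second moment / trace method: bound $\E\|B - J\|_{\mathrm{op}}$ by a high power of the Schatten norm, $\E\|B-J\|_{\mathrm{op}}^{2k} \le \E \operatorname{tr}\big((B-J)(B-J)^\top\big)^{k} = \E \operatorname{tr}\big((BB^\top - J)^k\big)$, and expand the trace as a sum over closed walks of length $2k$ in the complete graph, each term being an expectation of a product of $2k$ squared entries $A_{i_\ell j_\ell}^2$ of the Haar matrix. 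The moment computations in \pref{claim:haar} (and their natural generalizations to longer products, which can be obtained from the Weingarten calculus / the \texttt{IntHaar} package) show that these entries behave, up to $1+O(1/n)$ factors, like the squares of i.i.d.\ $\mathcal N(0,1/n)$ Gaussians: a product of $2k$ distinct squared entries has expectation $\approx n^{-2k}$, and each ``collision'' (a repeated index pair) costs an extra factor $O(1/n)$ while saving a summation index worth a factor $n$. A careful bookkeeping of walks then yields $\E\operatorname{tr}\big((BB^\top-J)^k\big) \le (C/n)^{k} \cdot \mathrm{poly}(k)$ for an absolute constant $C$, hence $\E\|B-J\|_{\mathrm{op}} \le C'/\sqrt n$ and, by Markov on the $2k$-th moment with $k \asymp \log n$, $\lambda(A) \le C''\sqrt{\log n / n}$ with probability $1 - n^{-\omega(1)}$. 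Since $\sqrt{\log n/n} = o(1/\log n)$, this is far more than we need.

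\textbf{Assembling the corollary.} Condition on the high-probability event $\lambda(A) \le C''\sqrt{\log n/n}$. Then $\sqrt{\lambda(A)\log n} \le C''^{1/2}(\log n)^{3/4} n^{-1/4} = O(1)$, so \pref{thm:prkomlos} applied to $A$ deterministically returns $x \in \{\pm 1\}^n$ with $\|Ax\|_\infty = O(1)$; the overall algorithm is: run the \pref{thm:prkomlos} algorithm on the given $A$ and output its coloring (it is guaranteed to work whenever the high-probability event on $\lambda(A)$ holds). This proves the corollary.

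\textbf{Main obstacle.} The delicate point is the combinatorial moment bound on $\E\operatorname{tr}\big((BB^\top - J)^k\big)$: one must verify that, over closed walks of length $2k$, the Haar-measure corrections to the ``i.i.d.\ squared-Gaussian'' heuristic — captured at low order by \pref{claim:haar} and in general by Weingarten functions — are genuinely lower-order, i.e.\ never conspire to blow up a term. In particular one has to handle the mild negative correlations between squared entries in different rows/columns (visible already in the third identity of \pref{claim:haar}, where the leading term is $n^{-4}$ but with an $n^2+4n+15$ numerator rather than $n^2$, giving a $1+O(1/n)$ factor) and confirm they only help. A cleaner alternative that sidesteps explicit Weingarten bookkeeping is to use the Gaussian model directly: write $A = G(G^\top G)^{-1/2}$ with $G$ having i.i.d.\ $\mathcal N(0,1)$ entries, note $(G^\top G)^{-1/2} = \frac{1}{\sqrt n}\big(I + O(\sqrt{\log n/n})\big)$ in operator norm with high probability by standard Gaussian concentration, so $A_{ij}^2 = \frac{1}{n}G_{ij}^2 (1 + o(1))$ entrywise in a controlled sense, and then the needed bound on $\lambda(A)$ follows from the corresponding (and easier) bound for the matrix of squared i.i.d.\ Gaussians, for which $\|B^{\mathrm{Gauss}} - \frac1n\mathbf 1\mathbf 1^\top\|_{\mathrm{op}} = O(\sqrt n / n) \cdot \mathrm{polylog}$ is a routine sub-exponential matrix Bernstein computation. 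Either way, the arithmetic is standard; the only real work is making the $1 + O(1/n)$ approximations precise enough to survive being raised to the $k \asymp \log n$ power.
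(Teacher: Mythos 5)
Your high-level plan is the same as the paper's --- reduce everything to showing $\lambda(A)$ is small with high probability and then invoke \pref{thm:prkomlos} --- and your assembly step is fine (including the observation that Haar-distributed columns have exactly unit norm). The gap is that the probabilistic estimate at the heart of the argument is never actually established. Your main route asserts $\E\,\Tr\big((BB^\top-J)^k\big)\le (C/n)^k\,\mathrm{poly}(k)$ for $k\asymp\log n$, but this is precisely the hard part: it needs uniform control of Weingarten asymptotics for order-$2k$ mixed moments of Haar-orthogonal entries with $k$ growing, and the claim that each ``collision'' costs only an extra $O(1/n)$ does not follow from \pref{claim:haar}, which covers only degree-$8$ moments. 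Worse, your ``cleaner alternative'' rests on a false statement: for a \emph{square} i.i.d.\ Gaussian matrix $G$, the smallest singular value of $G$ is of order $n^{-1/2}$ with high probability, so $(G^\top G)^{-1/2}$ has operator norm of order $\sqrt n$ and is nowhere near $\frac{1}{\sqrt n}\big(I+O(\sqrt{\log n/n})\big)$; that approximation is valid only for tall matrices with aspect ratio bounded away from $1$, and the claimed entrywise comparison $A_{ij}^2=\frac1n G_{ij}^2(1+o(1))$ does not follow. So as written, neither branch of the key estimate is proved.

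The paper avoids all of this by noting that \pref{thm:prkomlos} only requires the crude bound $\lambda(A)=O(1/\log n)$, so a single low-order moment suffices: setting $B:=(A^{\odot 2})^\top A^{\odot 2}$, double stochasticity of $A^{\odot 2}$ gives $B\mathbf 1=\mathbf 1$, hence $1$ and $\lambda(A)^2$ are both eigenvalues of the PSD matrix $B$ and $1+\lambda(A)^4\le \Tr B^2$. Expanding $\E\,\Tr B^2$ uses only the three degree-$8$ moments computed in \pref{claim:haar} and gives $\E\,\Tr B^2=1+O(1/n)$, whence $\E\,\lambda(A)^4=O(1/n)$ and Markov yields $\lambda(A)\le 1/\log n$ with probability $1-O\big((\log n)^4/n\big)$. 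No high trace powers, no general Weingarten bookkeeping, and no Gaussian comparison are needed; your target bound $\lambda(A)\lesssim\sqrt{\log n/n}$ is much stronger than required, and aiming for it is what forces you into the unverified machinery. If you want to salvage your write-up, replace the $k\asymp\log n$ trace method by this fourth-moment computation, or else genuinely carry out the Weingarten walk-counting, which is substantially more work than ``standard arithmetic.''
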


\begin{proof}
	Our proof is inspired by the observations in the proof of Theorem 1 of \cite{B01} for the Haar measure on the unitary group. Consider $B\defeq (A^{\odot 2})^\top A^{\odot 2}$. Using \pref{claim:haar}, we see that
	\begin{align*}
		\E \Tr B^2&=\sum_{1\le i,j,k,l\le n} \E\left[A_{ij}^2 A_{il}^2 A_{kj}^2 A_{kl}^2\right]\\
		&=n^2\E \left[A_{11}^8\right]+2n^2(n-1)\E \left[A_{11}^4 A_{12}^4\right]+n^2(n-1)^2\E \left[A_{11}^2 A_{12}^2 A_{21}^2 A_{22}^2\right]\\
		&=1+O\left(\frac{1}{n}\right)\mper
	\end{align*}
	Note that $1$ and $\lambda(A)^2$ are eigenvalues of the positive semidefinite matrix $B$, so $1+\lambda(A)^4\le \Tr B^2$, and $\E \lambda(A)^4\le O\left(\frac{1}{n}\right)$ by the previous estimate. By Markov's inequality, $\lambda(A)\le \frac{1}{\log n}$ with high probability, and we conclude by applying \pref{thm:prkomlos}.
\end{proof}

\subsubsection{Random Gaussian matrices}\label{sec:gaussian}

Next we show that for matrices with random Gaussian entries, the corresponding $\lambda$ parameter is small. 
Without loss of generality, we assume that the input matrix is square, as otherwise (in the regime $m\geq n$) we can add extra columns while only worsening $\lambda$. 
We assume that each entry is sampled i.i.d. from $\mathcal{N}(0,\sigma^2)$ with $\sigma = \frac{1}{\sqrt{n}}$, so that all column norms are tightly concentrated around $1$, i.e.
\[
    \Pr(1 -\varepsilon \leq\|   A^j   \|_2^2 \leq  1+\varepsilon) \geq 1 - 2 \exp(-\frac{n \varepsilon^2}{8})\mcom
\]
which follows from standard concentration bounds.

\begin{claim}\label{claim:gaussianbd}
Given a random Gaussian matrix  $A \in \R^{n \times n}$, where entries are i.i.d. Gaussians $\mathcal{N}(0,\sigma^2)$ with $\sigma = \frac{1}{\sqrt{n}}$, one has that
\[
    \max_{\langle u, \mathbf{1}\rangle=0} \frac{ \| A^{\odot 2} u \|_2}{\| u\|_2 } \leq \frac{1}{\sqrt{n}}
\]
with high probability.
\end{claim}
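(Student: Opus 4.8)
The plan is to reduce the claim to a spectral-norm bound on a centered random matrix. Write $B \defeq A^{\odot 2}$, so that $B_{ij} = A_{ij}^2$, and decompose $B = \frac1n \mathbf 1 \mathbf 1^\top + E$ with $E_{ij} \defeq A_{ij}^2 - \frac1n$. The entries of $E$ are independent, mean zero, and sub-exponential: each $A_{ij}^2$ equals $\frac1n$ times a $\chi^2_1$ variable, so $\E E_{ij}^2 = \frac{2}{n^2}$. For every $u$ with $\langle u, \mathbf 1\rangle = 0$ we have $Bu = \frac1n\mathbf 1(\mathbf 1^\top u) + Eu = Eu$, hence the quantity $\lambda(A)$ appearing in the claim equals $\sup_{\|u\|_2 = 1,\, u \perp \mathbf 1}\|Eu\|_2 \le \|E\|_{\mathrm{op}}$. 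It therefore suffices to prove $\|E\|_{\mathrm{op}} = O(1/\sqrt n)$ with high probability; in fact the weaker bound $O(\sqrt{(\log n)/n})$ would already be enough downstream, since combining the claim with \pref{thm:prkomlos} only requires $\lambda(A) = O(1/\log n)$.

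The heuristic is that $\|E\|_{\mathrm{op}} = \Theta(1/\sqrt n)$, being the norm of an $n \times n$ matrix with independent entries of standard deviation $\Theta(1/n)$; the subtlety is that the entries are only sub-exponential, so a plain $\varepsilon$-net together with Bernstein's inequality loses polynomial factors (the worst-case bilinear test vectors $u_i v_j$ are too spiky). The first real step is therefore a truncation. Fix a large constant $C$ and set $\tau \defeq \frac{C\log n}{n}$; since $\Pr(\chi^2_1 > t)\le e^{-t/2}$, a union bound over the $n^2$ entries shows that with probability at least $1 - n^{2 - C/2}$ every entry satisfies $A_{ij}^2 \le \tau$. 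On this event $E$ agrees, up to an irrelevant multiple of $\mathbf 1\mathbf 1^\top$ (which annihilates $\mathbf 1^\perp$), with the truncated centered matrix $\hat E_{ij} \defeq A_{ij}^2\,\mathbf 1[A_{ij}^2 \le \tau] - \E[A_{ij}^2\,\mathbf 1[A_{ij}^2\le\tau]]$, whose entries are independent, mean zero, bounded by $\tau$ in absolute value, and still of variance at most $\frac{2}{n^2}$; in particular $\max_i \sum_j \E\hat E_{ij}^2 \le \frac2n$ and likewise over columns.

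The final step is a concentration bound for the operator norm of a matrix with independent bounded entries. Applying the rectangular matrix Bernstein inequality to $\hat E = \sum_{i,j}\hat E_{ij}\, e_i e_j^\top$ (each summand has norm $\le \tau$, and both matrix variances are $\le \frac2n$) gives $\E\|\hat E\|_{\mathrm{op}} \lesssim \sqrt{(\log n)/n} + \tau \log n \lesssim \sqrt{(\log n)/n}$, together with an exponential tail, which already yields $\lambda(A) = O(\sqrt{(\log n)/n})$ with high probability. To recover the sharper $O(1/\sqrt n)$ stated, one symmetrizes $\hat E$ and invokes a sharp non-asymptotic norm bound for random matrices with independent entries (of the Bandeira--van Handel flavor), which controls the norm by $O(\sigma_1 + \sigma_*\sqrt{\log n})$ with $\sigma_1 = \max_i\big(\sum_j \E\hat E_{ij}^2\big)^{1/2} = \sqrt{2/n}$ and $\sigma_* = \max_{ij}\|\hat E_{ij}\|_\infty = O(\tau)$, so the second term is only $O(\mathrm{polylog}(n)/n)$. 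The main obstacle is precisely this last point -- handling the unbounded sub-exponential entries of $A^{\odot 2}$ accurately enough to avoid a spurious $\sqrt{\log n}$ factor -- and, if one insists on the exact constant $1$ in the displayed bound rather than merely $O(1/\sqrt n)$, a correspondingly finer accounting; as noted above, only the order of magnitude (indeed something much weaker) is needed for the application.
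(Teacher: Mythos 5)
Your argument is correct, and its first step coincides with the paper's: you write $A^{\odot 2}=\frac{1}{n}\mathbf{1}\mathbf{1}^T+E$ with $E$ the centered matrix, note that the rank-one part vanishes on $\mathbf{1}^\perp$, and reduce to bounding $\|E\|_{\text{op}}$. Where you diverge is the norm bound itself. The paper simply observes that $E$ has i.i.d. centered entries with $\E E_{11}^2=O(1/n^2)$ and $\E E_{11}^4=O(1/n^4)$ and cites a standard fourth-moment operator-norm bound (Theorem 2.3.8 in Tao's book) to get $\|E\|_{\text{op}}=O(1/\sqrt n)$ in one line; no truncation is needed because that result only requires finite fourth moments, so the sub-exponential ($\chi^2$-type) nature of the entries is absorbed automatically. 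Your route instead truncates at $\tau=C\log n/n$ (correctly noting that the discrepancy between $E$ and the truncated-and-recentered matrix is a multiple of $\mathbf{1}\mathbf{1}^T$, hence invisible on $\mathbf{1}^\perp$) and then applies matrix Bernstein, which already yields $\lambda(A)=O(\sqrt{(\log n)/n})$ -- amply sufficient, since \pref{thm:prkomlos} only needs $\lambda(A)=O(1/\log n)$ -- and a Bandeira--van Handel-type bound after dilation to recover $O(1/\sqrt n)$. The trade-off: the paper's proof is shorter and outsources all the work to one classical theorem, while yours is more hands-on, comes with explicit tail probabilities, and makes transparent exactly how much is needed downstream. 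One shared caveat: the claim as displayed asserts the constant $1$, which neither the paper's citation nor your argument literally delivers (both give $O(1/\sqrt n)$); you flag this correctly, and it is immaterial for the application.
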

\begin{proof}

Let $B\defeq A^{\odot 2}-\frac{\mathbf{1}\mathbf{1}^\top}{n}$. Observe that
\[
    \max_{\|u\|_2=1,\langle u,\mathbf{1}\rangle = 0} \|A^{\odot 2} u\|_2\le \max_{\|u\|_2=1}\|B u\|_2\mper
\]
Now, $B$ is a matrix with i.i.d. entries such that $\E B_{11}=0$, $\E B_{11}^2=O(1/n^2)$ and $\E B_{11}^4=O(1/n^4)$, so by a standard result from random matrix theory (see e.g. Theorem 2.3.8 of \cite{taobook}), it holds that $n\|B\|_{\text{op}}=O(\sqrt{n})$ with high probability. It readily follows that $\lambda(A)\le 1/\sqrt{n}$ with high probability.
\end{proof}

This shows that by applying \pref{thm:prkomlos} random Gaussian matrices have discrepancy $O(1)$.
\begin{corollary}
Given a random Gaussian matrix  $A \in \mathbb{R}^{n \times n}$, where entries are i.i.d. Gaussians $\mathcal{N}(0,\sigma^2)$, $\sigma = \frac{1}{\sqrt{n}}$, there exists a deterministic algorithm that finds a coloring $x \in \{\pm 1\}^n$ such that $\|Ax \|_\infty = O(1)$.
\end{corollary}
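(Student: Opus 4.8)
The plan is to reduce the statement directly to \pref{thm:prkomlos}, using \pref{claim:gaussianbd} to certify that the pseudorandomness parameter $\lambda(A)$ is vanishingly small. The one subtlety is that the columns of a Gaussian matrix have $\ell_2$-norm concentrated around $1$ but not literally bounded by $1$, so a harmless rescaling is needed before \pref{thm:prkomlos} can be invoked.

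First I would isolate the high-probability event we work on. Applying the $\chi^2$-concentration inequality recalled just before \pref{claim:gaussianbd} with $\varepsilon = 1/2$ and taking a union bound over the $n$ columns, with probability $1 - 2n\exp(-n/32) = 1-o(1)$ every column satisfies $\tfrac12 \le \|A^j\|_2^2 \le \tfrac32$; simultaneously \pref{claim:gaussianbd} gives $\lambda(A) \le 1/\sqrt n$ with high probability. Let $\mathcal E$ be the intersection of these events, so $\Pr(\mathcal E) = 1-o(1)$.

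The deterministic algorithm is then: compute $c \defeq \max_{j\in[n]} \|A^j\|_2$, set $\tilde A \defeq A/c$ (so that every column of $\tilde A$ has $\ell_2$-norm at most $1$, unconditionally), run the deterministic polynomial-time algorithm of \pref{thm:prkomlos} on $\tilde A$, and output its coloring $x \in \{\pm 1\}^n$. By \pref{thm:prkomlos}, $\|\tilde A x\|_\infty = O(1 + \sqrt{\lambda(\tilde A)\log n})$. On $\mathcal E$ we have $c^2 \in [\tfrac12,\tfrac32]$, hence $\lambda(\tilde A) = \lambda(A)/c^2 \le 2\lambda(A) \le 2/\sqrt n$, so $\sqrt{\lambda(\tilde A)\log n} \le \sqrt{2\log n/\sqrt n} = o(1)$ and therefore $\|\tilde A x\|_\infty = O(1)$. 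Finally $\|Ax\|_\infty = c\,\|\tilde A x\|_\infty \le \sqrt{3/2}\cdot O(1) = O(1)$, which is the claim; since all the randomness lives in the input $A$, the algorithm itself is deterministic, matching the statement.

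I do not anticipate a genuine obstacle: the substance is entirely carried by \pref{thm:prkomlos} and \pref{claim:gaussianbd}. The only point requiring care is the normalization — one must check that dividing by the largest column norm (i) makes the hypothesis of \pref{thm:prkomlos} hold verbatim, (ii) blows up $\lambda$ by at most a constant factor on $\mathcal E$ (which uses the lower bound $\|A^j\|_2^2 \ge \tfrac12$, not only the upper bound), and (iii) inflates the final discrepancy by only a constant. A variant dividing by a fixed constant rather than by $c$ avoids making the algorithm depend on the realized column norms, at the price of a slightly larger but still $O(1)$ bound.
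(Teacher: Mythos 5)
Your proposal is correct and follows essentially the same route as the paper: certify $\lambda(A)\le 1/\sqrt n$ via \pref{claim:gaussianbd} and invoke \pref{thm:prkomlos}. Your extra rescaling by the maximal column norm simply makes explicit the normalization that the paper handles only informally (via the concentration of column norms around $1$), and it is handled correctly, including the constant-factor effect on $\lambda$ and on the final discrepancy.
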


\section{Discussion on symmetric Beck-Fiala instances}
\label{sec:future}

An interesting special case of the Beck-Fiala conjecture is when the matrix $A$ is the adjacency matrix of some $s$-regular graph. It turns out that in this setting, a folklore argument based on \Lovasz Local Lemma implies that there exists a coloring with discrepancy $O(\sqrt{s\log s})$. Although there is an algorithm to construct such a coloring in polynomial time, it is not captured by the iterative framework we introduced in this paper. It is in our opinion a great open problem to unify those two lines of work.

To formulate the problem more precisely, we provide a new streamlined and self-contained analysis of the algorithm matching the bound based on \Lovasz Local Lemma. In particular, it highlights the differences with the sticky walk approach. Our inspiration is an argument of \cite{AIS19} for finding a satisfying assignment of bounded degree $k$-SAT instances. 

\begin{theorem}[Folklore]
	\label{thm:bflll}
	There is a randomized algorithm that, given $A\in\{0,1\}^{n\times n}$ with at most $s$ nonzero entries per row and at most $s$ nonzero entries per column, finds with high probability in polynomial time a coloring $x\in\{\pm 1\}^n$ such that $\lVert Ax\rVert_{\infty} =O (\sqrt{s\log s})$.
\end{theorem}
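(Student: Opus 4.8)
The plan is to prove existence with the symmetric \Lovasz Local Lemma (LLL) and then turn it into an algorithm via a Moser-style resampling procedure, whose analysis --- following the entropy-compression argument of \cite{AIS19} --- I would reproduce in this special case. For the probabilistic setup, sample $x\in\{\pm1\}^n$ uniformly at random. For each row $i$, $(Ax)_i$ is a sum of at most $s$ independent Rademacher variables, so Hoeffding gives $\Pr[\,|(Ax)_i|>t\sqrt s\,]\le 2e^{-t^2/2}$; setting $t=C\sqrt{\log s}$, the bad event $\mathcal B_i\defeq\{\,|(Ax)_i|>C\sqrt{s\log s}\,\}$ has $\Pr[\mathcal B_i]\le p\defeq 2s^{-C^2/2}$. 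Crucially $\mathcal B_i$ is a function of only the $\le s$ variables in the support of row $i$, and by the column-sparsity bound each such variable lies in $\le s$ rows; hence $\mathcal B_i$ shares variables with at most $d\le s^2$ other bad events. The symmetric LLL then applies as soon as $e\,p\,(d+1)\le1$, i.e. $2e\,s^{-C^2/2}(s^2+1)\le1$, which holds with a suitable absolute constant $C$ for every $s$ above an absolute constant (and for $s=O(1)$ there is nothing to prove under the natural reading of the bound, since $\|Ax\|_\infty\le s=O(1)$ for any $x$). This already yields a coloring with $\|Ax\|_\infty=O(\sqrt{s\log s})$.

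\textbf{Algorithm and its analysis.} For the constructive statement, run the obvious resampling procedure: start from a uniform $x$; while some row is bad, pick the lexicographically first bad row $i$ and resample, independently and uniformly, all variables in its support; return $x$ as soon as no row is bad (the output is then automatically a valid coloring). I would bound the number of resampling steps by the \emph{witness-tree} method of Moser--Tardos, specialized to this instance: a run of length $T$ produces, for each step, a labelled rooted tree on rows whose children always lie among the $\le d+1$ dependency-neighbours, and these trees are pairwise distinct; the probability that a fixed such tree on $k$ nodes is ``consistent'' with the random tape is $\le p^k$, while the number of such trees is $\le(4(d+1))^k$ by the usual Catalan-type count. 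Since $C$ can be chosen so that $4(d+1)p<1$, the series $\sum_k(4(d+1)p)^k$ converges; therefore the expected number of times any $\mathcal B_i$ is resampled is $O(1)$, the total expected number of steps is $O(n)$, and a Markov bound gives termination within a polynomial number of steps with high probability (boostable by restarting, or by enlarging $C$). Each step costs $\mathrm{poly}(n)$ time, giving the claimed randomized polynomial-time algorithm.

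\textbf{Main obstacle.} Essentially all the work is in the last step: giving a short, genuinely self-contained proof of the witness-tree bound --- equivalently, the entropy-compression statement that a $T$-step run would let us reconstruct the $\Theta(Ts)$ consumed random bits from an encoding of size $\approx T\log_2\!\big(4(d+1)\big)+O(s+\log n)$ (the sequence of resampled rows, recorded as moves in the dependency graph, together with the final assignment), which is impossible once $T$ is polynomially large. The reason to follow \cite{AIS19} rather than invoke Moser--Tardos as a black box is to keep this bookkeeping light and, as the section promises, to make visible the contrast with the iterative framework of this paper: here the improvement over the union bound is \emph{local}, since repairing a violated row touches only its own $s$ variables and disturbs only its $\le s^2$ neighbours, whereas the sticky-walk algorithm updates all active coordinates simultaneously and cannot exploit this locality. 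A minor secondary point is to fix a single constant $C$ that simultaneously makes the LLL slack strictly below $1$, makes the above geometric series converge, and covers the $s=O(1)$ boundary regime.
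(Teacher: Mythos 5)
Your proposal is correct and takes essentially the same route as the paper: the identical algorithm (uniform random start, repeatedly resample all variables of the first bad row, badness calibrated by a Chernoff/Hoeffding bound with dependency degree at most $s^2$), analyzed by a Moser-style count of execution logs against their probability of consistency --- the paper does this as a direct encoding/counting bound on length-$t$ runs via labelled rooted forests, while you phrase it through Moser--Tardos witness trees plus Markov, invoking the same inspiration \cite{AIS19}. One cosmetic caveat: in your compression aside the per-step record must also include the erased (bad) configuration of the resampled row, costing about $s-\Omega(\log s)$ bits per step (this is exactly where the paper's Chernoff count of at most $2^s/s^8$ bad assignments enters), though your primary witness-tree accounting already captures this correctly through the $p^k$ factor.
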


\begin{proof}
	We will call a row $\textit{bad}$ (w.r.t. an implicit full coloring) when its discrepancy is larger than $4\sqrt{s\log s}$. We consider the following algorithm. First, we generate a uniformly random coloring. Then we repeat $t$ times the operation of picking the bad row with smallest index (unless there is none, in which case we stop) and resampling all the variables appearing in it.

	Since each constraint contains at most $s$ variables and each variable appears in at most $s$ constraints, any constraint has nonempty intersection with at most $s^2$ other constraints. 
	
	Define $C_t$ to be the set of all ordered sequences of $t$ constraints that have nonzero probability to be picked in that order by the algorithm. The execution of the algorithm can be described as a rooted forest of $t$ vertices, each one of these corresponding to a constraint that is picked. When a constraint is picked, it can create at most $s^2$ children, each of which corresponding to a constraint of lower index that intersects it and became bad after the resampling.

	Therefore, we can encode an element $c\in C_t$ by giving $\{c_i : \forall j\in\{1,\ldots,i-1\}, c_j<c_i\}$, and a rooted forest on $t$ vertices, each (except the roots) with labels between $1$ and $s^2$. It follows from standard combinatorics that
	\[
	    |C_t|\le 2^n \binom{2t}{t} s^{2t}= 2^n (2s)^{2t}\mper
	\]

	Fix a sequence of resampled constraints $c\in C_t$ and a sequence of $t+1$ colorings $u_1, \ldots, u_{t+1} \in\{\pm 1\}^n$. $c$ and $u$ can correspond to a potential execution of the algorithm only if $u_i$ is $u_{i+1}$ where the $c_i$-th constraint of $u_i$ is bad. Applying Chernoff bounds, we see that there can be at most $2^s/s^8$ such $u_i$'s. It follows by induction that there are at most $2^{st}/s^{8t}$ possible sequences $u_1, \ldots, u_{t}$. On the other hand, for any fixed $u_1, \ldots, u_{t+1}, c_1, \ldots, c_t$, the probability that the algorithm follows exactly this sequence of constraints and colorings is at most $2^{-st}$. Hence, by a union bound, the probability that the final coloring that the sequence of constraints is $c_1, \ldots, c_t$ is at most $2^n s^{-8t}$.

    To conclude, we can apply another union bound to get that the probability that the final discrepancy is larger than $\sqrt{2s\log s}$ is at most $|C_t| 2^n s^{-8t}$, which is $n^{-\Omega(1)}$ for some $t=n^{O(1)}$.
\end{proof}

Instead of working with fractional colorings, here we walk directly in the space of full colorings. While the row-sparsity assumption is not really restrictive in the sticky walk framework (as we can always pick update vectors orthogonal to large rows), it seems crucial for arguments based on \Lovasz Local Lemma.

\paragraph{A concrete family of instances.} We now introduce a family of discrepancy instances for which 
the tools we use to analyze our iterative framework fail to provide interesting bounds. It is an interesting question whether a more refined analysis will yield an improved discrepancy bound.

We believe these examples are essential in benchmarking attempts at improving discrepancy bounds, so we  will consider them to be candidate hard instances.

\begin{definition}[Twisted Hypercubes]
	The graph on one vertex is the only twisted hypercube of dimension $0$. A \textit{twisted hypercube} of dimension $d$ is then obtained by taking two copies of the same twisted hypercube of dimension $d-1$, and adding a matching between both vertex sets.\footnote{A slightly different construction consists in adding a matching between two \textit{potentially distinct} twisted hypercubes of dimension $d-1$. This is for example the convention chosen in one previous use of the term ``twisted hypercube'' in the litterature \cite{DPPQWZ18}. We believe it does not make much difference in the discrepancy setting.}
\end{definition}

Twisted hypercubes of dimension $d$ have $n=2^d$ vertices, each of degree $O(\log n)$. \pref{thm:bflll} implies that they have colorings of discrepancy $O(\sqrt{\log n \log \log n})$, but \pref{thm:prbf} only gives a bound of $O(\log n)$ (note that this bound would also follow from \cite{BF81}). We believe it would be interesting to find a construction of colorings of twisted hypercubes of discrepancy $o(\log n)$ using the sticky walk approach.\footnote{One could also consider randomly twisted hypercubes, obtained by taking recursively uniformly random matchings. They form a family of structured random instances that are not captured by our pseudorandom bounds.}

\begin{remark}
    Although smoothing the twisted hypercube is not sufficient to apply our bounds on pseudorandom Beck-Fiala instances, we can at least transform it into the adjacency matrix of a (multi-)graph with constant spectral expansion, while only losing an $O(1)$-additive factor on the discrepancy of any coloring. Therefore, our question on the discrepancy of symmetric instances could be reduced to that of the discrepancy of symmetric \textit{expanding} instances.
\end{remark}

\section*{Acknowledgments}

LP's work on this project has received funding from the European Research Council (ERC) under the European
Union’s Horizon 2020 Programme for Research and Innovation (grant agreement No. 834861). AV acknowledges the support of the French Agence Nationale de la Recherche (ANR), under grant ANR-21-CE48-0016 (project COMCOPT).
We thank Lorenzo Orecchia for many helpful conversations on the topic of discrepancy, and the anonymous reviewers for their insightful comments and suggestions.

\bibliographystyle{alpha}
\bibliography{main}

\pagebreak

\appendix

\section{An iterative algorithm for ellipsoid discrepancy}
\label{sec:ellipsoid}

In view of attacking harder discrepancy problems such as the Beck-Fiala and \Komlos conjectures, it is worth noting that a major obstacle in achieving the conjectured bounds lies in the fact that once many variables get frozen to $\pm 1$, we have a smaller degree of freedom in choosing our update without having seen a significant decrease in the norms of the matrix rows, when restricted to the unfrozen coordinates. 

An alternate strategy would be to use an amortized analysis that bounds how much each coordinate of the diagonal matrix that we employ to upper bound the Hessian of the potential function has contributed so far. The fact that the Hessian can change quite drastically throughout the execution of the algorithm poses some difficulties in realizing this approach. However, we can show that in the case where the Hessian essentially stays constant, we can obtain interesting bounds. 

\paragraph{Ellipsoid discrepancy.} To this extent, we study a simpler discrepancy problem that we call \textit{ellipsoid discrepancy}, and that was initially introduced by Banaszczyk \cite{B90}. Given some positive semidefinite matrix $B\in \R^{n\times n}$, we consider the norm $\|\cdot\|_B$ on $\R^n$ defined by
\[
    \| z\|_{B}\defeq \sqrt{\langle z,Bz\rangle}\mcom\quad\text{ for all $z\in\R^n$}\mper
\]
We are interested in the following ``Euclidean'' version of the \Komlos problem: given a matrix $Q\in\R^{n\times n}$ whose columns have $\ell_2$-norm at most 1, find $x\in\{\pm 1\}^n$ such that $\|Qx\|_B$ is small. Banaszczyk \cite{B90} proved that there always exists a coloring of $Q$ achieving ellipsoid discrepancy $\sqrt{\Tr B}$ (and this is tight for any $B$, as can be seen by taking the columns of $Q$ to be an orthonormal basis of eigenvectors of $B$). It follows implicitly from other works, including the Gram-Schmidt walk algorithm \cite{BDGL19}, that this bound can be matched algorithmically (at least up to the leading constant).\footnote{We thank the anonymous reviewers for pointing this out.} We give a different algorithmic proof, based on our iterative meta-algorithm \pref{alg:iterative}, that highlights how measuring the discrepancy in an amortized sense is sometimes necessary.

\begin{theorem}\label{thm:ellipsoid}
	There is a deterministic algorithm running in polynomial time that given a positive semidefinite matrix $B\in\R^{n\times n}$ and a matrix $Q\in\R^{n\times n}$ with column $\ell_2$-norm bounded by $1$, returns $x\in\{\pm 1\}^n$ such that
	\[
	    \lVert Qx\rVert_B=O\left(\sqrt{\Tr B}\right)\mper
	\]
\end{theorem}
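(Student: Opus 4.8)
The plan is to apply the regularization meta-algorithm \pref{alg:iterative}, using the entropy-free $\ell_q$-type regularizer only as a safeguard and instead tracking an \emph{amortized} potential that exploits the crucial simplification of this setting: the Hessian we need to control does not depend on the current partial coloring. Concretely, observe that $\|Qx\|_B^2 = \langle x, Q^T B Q x\rangle$, so writing $M \defeq Q^T B Q \succcurlyeq 0$, the quantity we must bound is the quadratic form $\langle x, M x\rangle$ at the final corner $x(T)$. Its evolution under an update $\delta$ orthogonal to $x$ is exactly $\langle x+\delta, M(x+\delta)\rangle - \langle x, Mx\rangle = 2\langle x, M\delta\rangle + \langle \delta, M\delta\rangle$; and the second-order term $\langle \delta, M\delta\rangle$ is \emph{genuinely constant} in the sense that the matrix $M$ never changes. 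This is the feature advertised in the appendix's opening paragraph, and it lets us bypass the summation-by-parts machinery of \pref{sec:spencereasy}.

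First I would set up the oracle: at a partial coloring $x$ with active set $F$, $|F| = k \ge C$, let $S$ be the intersection of $\R^F$ with the bottom $\lceil 3k/4\rceil$-dimensional eigenspace of the restriction $P_F M P_F$ (where $P_F$ projects onto coordinates in $F$), further intersected with the hyperplane $\langle \delta, x\rangle = 0$. Since $\Tr(P_F M P_F) \le \Tr M = \Tr(Q^T B Q) = \sum_{j} \langle Q^j, B Q^j\rangle \le \|B\|_{\mathrm{op}}\cdot\ldots$ — more carefully, $\Tr(Q^T B Q) = \langle B, Q Q^T\rangle \le \Tr B$ when the columns of $Q$ have $\ell_2$-norm at most $1$ and one argues $QQ^T \preccurlyeq$ a suitable bound; in any case $\Tr M \le \Tr B + O(1)$ — the averaging argument (exactly as in \pref{lem:quadraticsub}) shows every unit $\delta$ in the bottom $3k/4$-eigenspace satisfies $\langle \delta, M\delta\rangle \le \frac{\Tr B}{k/4} = O(\Tr B / k)$. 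This subspace has codimension at most $k/4 + O(1)$ inside $\R^F$, so it meets \pref{ass:oracle}. As usual, since the second-order term is sign-invariant, I would restrict to the half of $S$ on which the first-order term $2\langle x, M\delta\rangle \le 0$, so each step increases $\langle x, Mx\rangle$ by at most $L^2 \cdot O(\Tr B / k)$ when $\varepsilon(t) > L$, and by at most the (tiny) amount $L^2 \cdot O(\Tr B/k)$ otherwise as well since the step length is then $\le L$.

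Next comes the global accounting, which is where the setting genuinely differs from Spencer's. Let $\beta_k$ denote the total $\ell_2$-squared mass of updates injected while at most $k$ coordinates are active; since updates are orthogonal to $x$, we have $\beta_k \le k$. Summing the per-step bound gives
\[
    \langle x(T), M x(T)\rangle \;\le\; O(\Tr B)\sum_{k=C}^{n} \frac{\beta_k - \beta_{k-1}}{k}\mper
\]
Summation by parts together with $\beta_k \le k$ yields $\sum_{k} \frac{\beta_k-\beta_{k-1}}{k} = O(\log n)$, which is \emph{too weak} — it would give ellipsoid discrepancy $O(\sqrt{\Tr B\,\log n})$, losing a logarithmic factor. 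The fix, and the main obstacle, is precisely the amortization hinted at in the appendix: rather than bounding $\langle\delta,M\delta\rangle$ by $\Tr M/(k/4)$ uniformly, one should charge each unit of $\ell_2$-mass spent in direction $\delta$ against the eigenvalues of $M$ it actually excites. Formally, I would choose the update, among all admissible $\delta$ in the codimension-$k/4$ subspace, to be the \emph{bottom} eigenvector of $P_F M P_F$, so that the total second-order increase telescopes against $\sum (\text{small eigenvalues of }M)$ rather than $\Tr M$ repeatedly. Equivalently: maintain the potential $\Phi(x) = \langle x, Mx\rangle + c\cdot(\text{number of frozen coordinates})\cdot(\text{average leftover eigenvalue})$ and show it is non-increasing up to $O(\Tr B)$ total; concretely, the $k$ eigenvalues killed off as coordinates freeze are disjoint contributions summing to at most $\Tr M \le \Tr B$, so the $\varepsilon(t)\le L$ steps contribute $O(\Tr B)$ in aggregate, while the long steps contribute $\sum_k L^2 \cdot \lambda_{\min}(P_{F_k} M P_{F_k})$ where the eigenvalues are now an \emph{increasing} rearrangement, forcing $\lambda_{\min}(P_{F_k} M P_{F_k}) \lesssim (\text{sum of bottom eigenvalues of }M \text{ up to index } k)/k$ and hence a convergent, rather than logarithmically divergent, sum. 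This amortized step — making rigorous that freezing a coordinate ``removes'' a distinct eigenvalue's worth of future cost — is the crux; once it is in place, one concludes $\langle x(T), M x(T)\rangle = O(\Tr B)$, i.e. $\|Qx(T)\|_B = O(\sqrt{\Tr B})$, and the final rounding step on line 9 of \pref{alg:iterative} changes this by only $O(1)$ by \pref{obs:disc}, completing the proof.
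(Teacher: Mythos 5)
Your high-level plan (pick updates in a large subspace where the second-order form is small, kill the first-order term by a sign choice, and amortize the quadratic cost against the spectrum of $B$ so the total is $O(\Tr B)$ rather than $O(\Tr B\log n)$) is the right one, but both quantitative steps that would carry it are broken. First, the trace bound you lean on is false: with $M\defeq Q^TBQ$ one has $\Tr M=\Tr(B\,QQ^T)$, and column norms at most $1$ bound $\Tr(QQ^T)$ by $n$, not $\|QQ^T\|_{\text{op}}$ by $1$; taking $B=e_1e_1^T$ and every column of $Q$ equal to $e_1$ gives $\Tr M=n$ while $\Tr B=1$. What is actually true is $\Tr(P_FMP_F)=\sum_{j\in F}\|Q^j\|_B^2\le k\|B\|_{\text{op}}$, so your averaging step only yields $\langle\delta,M\delta\rangle=O(\|B\|_{\text{op}})\|\delta\|_2^2$, and the walk then only gives $O(n\|B\|_{\text{op}})$ — not $O(\Tr B)$, even before the logarithmic issue you correctly flag. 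Second, the amortization you sketch to remove the log is not justified: for a $k\times k$ principal submatrix, Cauchy interlacing gives only $\lambda_{\min}(P_FMP_F)\le\lambda_k^{\downarrow}(M)$ (the $k$-th \emph{largest} eigenvalue of $M$), and $\lambda_{\min}$ of a principal submatrix is in general not bounded by the average of the bottom $k$ eigenvalues of $M$ (already $M=\diag(1,0)$, $F=\{1\}$ fails), nor does the set of active coordinates $F(t)$ produced by the walk come with any guarantee that would restore such a bound. So the "disjoint eigenvalue charges summing to $\Tr M\le\Tr B$" picture, which is the crux, is missing. (A smaller point: the final rounding of at most $C$ coordinates changes the squared ellipsoid discrepancy by $O(\Tr B)$, not by $O(1)$ via \pref{obs:disc}, which concerns entrywise-bounded matrices.)

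The paper's proof avoids both problems by decoupling the weights of $B$ from the Gram structure of the columns, rather than working with eigenspaces of $Q^TBQ$. Diagonalize $B=\diag(D_1\ge\cdots\ge D_n)$ (rotating $Q$ preserves column norms). When $k$ coordinates are active, it restricts to $\delta\in\R^{F}$ orthogonal to the rows $Q_1,\ldots,Q_{\lfloor k/2\rfloor-1}$ attached to the largest weights, so that $\sum_i D_i\langle Q_i,\delta\rangle^2\le D_{\lfloor k/2\rfloor}\,\delta^T\bigl(\sum_i Q_iQ_i^T\bigr)\delta$; the \emph{unweighted} Gram matrix restricted to $F$ has trace $\sum_{j\in F}\|Q^j\|_2^2\le k$, so averaging over the remaining $\approx k/2$ dimensions produces a $2$-dimensional subspace with per-step increase $O(D_{\lfloor k/2\rfloor}\|\delta\|_2^2)$, and a sign choice removes the first-order term. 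Summation by parts against the monotone sequence $(D_i)$, together with $\beta_k\le k$, then gives a total of $O(\Tr B)$ — this is the rigorous form of the amortization you were reaching for: the rate when $k$ coordinates remain is the $\lfloor k/2\rfloor$-th largest eigenvalue of $B$ itself, and these charges telescope into $\Tr B$. To repair your write-up you would need to replace "bottom eigenspace of $P_FMP_F$" by this two-step construction (orthogonality to the heavy rows, then averaging the unweighted column Gram matrix), at which point the $\Tr M\le\Tr B$ claim is no longer needed.
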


Our proof of \pref{thm:ellipsoid} uses the iterative machinery described in \pref{sec:generic}. Since the $\|\cdot\|_B$-norm squared is already a smooth degree-2 polynomial function, we will not need any regularization. However, we stress that repeating our analysis for Spencer's theorem from \pref{sec:spencer} would only give an ellipsoid discrepancy bound of $\sqrt{\Tr B\log n}$.

\begin{proof}
    Up to rotating $Q$ (without affecting the norm of its columns), we assume without loss of generality that $B$ is diagonal, with diagonal elements $D_1\ge \ldots \ge D_n\ge 0$.

	We make use of the meta-iterative algorithm \pref{alg:iterative}. Set $L\defeq 1/2$ (since there is no regularization involved, any constant would work here). We show how to construct $\oracle(A,x(t))$ with $F(t)\defeq \{j\in [n]:x_j(t)\notin \{-1,1\}\}$ and $k=k(t)\defeq |F(t)|$. For any $\delta\in\R^n$,
	\begin{align}
	    \label{eq:ellipinc}
		\lVert Q(x(t)+\delta)\rVert_D^2-\lVert Qx(t)\rVert_D^2 = 2\sum_{i=1}^n D_{i} \langle Q_i, x(t)\rangle\langle Q_i,\delta\rangle + \sum_{i=1}^n D_i \langle Q_i,\delta\rangle^2\mper
	\end{align}
	Let $S$ be the subspace of $\R^{F(t)}$ of all the vectors that are orthogonal to $Q_1, \ldots, Q_{\lfloor k/2\rfloor-1}$ (restricted to $\R^{F(t)}$). Now we use that if $\delta\in S$,
	\[
	    \sum_{i=1}^n D_i \langle Q_i,\delta\rangle^2\le D_{\lfloor k/2\rfloor} \left\langle \sum_{i=1}^n Q_i Q_i^\top,\delta\delta^\top\right\rangle\mper
	\]
	Restricted to the coordinates in $F(t)$, the matrix $\sum_{i\in [n]} Q_i Q_i^\top$ has trace $\sum_{i\in [n], j\in F(t)} Q_{ij}^2\le k$ by assumption on the norm of the columns of $Q$. Therefore, by averaging, we can find a 2-dimensional subspace of $S$ for which the previous quadratic form is $O\left(\|\delta\|_2^2\right)$. Finally, we return the intersection of this subspace with the halfspace that makes the first-order term non-positive in \pref{eq:ellipinc}.

	Now we study the total ellipsoid discrepancy incurred over time steps $t=0,\ldots T$. Let $\beta_{k}$ be the $\ell_2$-squared norm injected into $x(t)$ between $t_k\defeq \min\{t\ge 0:|F(t)|\le k\}$ and $T\defeq \min\{t\ge 0:|F(t)|\le 3\}$. In particular, we have $\beta_{k}\le k$. We sum by parts the second-order increase of $\pref{eq:ellipinc}$ over the execution of the algorithm,
	\begin{align*}
		\sum_{i=2}^{\lfloor n/2\rfloor} D_{i}(\beta_{2i+1}-\beta_{2i-1})
		&= \sum_{i=3}^{\lfloor n/2\rfloor} \beta_{2i-1} (D_{i-1}-D_i)+D_{\lfloor n/2\rfloor} \beta_{n}\\
		&\le \sum_{i=3}^{\lfloor n/2\rfloor} (2i-1) (D_{i-1}-D_i)+2nD_{\lfloor n/2\rfloor}\\
		&\lesssim \Tr D\mper
	\end{align*}
	Since the first-order term of the increase is always non-positive and the final step of \pref{alg:iterative} only changes the (squared) ellipsoid discrepancy by at most $O(\Tr D)$, the final coloring $x^*$ satisfies $\| Ax^*\|_D^2\lesssim \Tr D$.
\end{proof}

\end{document}